\newtheorem{theo}{Theorem}
\newtheorem{thm}[theo]{Theorem}
\newtheorem*{theorem*}{}
\newtheorem{lemma}[theo]{Lemma}
\newtheorem{defn}[theo]{Definition}
\newtheorem{remk}[theo]{Remark}
\newcommand{\id}{\mathbb{I}}
\newcommand{\cH}{\mathcal{H}}
\newcommand{\Tr}{\mathrm{tr}}
\newcommand{\aQ}{\widetilde{\mathcal{Q}}}
\tikzset{->-/.style={decoration={
  markings,
  mark=at position .5 with {\arrow{>}}},postaction={decorate}}}
\tikzset{-<-/.style={decoration={
  markings,
  mark=at position .5 with {\arrow{<}}},postaction={decorate}}}
\tikzstyle{bwSpider}=[
 \tikzstyle{wbSpider}=[
\tikzstyle{epiCopoint}=[regular polygon,regular polygon sides=3,draw,scale=0.75,inner sep=-0.5pt,minimum width=5mm,fill=white,regular polygon rotate=0,line width=1pt]
\tikzstyle{epiPoint}=[regular polygon,regular polygon sides=3,draw,scale=0.75,inner sep=-0.5pt,minimum width=5mm,fill=white,regular polygon rotate=180,line width=1pt]
\tikzstyle{epiPointWide}=[regular polygon,regular polygon sides=3,draw,scale=0.75,inner sep=-0.5pt,minimum width=8mm,fill=white,regular polygon rotate=180,line width=1pt]
\tikzstyle{epiBox}=[fill=white,draw, line width = 1pt,inner sep=0.6mm,font=\footnotesize,minimum height=3mm,minimum width=3mm]
\tikzstyle{epiBoxWide}=[fill=white,draw, line width = 1pt,inner sep=0.6mm,font=\footnotesize,minimum height=3mm,minimum width=5mm]
\tikzstyle{epiBoxVeryWide}=[fill=white,draw, line width = 1pt,inner sep=0.6mm,font=\footnotesize,minimum height=3mm,minimum width=7mm]
\tikzstyle{qWire}=[line width = 1pt, color=black]
\tikzstyle{cWire}=[color=gray,line width = .75pt]
\tikzstyle{CqWire}=[color=gray,line width = .75pt,->-]
\tikzstyle{CcWire}=[color=gray,line width = .75pt,->-]
\tikzstyle{RqWire}=[line width = 1pt, color=black,-<-]
\tikzstyle{RcWire}=[color=gray,line width = .75pt,-<-]
\tikzstyle{env}=[copoint,regular polygon rotate=0,minimum width=0.2cm, fill=black]
\tikzstyle{probs}=[shape=semicircle,fill=white,draw=black,shape border rotate=180,minimum width=1.2cm]
\tikzstyle{every picture}=[baseline=-0.25em,scale=0.5]
\tikzstyle{dotpic}=[] 
\tikzstyle{diredges}=[every to/.style={diredge}]
\tikzstyle{math matrix}=[matrix of math nodes,left delimiter=(,right delimiter=),inner sep=2pt,column sep=1em,row sep=0.5em,nodes={inner sep=0pt},text height=1.5ex, text depth=0.25ex]
\tikzstyle{inline text}=[text height=1.5ex, text depth=0.25ex,yshift=0.5mm]
\tikzstyle{label}=[font=\footnotesize,text height=1.5ex, text depth=0.25ex,yshift=0.5mm]
\tikzstyle{left label}=[label,anchor=east,xshift=1.5mm]
\tikzstyle{right label}=[label,anchor=west,xshift=-1mm]
\tikzstyle{up label}=[label,anchor=south,yshift=-1mm]
\tikzstyle{braceedge}=[decorate,decoration={brace,amplitude=2mm,raise=-1mm}]
\tikzstyle{small braceedge}=[decorate,decoration={brace,amplitude=1mm,raise=-1mm}]
\tikzstyle{doubled}=[line width=1.6pt] 
\tikzstyle{boldedge}=[doubled,shorten <=-0.17mm,shorten >=-0.17mm]
\tikzstyle{boldedgegray}=[doubled,gray,shorten <=-0.17mm,shorten >=-0.17mm]
\tikzstyle{singleedgegray}=[gray]
\tikzstyle{semidoubled}=[line width=1.4pt] 
\tikzstyle{semiboldedgegray}=[semidoubled,gray,shorten <=-0.17mm,shorten >=-0.17mm]
\tikzstyle{boxedge}=[semiboldedgegray]
\tikzstyle{boldedgedashed}=[very thick,dashed,shorten <=-0.17mm,shorten >=-0.17mm]
\tikzstyle{vboldedgedashed}=[doubled,dashed,shorten <=-0.17mm,shorten >=-0.17mm]
\tikzstyle{left hook arrow}=[left hook-latex]
\tikzstyle{right hook arrow}=[right hook-latex]
\tikzstyle{sembracket}=[line width=0.5pt,shorten <=-0.07mm,shorten >=-0.07mm]
\tikzstyle{causal edge}=[->,thick,gray]
\tikzstyle{causal nondir}=[thick,gray]
\tikzstyle{timeline}=[thick,gray, dashed]
\tikzstyle{cedge}=[<->,thick,gray!70!white]
\tikzstyle{empty diagram}=[draw=gray!40!white,dashed,shape=rectangle,minimum width=1cm,minimum height=1cm]
\tikzstyle{empty diagram small}=[draw=gray!50!white,dashed,shape=rectangle,minimum width=0.6cm,minimum height=0.5cm]
\tikzstyle{dot}=[inner sep=0mm,minimum width=2mm,minimum height=2mm,draw,shape=circle]
\tikzstyle{bigdot}=[inner sep=0mm,minimum width=5mm,minimum height=5mm,draw,shape=circle]
\tikzstyle{leak}=[white dot, shape=regular polygon, minimum size=3.3 mm, regular polygon sides=3, outer sep=-0.2mm, regular polygon rotate=270]
\tikzstyle{proj}=[regular polygon,regular polygon sides=4,draw,scale=0.75,inner sep=-0.5pt,minimum width=6mm,fill=white]
\tikzstyle{projOut}=[regular polygon,regular polygon sides=3,draw,scale=0.75,inner sep=-0.5pt,minimum width=7.5mm,fill=white,regular polygon rotate=180]
\tikzstyle{projIn}=[regular polygon,regular polygon sides=3,draw,scale=0.75,inner sep=-0.5pt,minimum width=7.5mm,fill=white]
\tikzstyle{Vleak}=[white dot, shape=regular polygon, minimum size=3.3 mm, regular polygon sides=3, outer sep=-0.2mm, regular polygon rotate=90]
\tikzstyle{dleak}=[white dot, line width=1.6pt, shape=regular polygon, minimum size=3.3 mm, regular polygon sides=3, outer sep=-0.2mm, regular polygon rotate=270]
\tikzstyle{Wsquare}=[white dot, shape=regular polygon, rounded corners=0.8 mm, minimum size=3.3 mm, regular polygon sides=3, outer sep=-0.2mm]
\tikzstyle{Wsquareadj}=[white dot, shape=regular polygon, rounded corners=0.8 mm, minimum size=3.3 mm, regular polygon sides=3, outer sep=-0.2mm, regular polygon rotate=180]
\tikzstyle{ddot}=[inner sep=0mm, doubled, minimum width=2.5mm,minimum height=2.5mm,draw,shape=circle]
\tikzstyle{clear dot}=[dot,fill=none,text depth=-0.2mm,draw=gray, line width = .75pt]
\tikzstyle{tall clear dot}=[dot,fill=none,text depth=-0.2mm,draw=gray, line width = .75pt,shape=ellipse, minimum height=5mm]
\tikzstyle{wide clear dot}=[dot,fill=none,text depth=-0.2mm,draw=gray, line width = .75pt, shape=ellipse, minimum width = 5mm]
\tikzstyle{very wide clear dot}=[dot,fill=none,text depth=-0.2mm,draw=gray, line width = .75pt, shape=ellipse, minimum width = 7mm ]
\tikzstyle{black dot}=[dot,fill=black]
\tikzstyle{white dot}=[dot,fill=white,,text depth=-0.2mm]
\tikzstyle{white Wsquare}=[Wsquare,fill=gray,,text depth=-0.2mm]
\tikzstyle{white Wsquareadj}=[Wsquareadj,fill=white,,text depth=-0.2mm]
\tikzstyle{green dot}=[white dot] 
\tikzstyle{gray dot}=[dot,fill=gray!40!white,,text depth=-0.2mm]
\tikzstyle{red dot}=[gray dot] 
\tikzstyle{black ddot}=[ddot,fill=black]
\tikzstyle{white ddot}=[ddot,fill=white]
\tikzstyle{gray ddot}=[ddot,fill=gray!40!white]
\tikzstyle{gray edge}=[gray!60!white]
\tikzstyle{small dot}=[inner sep=0.2mm,minimum width=0pt,minimum height=0pt,draw,shape=circle]
\tikzstyle{small black dot}=[small dot,fill=black]
\tikzstyle{small white dot}=[small dot,fill=white]
\tikzstyle{small gray dot}=[small dot,fill=gray,draw=gray]
\tikzstyle{causal dot}=[inner sep=0.4mm,minimum width=0pt,minimum height=0pt,draw=white,shape=circle,fill=gray!40!white]
\tikzstyle{phase dimensions}=[minimum size=5mm,font=\footnotesize,rectangle,rounded corners=2.5mm,inner sep=0.2mm,outer sep=-2mm]
\tikzstyle{dphase dimensions}=[minimum size=5mm,font=\footnotesize,rectangle,rounded corners=2.5mm,inner sep=0.2mm,outer sep=-2mm]
\tikzstyle{white phase dot}=[dot,fill=white,phase dimensions]
\tikzstyle{white phase ddot}=[ddot,fill=white,dphase dimensions]
\tikzstyle{white rect ddot}=[draw=black,fill=white,doubled,minimum size=5mm,font=\footnotesize,rectangle,rounded corners=2.5mm,inner sep=0.2mm]
\tikzstyle{gray rect ddot}=[draw=black,fill=gray!40!white,doubled,minimum size=6mm,font=\footnotesize,rectangle,rounded corners=3mm]
\tikzstyle{gray phase dot}=[dot,fill=gray!40!white,phase dimensions]
\tikzstyle{gray phase ddot}=[ddot,fill=gray!40!white,dphase dimensions]
\tikzstyle{grey phase dot}=[gray phase dot]
\tikzstyle{grey phase ddot}=[gray phase ddot]
\tikzstyle{small phase dimensions}=[minimum size=4mm,font=\tiny,rectangle,rounded corners=2mm,inner sep=0.2mm,outer sep=-2mm]
\tikzstyle{small dphase dimensions}=[minimum size=4mm,font=\tiny,rectangle,rounded corners=2mm,inner sep=0.2mm,outer sep=-2mm]
\tikzstyle{small gray phase dot}=[dot,fill=gray!40!white,small phase dimensions]
\tikzstyle{small gray phase ddot}=[ddot,fill=gray!40!white,small dphase dimensions]
\tikzstyle{small map}=[draw,shape=rectangle,minimum height=4mm,minimum width=4mm,fill=white]
\tikzstyle{cnot}=[fill=white,shape=circle,inner sep=-1.4pt]
\tikzstyle{asym hadamard}=[fill=white,draw,shape=NEbox,inner sep=0.6mm,font=\footnotesize,minimum height=4mm]
\tikzstyle{asym hadamard conj}=[fill=white,draw,shape=NWbox,inner sep=0.6mm,font=\footnotesize,minimum height=4mm]
\tikzstyle{asym hadamard dag}=[fill=white,draw,shape=SEbox,inner sep=0.6mm,font=\footnotesize,minimum height=4mm]
\tikzstyle{hadamard}=[fill=white,draw,inner sep=0.6mm,font=\footnotesize,minimum height=4mm,minimum width=4mm]
\tikzstyle{small hadamard}=[fill=white,draw,inner sep=0.6mm,minimum height=1.5mm,minimum width=1.5mm]
\tikzstyle{small hadamard rotate}=[small hadamard,rotate=45]
\tikzstyle{dhadamard}=[hadamard,doubled]
\tikzstyle{small dhadamard}=[small hadamard,doubled]
\tikzstyle{small dhadamard rotate}=[small hadamard rotate,doubled]
\tikzstyle{antipode}=[white dot,inner sep=0.3mm,font=\footnotesize]
\tikzstyle{scalar}=[diamond,draw,inner sep=0.5pt,font=\small]
\tikzstyle{dscalar}=[diamond,doubled, draw,inner sep=0.5pt,font=\small]
\tikzstyle{small box}=[rectangle,inline text,fill=white,draw,minimum height=5mm,yshift=-0.5mm,minimum width=5mm,font=\small]
\tikzstyle{small gray box}=[small box,fill=gray!30]
\tikzstyle{medium box}=[rectangle,inline text,fill=white,draw,minimum height=5mm,yshift=-0.5mm,minimum width=10mm,font=\small]
\tikzstyle{square box}=[small box] 
\tikzstyle{medium gray box}=[small box,fill=gray!30]
\tikzstyle{semilarge box}=[rectangle,inline text,fill=white,draw,minimum height=5mm,yshift=-0.5mm,minimum width=12.5mm,font=\small]
\tikzstyle{large box}=[rectangle,inline text,fill=white,draw,minimum height=5mm,yshift=-0.5mm,minimum width=15mm,font=\small]
\tikzstyle{large gray box}=[small box,fill=gray!30]
\tikzstyle{Bayes box}=[rectangle,fill=black,draw, minimum height=3mm, minimum width=3mm]
\tikzstyle{gray square point}=[small box,fill=gray!50]
\tikzstyle{dphase box white}=[dhadamard]
\tikzstyle{dphase box gray}=[dhadamard,fill=gray!50!white]
\tikzstyle{phase box white}=[hadamard]
\tikzstyle{phase box gray}=[hadamard,fill=gray!50!white]
\tikzstyle{point}=[regular polygon,regular polygon sides=3,draw,scale=0.75,inner sep=-0.5pt,minimum width=9mm,fill=white,regular polygon rotate=180]
\tikzstyle{infpoint}=[regular polygon,regular polygon sides=3,draw,scale=0.75,inner sep=-0.5pt,minimum width=9mm,fill=white,regular polygon rotate=90]
\tikzstyle{point nosep}=[regular polygon,regular polygon sides=3,draw,scale=0.75,inner sep=-2pt,minimum width=9mm,fill=white,regular polygon rotate=180]
\tikzstyle{infcopoint}=[regular polygon,regular polygon sides=3,draw,scale=0.75,inner sep=-0.5pt,minimum width=9mm,fill=white,regular polygon rotate=270]
\tikzstyle{copoint}=[regular polygon,regular polygon sides=3,draw,scale=0.75,inner sep=-0.5pt,minimum width=9mm,fill=white]
\tikzstyle{dpoint}=[point,doubled]
\tikzstyle{dcopoint}=[copoint,doubled]
\tikzstyle{pointgrow}=[shape=cornerpoint,kpoint common,scale=0.75,inner sep=3pt]
\tikzstyle{pointgrow dag}=[shape=cornercopoint,kpoint common,scale=0.75,inner sep=3pt]
\tikzstyle{wide copoint}=[fill=white,draw,shape=isosceles triangle,shape border rotate=90,isosceles triangle stretches=true,inner sep=0pt,minimum width=1.5cm,minimum height=6.12mm]
\tikzstyle{wide point}=[fill=white,draw,shape=isosceles triangle,shape border rotate=-90,isosceles triangle stretches=true,inner sep=0pt,minimum width=1.5cm,minimum height=6.12mm,yshift=-0.0mm]
\tikzstyle{wide point plus}=[fill=white,draw,shape=isosceles triangle,shape border rotate=-90,isosceles triangle stretches=true,inner sep=0pt,minimum width=1.74cm,minimum height=7mm,yshift=-0.0mm]
\tikzstyle{wide dpoint}=[fill=white,doubled,draw,shape=isosceles triangle,shape border rotate=-90,isosceles triangle stretches=true,inner sep=0pt,minimum width=1.5cm,minimum height=6.12mm,yshift=-0.0mm]
\tikzstyle{tinypoint}=[regular polygon,regular polygon sides=3,draw,scale=0.55,inner sep=-0.15pt,minimum width=6mm,fill=white,regular polygon rotate=180]
\tikzstyle{white point}=[point]
\tikzstyle{white dpoint}=[dpoint]
\tikzstyle{green point}=[white point] 
\tikzstyle{white copoint}=[copoint]
\tikzstyle{gray point}=[point,fill=gray!40!white]
\tikzstyle{gray dpoint}=[gray point,doubled]
\tikzstyle{red point}=[gray point] 
\tikzstyle{gray copoint}=[copoint,fill=gray!40!white]
\tikzstyle{gray dcopoint}=[gray copoint,doubled]
\tikzstyle{white point guide}=[regular polygon,regular polygon sides=3,font=\scriptsize,draw,scale=0.65,inner sep=-0.5pt,minimum width=9mm,fill=white,regular polygon rotate=180]
\tikzstyle{black point}=[point,fill=black,font=\color{white}]
\tikzstyle{black copoint}=[copoint,fill=black,font=\color{white}]
\tikzstyle{tiny gray point}=[tinypoint,fill=gray!40!white]
\tikzstyle{diredge}=[->]
\tikzstyle{ddiredge}=[<->]
\tikzstyle{rdiredge}=[<-]
\tikzstyle{thickdiredge}=[->, very thick]
\tikzstyle{pointer edge}=[->,very thick,gray]
\tikzstyle{pointer edge part}=[very thick,gray]
\tikzstyle{dashed edge}=[dashed]
\tikzstyle{thick dashed edge}=[very thick,dashed]
\tikzstyle{thick gray dashed edge}=[thick dashed edge,gray!40]
\tikzstyle{thick map edge}=[very thick,|->]
\newcommand{\boxshape}[3]{%
\pgfdeclareshape{#1}{
\inheritsavedanchors[from=rectangle] 
\inheritanchorborder[from=rectangle]
\inheritanchor[from=rectangle]{center}
\inheritanchor[from=rectangle]{north}
\inheritanchor[from=rectangle]{south}
\inheritanchor[from=rectangle]{west}
\inheritanchor[from=rectangle]{east}
\backgroundpath{
\southwest \pgf@xa=\pgf@x \pgf@ya=\pgf@y
\northeast \pgf@xb=\pgf@x \pgf@yb=\pgf@y

\@tempdima=#2
\@tempdimb=#3

\pgfpathmoveto{\pgfpoint{\pgf@xa - 5pt + \@tempdima}{\pgf@ya}}
\pgfpathlineto{\pgfpoint{\pgf@xa - 5pt - \@tempdima}{\pgf@yb}}
\pgfpathlineto{\pgfpoint{\pgf@xb + 5pt + \@tempdimb}{\pgf@yb}}
\pgfpathlineto{\pgfpoint{\pgf@xb + 5pt - \@tempdimb}{\pgf@ya}}
\pgfpathlineto{\pgfpoint{\pgf@xa - 5pt + \@tempdima}{\pgf@ya}}
\pgfpathclose
}
}}
\tikzstyle{cloud}=[shape=cloud,draw,minimum width=1.5cm,minimum height=1.5cm]
\tikzstyle{map}=[draw,shape=NEbox,inner sep=1pt,minimum height=4mm,fill=white]
\tikzstyle{dashedmap}=[draw,dashed,shape=NEbox,inner sep=2pt,minimum height=6mm,fill=white]
\tikzstyle{mapdag}=[draw,shape=SEbox,inner sep=1pt,minimum height=4mm,fill=white]
\tikzstyle{mapadj}=[draw,shape=SEbox,inner sep=2pt,minimum height=6mm,fill=white]
\tikzstyle{maptrans}=[draw,shape=SWbox,inner sep=2pt,minimum height=6mm,fill=white]
\tikzstyle{mapconj}=[draw,shape=NWbox,inner sep=2pt,minimum height=6mm,fill=white]
\tikzstyle{medium map}=[draw,shape=NEbox,inner sep=2pt,minimum height=6mm,fill=white,minimum width=7mm]
\tikzstyle{medium map dag}=[draw,shape=SEbox,inner sep=2pt,minimum height=6mm,fill=white,minimum width=7mm]
\tikzstyle{medium map adj}=[draw,shape=SEbox,inner sep=2pt,minimum height=6mm,fill=white,minimum width=7mm]
\tikzstyle{medium map trans}=[draw,shape=SWbox,inner sep=2pt,minimum height=6mm,fill=white,minimum width=7mm]
\tikzstyle{medium map conj}=[draw,shape=NWbox,inner sep=2pt,minimum height=6mm,fill=white,minimum width=7mm]
\tikzstyle{semilarge map}=[draw,shape=NEbox,inner sep=2pt,minimum height=6mm,fill=white,minimum width=9.5mm]
\tikzstyle{semilarge map trans}=[draw,shape=SWbox,inner sep=2pt,minimum height=6mm,fill=white,minimum width=9.5mm]
\tikzstyle{semilarge map adj}=[draw,shape=SEbox,inner sep=2pt,minimum height=6mm,fill=white,minimum width=9.5mm]
\tikzstyle{semilarge map dag}=[draw,shape=SEbox,inner sep=2pt,minimum height=6mm,fill=white,minimum width=9.5mm]
\tikzstyle{semilarge map conj}=[draw,shape=NWbox,inner sep=2pt,minimum height=6mm,fill=white,minimum width=9.5mm]
\tikzstyle{large map}=[draw,shape=NEbox,inner sep=2pt,minimum height=6mm,fill=white,minimum width=12mm]
\tikzstyle{large map conj}=[draw,shape=NWbox,inner sep=2pt,minimum height=6mm,fill=white,minimum width=12mm]
\tikzstyle{very large map}=[draw,shape=NEbox,inner sep=2pt,minimum height=6mm,fill=white,minimum width=17mm]
\tikzstyle{medium dmap}=[draw,doubled,shape=NEbox,inner sep=2pt,minimum height=6mm,fill=white,minimum width=7mm]
\tikzstyle{medium dmap dag}=[draw,doubled,shape=SEbox,inner sep=2pt,minimum height=6mm,fill=white,minimum width=7mm]
\tikzstyle{medium dmap adj}=[draw,doubled,shape=SEbox,inner sep=2pt,minimum height=6mm,fill=white,minimum width=7mm]
\tikzstyle{medium dmap trans}=[draw,doubled,shape=SWbox,inner sep=2pt,minimum height=6mm,fill=white,minimum width=7mm]
\tikzstyle{medium dmap conj}=[draw,doubled,shape=NWbox,inner sep=2pt,minimum height=6mm,fill=white,minimum width=7mm]
\tikzstyle{semilarge dmap}=[draw,doubled,shape=NEbox,inner sep=2pt,minimum height=6mm,fill=white,minimum width=9.5mm]
\tikzstyle{semilarge dmap trans}=[draw,doubled,shape=SWbox,inner sep=2pt,minimum height=6mm,fill=white,minimum width=9.5mm]
\tikzstyle{semilarge dmap adj}=[draw,doubled,shape=SEbox,inner sep=2pt,minimum height=6mm,fill=white,minimum width=9.5mm]
\tikzstyle{semilarge dmap dag}=[draw,doubled,shape=SEbox,inner sep=2pt,minimum height=6mm,fill=white,minimum width=9.5mm]
\tikzstyle{semilarge dmap conj}=[draw,doubled,shape=NWbox,inner sep=2pt,minimum height=6mm,fill=white,minimum width=9.5mm]
\tikzstyle{large dmap}=[draw,doubled,shape=NEbox,inner sep=2pt,minimum height=6mm,fill=white,minimum width=12mm]
\tikzstyle{large dmap conj}=[draw,doubled,shape=NWbox,inner sep=2pt,minimum height=6mm,fill=white,minimum width=12mm]
\tikzstyle{large dmap trans}=[draw,doubled,shape=SWbox,inner sep=2pt,minimum height=6mm,fill=white,minimum width=12mm]
\tikzstyle{large dmap adj}=[draw,doubled,shape=SEbox,inner sep=2pt,minimum height=6mm,fill=white,minimum width=12mm]
\tikzstyle{large dmap dag}=[draw,doubled,shape=SEbox,inner sep=2pt,minimum height=6mm,fill=white,minimum width=12mm]
\tikzstyle{very large dmap}=[draw,doubled,shape=NEbox,inner sep=2pt,minimum height=6mm,fill=white,minimum width=19.5mm]
\tikzstyle{muxbox}=[draw,shape=rectangle,minimum height=3mm,minimum width=3mm,fill=white]
\tikzstyle{dmuxbox}=[muxbox,doubled]
\tikzstyle{box}=[draw,shape=rectangle,inner sep=2pt,minimum height=6mm,minimum width=6mm,fill=white]
\tikzstyle{dbox}=[draw,doubled,shape=rectangle,inner sep=2pt,minimum height=6mm,minimum width=6mm,fill=white]
\tikzstyle{dmap}=[draw,doubled,shape=NEbox,inner sep=2pt,minimum height=6mm,fill=white]
\tikzstyle{dmapdag}=[draw,doubled,shape=SEbox,inner sep=2pt,minimum height=6mm,fill=white]
\tikzstyle{dmapadj}=[draw,doubled,shape=SEbox,inner sep=2pt,minimum height=6mm,fill=white]
\tikzstyle{dmaptrans}=[draw,doubled,shape=SWbox,inner sep=2pt,minimum height=6mm,fill=white]
\tikzstyle{dmapconj}=[draw,doubled,shape=NWbox,inner sep=2pt,minimum height=6mm,fill=white]
\tikzstyle{ddmap}=[draw,doubled,dashed,shape=NEbox,inner sep=2pt,minimum height=6mm,fill=white]
\tikzstyle{ddmapdag}=[draw,doubled,dashed,shape=SEbox,inner sep=2pt,minimum height=6mm,fill=white]
\tikzstyle{ddmapadj}=[draw,doubled,dashed,shape=SEbox,inner sep=2pt,minimum height=6mm,fill=white]
\tikzstyle{ddmaptrans}=[draw,doubled,dashed,shape=SWbox,inner sep=2pt,minimum height=6mm,fill=white]
\tikzstyle{ddmapconj}=[draw,doubled,dashed,shape=NWbox,inner sep=2pt,minimum height=6mm,fill=white]
\tikzstyle{smap}=[draw,shape=sNEbox,fill=white]
\tikzstyle{smapdag}=[draw,shape=sSEbox,fill=white]
\tikzstyle{smapadj}=[draw,shape=sSEbox,fill=white]
\tikzstyle{smaptrans}=[draw,shape=sSWbox,fill=white]
\tikzstyle{smapconj}=[draw,shape=sNWbox,fill=white]
\tikzstyle{dsmap}=[draw,dashed,shape=sNEbox,fill=white]
\tikzstyle{dsmapdag}=[draw,dashed,shape=sSEbox,fill=white]
\tikzstyle{dsmaptrans}=[draw,dashed,shape=sSWbox,fill=white]
\tikzstyle{dsmapconj}=[draw,dashed,shape=sNWbox,fill=white]
\tikzstyle{mmap}=[draw,shape=mNEbox]
\tikzstyle{mmapdag}=[draw,shape=mSEbox]
\tikzstyle{mmaptrans}=[draw,shape=mSWbox]
\tikzstyle{mmapconj}=[draw,shape=mNWbox]
\tikzstyle{mmapgray}=[draw,fill=gray!40!white,shape=mNEbox]
\tikzstyle{smapgray}=[draw,fill=gray!40!white,shape=sNEbox]
\pgfmathsetmacro{\pgf@shorten@left}{\pgfkeysvalueof{/tikz/shorten left}}
\pgfmathsetmacro{\pgf@shorten@right}{\pgfkeysvalueof{/tikz/shorten right}}
\pgfmathsetmacro{\pgf@shorten@left}{\pgfkeysvalueof{/tikz/shorten left}}
\pgfmathsetmacro{\pgf@shorten@right}{\pgfkeysvalueof{/tikz/shorten right}}
\tikzstyle{kpoint common}=[draw,fill=white,inner sep=1pt,minimum height=4mm]
\tikzstyle{kpoint sc}=[shape=cornerpoint,kpoint common]
\tikzstyle{kpoint adjoint sc}=[shape=cornercopoint,kpoint common]
\tikzstyle{kpoint}=[shape=cornerpoint,shorten left=5pt,kpoint common]
\tikzstyle{kpoint adjoint}=[shape=cornercopoint,shorten left=5pt,kpoint common]
\tikzstyle{kpoint conjugate}=[shape=cornerpoint,shorten right=5pt,kpoint common]
\tikzstyle{kpoint transpose}=[shape=cornercopoint,shorten right=5pt,kpoint common]
\tikzstyle{kpoint symm}=[shape=cornerpoint,shorten left=5pt,shorten right=5pt,kpoint common]
\tikzstyle{wide kpoint sc}=[shape=cornerpoint,kpoint common, minimum width=1 cm]
\tikzstyle{wide kpointdag sc}=[shape=cornercopoint,kpoint common, minimum width=1 cm]
\tikzstyle{black kpoint}=[shape=cornerpoint,shorten left=5pt,kpoint common,fill=black,font=\color{white}]
\tikzstyle{black kpoint sm}=[shape=cornerpoint,shorten left=5pt,kpoint common,fill=black,font=\color{white},scale=0.75]
\tikzstyle{black kpoint adjoint}=[shape=cornercopoint,shorten left=5pt,kpoint common,fill=black,font=\color{white}]
\tikzstyle{black kpointadj}=[shape=cornercopoint,shorten left=5pt,kpoint common,fill=black,font=\color{white}]
\tikzstyle{black kpointadj sm}=[shape=cornercopoint,shorten left=5pt,kpoint common,fill=black,font=\color{white},scale=0.75]
\tikzstyle{black dkpoint}=[shape=cornerpoint,shorten left=5pt,kpoint common,fill=black, doubled,font=\color{white}]
\tikzstyle{black dkpoint adjoint}=[shape=cornercopoint,shorten left=5pt,kpoint common,fill=black, doubled,font=\color{white}]
\tikzstyle{black dkpointadj}=[shape=cornercopoint,shorten left=5pt,kpoint common,fill=black, doubled,font=\color{white}]
\tikzstyle{black dkpoint sm}=[shape=cornerpoint,shorten left=5pt,kpoint common,fill=black, doubled,font=\color{white},scale=0.75]
\tikzstyle{black dkpointadj sm}=[shape=cornercopoint,shorten left=5pt,kpoint common,fill=black, doubled,font=\color{white},scale=0.75]
\tikzstyle{kpointdag}=[kpoint adjoint]
\tikzstyle{kpointadj}=[kpoint adjoint]
\tikzstyle{kpointconj}=[kpoint conjugate]
\tikzstyle{kpointtrans}=[kpoint transpose]
\tikzstyle{big kpoint}=[kpoint, minimum width=1.2 cm, minimum height=8mm, inner sep=4pt, text depth=3mm]
\tikzstyle{wide kpoint}=[kpoint, minimum width=1 cm, inner sep=2pt]
\tikzstyle{wide kpointdag}=[kpointdag, minimum width=1 cm, inner sep=2pt]
\tikzstyle{wide kpointconj}=[kpointconj, minimum width=1 cm, inner sep=2pt]
\tikzstyle{wide kpointtrans}=[kpointtrans, minimum width=1 cm, inner sep=2pt]
\tikzstyle{wider kpoint}=[kpoint, minimum width=1.25 cm, inner sep=2pt]
\tikzstyle{wider kpointdag}=[kpointdag, minimum width=1.25 cm, inner sep=2pt]
\tikzstyle{wider kpointconj}=[kpointconj, minimum width=1.25 cm, inner sep=2pt]
\tikzstyle{wider kpointtrans}=[kpointtrans, minimum width=1.25 cm, inner sep=2pt]
\tikzstyle{gray kpoint}=[kpoint,fill=gray!50!white]
\tikzstyle{gray kpointdag}=[kpointdag,fill=gray!50!white]
\tikzstyle{gray kpointadj}=[kpointadj,fill=gray!50!white]
\tikzstyle{gray kpointconj}=[kpointconj,fill=gray!50!white]
\tikzstyle{gray kpointtrans}=[kpointtrans,fill=gray!50!white]
\tikzstyle{gray dkpoint}=[kpoint,fill=gray!50!white,doubled]
\tikzstyle{gray dkpointdag}=[kpointdag,fill=gray!50!white,doubled]
\tikzstyle{gray dkpointadj}=[kpointadj,fill=gray!50!white,doubled]
\tikzstyle{gray dkpointconj}=[kpointconj,fill=gray!50!white,doubled]
\tikzstyle{gray dkpointtrans}=[kpointtrans,fill=gray!50!white,doubled]
\tikzstyle{white label}=[draw,fill=white,rectangle,inner sep=0.7 mm]
\tikzstyle{gray label}=[draw,fill=gray!50!white,rectangle,inner sep=0.7 mm]
\tikzstyle{black label}=[draw,fill=black,rectangle,inner sep=0.7 mm]
\tikzstyle{dkpoint}=[kpoint,doubled]
\tikzstyle{wide dkpoint}=[wide kpoint,doubled]
\tikzstyle{dkpointdag}=[kpoint adjoint,doubled]
\tikzstyle{wide dkpointdag}=[wide kpointdag,doubled]
\tikzstyle{dkcopoint}=[kpoint adjoint,doubled]
\tikzstyle{dkpointadj}=[kpoint adjoint,doubled]
\tikzstyle{dkpointconj}=[kpoint conjugate,doubled]
\tikzstyle{dkpointtrans}=[kpoint transpose,doubled]
\tikzstyle{kscalar}=[kpoint common, shape=EBox, inner xsep=-1pt, inner ysep=3pt,font=\small]
\tikzstyle{kscalarconj}=[kpoint common, shape=WBox, inner xsep=-1pt, inner ysep=3pt,font=\small]
\tikzstyle{spekpoint}=[kpoint sc,minimum height=5mm,inner sep=3pt]
\tikzstyle{spekcopoint}=[kpoint adjoint sc,minimum height=5mm,inner sep=3pt]
\tikzstyle{dspekpoint}=[spekpoint,doubled]
\tikzstyle{dspekcopoint}=[spekcopoint,doubled]
 \tikzstyle{upground}=[circuit ee IEC,thick,ground,rotate=90,scale=2.5]
 \tikzstyle{downground}=[circuit ee IEC,thick,ground,rotate=-90,scale=2.5]
 \tikzstyle{infupground}=[circuit ee IEC,thick,ground,rotate=0,scale=2.5]
 \tikzstyle{infdownground}=[circuit ee IEC,thick,ground,rotate=180,scale=2.5]
 \tikzstyle{bigground}=[regular polygon,regular polygon sides=3,draw=gray,scale=0.50,inner sep=-0.5pt,minimum width=10mm,fill=gray]
\tikzstyle{arrs}=[-latex,font=\small,auto]
\tikzstyle{arrow plain}=[arrs]
\tikzstyle{arrow dashed}=[dashed,arrs]
\tikzstyle{arrow bold}=[very thick,arrs]
\tikzstyle{arrow hide}=[draw=white!0,-]
\tikzstyle{arrow reverse}=[latex-]
\tikzstyle{cdnode}=[]
\let\olddagger\dagger
\renewcommand{\dagger}{\ensuremath{\olddagger}\xspace}
\def\bR{\begin{color}{red}}
\def\bB{\begin{color}{blue}}
\def\bM{\begin{color}{magenta}}
\def\bC{\begin{color}{cyan}}
\def\bW{\begin{color}{white}}
\def\bBl{\begin{color}{black}}
\def\bG{\begin{color}{green}}
\def\bY{\begin{color}{yellow}}
\def\e{\end{color}\xspace}
\newcommand{\bit}{\begin{itemize}}
\newcommand{\eit}{\end{itemize}\par\noindent}
\newcommand{\ben}{\begin{enumerate}}
\newcommand{\een}{\end{enumerate}\par\noindent}
\newcommand{\beq}{\begin{equation}}
\newcommand{\eeq}{\end{equation}\par\noindent}
\newcommand{\beqa}{\begin{eqnarray*}}
\newcommand{\eeqa}{\end{eqnarray*}\par\noindent}
\newcommand{\beqn}{\begin{eqnarray}}
\newcommand{\eeqn}{\end{eqnarray}\par\noindent}
\def\jR{\begin{color}{black}}
\def\jB{\begin{color}{black}}
\def\jM{\begin{color}{magenta}}
\def\jC{\begin{color}{cyan}}
\def\jW{\begin{color}{white}}
\def\jBl{\begin{color}{black}}
\def\jG{\begin{color}{green}}
\def\jY{\begin{color}{yellow}}
\begin{document}
\title{Bipartite post-quantum steering in generalised scenarios}

\author{Ana Bel{\'e}n Sainz}
\affiliation{International Centre for Theory of Quantum Technologies, University of Gda\'nsk, 80-308 Gda\'nsk, Poland}
\affiliation{Perimeter Institute for Theoretical Physics, 31 Caroline St. N, Waterloo, Ontario, Canada, N2L 2Y5}
\author{{Matty J.~Hoban}}
\affiliation{Department of Computing, Goldsmiths, University of London, London SE14 6NW, UK}
\author{{Paul Skrzypczyk}}
\affiliation{H. H. Wills Physics Laboratory, University of Bristol, Tyndall Avenue, Bristol, BS8 1TL, UK}
\author{{Leandro Aolita}}
\affiliation{Instituto  de  F\'isica,  Universidade  Federal  do  Rio  de  Janeiro$\text{,}$ Caixa  Postal  68528,  21941-972  Rio  de  Janeiro,  RJ,  Brazil}

\date{\today}

\begin{abstract}
The study of stronger-than-quantum effects is a fruitful line of research that provides valuable insight into quantum theory. Unfortunately, traditional bipartite steering scenarios can always be explained by quantum theory. Here we show that, by relaxing this traditional setup, bipartite steering incompatible with quantum theory is possible. The two scenarios we describe, which still feature Alice remotely steering Bob's system, are: (i) one where Bob also has an input and operates on his subsystem, and (ii) the `instrumental steering' scenario. We show that such bipartite post-quantum steering is a genuinely new type of post-quantum nonlocality, which does not follow from post-quantum Bell nonlocality. {In addition, we present a method to bound quantum violations of steering inequalities in these scenarios.}
\end{abstract}

\maketitle

\textit{Introduction.--} 
Einstein-Podolsky-Rosen steering is a striking nonlocal feature of quantum theory \cite{sch, wise}. First discussed by Schr\"odinger \cite{sch}, it refers to the phenomenon where Alice, by performing measurements on half of a shared system, remotely `steers' the state of a distant Bob, in a way which has no classical explanation. From a modern quantum information perspective \cite{wise} steering certifies entanglement in situations where Alice's devices are  uncharacterised or untrusted, allowing for ``one-sided device independent'' implementations of information-theoretic tasks, such as quantum key distribution \cite{qkd}, randomness certification \cite{rand1, rand2}, measurement incompatibility certification \cite{meas1, meas2, meas3}, and self-testing \cite{self1, self2}.

{Given the usefulness of quantum steering as a resource for information processing, a comprehensive understanding of this non-classical phenomenon as a resource is highly desirable. A fruitful way to approach this, pursued in the study of other non-classical phenomena, e.g.~Bell nonlocality \cite{Bell} and contextuality \cite{KS}, is to investigate it `from the outside': namely, to study it operationally from the perspective of a more general theory --  which may supersede quantum theory --  and then understand which aspects are purely quantum. Studying phenomena beyond what quantum theory predicts is relevant not only from the hypothetical perspective of a post-quantum theory, but also -- and above all -- because it allows for a deeper understanding of the foundations of quantum theory and the limitations it has for information processing \cite{thenewref}. The main question studied here is how to properly understand steering from this more general perspective, on which we report substantial progress in this paper.}

Abstractly, we may view the steering scenario as one where Alice has a device that accepts a classical input, $x$, usually thought of as labelling the choice of measurement, and produces a classical outcome, $a$, usually thought of as the measurement result, while Bob has a device without an input, that produces a quantum system, which is correlated with the input and outcome of Alice, and usually thought of as the steered system. {Here we are interested in the possibility that the local structure of quantum theory is maintained, while considering more general global structure -- for instance more general types of correlations or global dynamics. }

{In this setting, we would like to re-examine the phenomenon of steering. A natural question that arises is whether a more general theory may allow for steering beyond what quantum theory predicts. That is, could it be possible to find a pair of devices for Alice and Bob which could not be produced within quantum theory, by Alice and Bob sharing a quantum state, upon which Alice performs measurements labelled by $x$ and with outcomes $a$? The only requirement that we maintain in this generalised setting is that of relativistic causality: Alice should not be able to use steering to signal to Bob, i.e., to send information to him instantaneously.  }

A celebrated theorem by Gisin \cite{gisin} and Hughston, Josza and Wootters \cite{HJW} (GHJW) shows that post-quantum steering cannot occur in the traditional setting. Namely, any pair of devices that do not allow signalling from Alice to Bob can always be realised by some carefully chosen set of measurements and quantum state. 
The traditional setting is however not the only interesting scenario where one can see the steering phenomena. 
In Ref.~\cite{pqs}, post-quantum \emph{multi-partite} steering was discovered: in a tri-partite scenario, Alice and Bob are able to jointly steer the state of a third party, Charlie in a way which cannot arise from measurements an any quantum state. 
Subsequently, unified frameworks for studying quantum and post-quantum steering in the multipartite setting have been developed, providing a playground for exploring this fascinating effect \cite{oform, chan}. 

A key question that nevertheless remained unanswered is whether it is possible to have post-quantum steering in a suitable generalised bipartite scenario, or whether post-quantum steering is a purely multipartite phenomenon. In this work we answer this question in the positive. We discover two natural bipartite generalisations of steering that allow for post-quantum effects (see Fig.~\ref{fig:steebip} cases (c) and (d) respectively): one where Bob also has an input that allows him to additionally influence his quantum state, and another where this additional influence is instead conditioned on Alice's outcome. This second generalisation corresponds to a specific type of setup known as the `instrumental causal network', that is ubiquitous in causal inference \cite{SGS01,Pearl09}. 
Furthermore, we show, crucially, that in both cases the post-quantum steering uncovered genuinely constitute new effects, that are distinct from post-quantum nonlocality in the associated generalised setups. We do this by finding explicit examples of post-quantum steering where if Bob {performs measurements on his quantum system, then the resulting outcome statistics} are never post-quantum nonlocal. {To prove our results, we develop a numerical technique to bound the quantum violations of steering inequalities.}

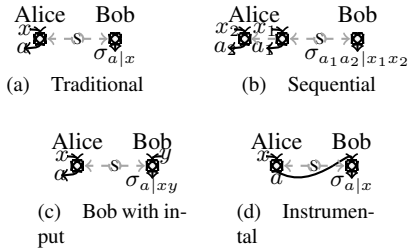
\begin{figure}
\begin{center}
\subfigure[\quad Traditional]{
		\begin{tikzpicture}[scale=0.5]
		\node at (-2,1.3) {Alice};
		\shade[draw, thick, ,rounded corners, inner color=white,outer color=gray!50!white] (-1.7,-0.3) rectangle (-2.3,0.3) ;
		\draw[thick, ->] (-2.5,0.5) to [out=180, in=90] (-2,0.3);
		\node at (-2.8,0.5) {$x$};
		\draw[thick, -<] (-2,-0.3) to [out=-90, in=180] (-2.5,-0.5);
		\node at (-2.9,-0.5) {$a$};

		\node at (2,1.3) {Bob};
		\shade[draw, thick, ,rounded corners, inner color=white,outer color=gray!50!white] (1.7,-0.3) rectangle (2.3,0.3) ;		
		\draw[thick, ->] (2,-0.3) -- (2,-0.7);
		\node at (2,-1) {$\sigma_{a|x}$};		
		
		\node at (0,0) {s};
		\draw[thick, dashed, color=gray!70!white] (0,0) circle (0.3cm);
		\draw[thick, dashed, color=gray!70!white, ->] (-0.3,0) -- (-1.6,0);
		\draw[thick, dashed, color=gray!70!white, ->] (0.3,0) -- (1.6,0);		
	    \end{tikzpicture}	}
	    \hspace{0.5cm}
\subfigure[\quad Sequential]{	    
		\begin{tikzpicture}[scale=0.5]
		\node at (-3.5,1.3) {Alice};
		\shade[draw, thick, ,rounded corners, inner color=white,outer color=gray!50!white] (-1.7,-0.3) rectangle (-2.3,0.3) ;
		\draw[thick, ->] (-2.5,0.5) to [out=180, in=90] (-2,0.3);
		\node at (-2.9,0.5) {$x_1$};
		\draw[thick, -<] (-2,-0.3) to [out=-90, in=180] (-2.5,-0.5);
		\node at (-3,-0.5) {$a_1$};

		\shade[draw, thick, ,rounded corners, inner color=white,outer color=gray!50!white] (-3.7,-0.3) rectangle (-4.3,0.3) ;
		\draw[thick, ->] (-4.5,0.5) to [out=180, in=90] (-4,0.3);
		\node at (-4.9,0.5) {$x_2$};
		\draw[thick, -<] (-4,-0.3) to [out=-90, in=180] (-4.5,-0.5);
		\node at (-5,-0.5) {$a_2$};

		\node at (2,1.3) {Bob};
		\shade[draw, thick, ,rounded corners, inner color=white,outer color=gray!50!white] (1.7,-0.3) rectangle (2.3,0.3) ;		
		\draw[thick, ->] (2,-0.3) -- (2,-0.7);
		\node at (2,-1) {$\sigma_{a_1a_2|x_1x_2}$};
		
		\node at (0,0) {s};
		\draw[thick, dashed, color=gray!70!white] (0,0) circle (0.3cm);
		\draw[thick, dashed, color=gray!70!white, ->] (-0.3,0) -- (-1.6,0);
		\draw[thick, dashed, color=gray!70!white, ->] (-2.35,0) -- (-3.65,0);
		\draw[thick, dashed, color=gray!70!white, ->] (0.3,0) -- (1.6,0);		
	    \end{tikzpicture}}

\subfigure[\quad Bob with input]{	    
	    \begin{tikzpicture}[scale=0.5]
		\node at (-2,1.3) {Alice};
		\shade[draw, thick, ,rounded corners, inner color=white,outer color=gray!50!white] (-1.7,-0.3) rectangle (-2.3,0.3) ;
		\draw[thick, ->] (-2.5,0.5) to [out=180, in=90] (-2,0.3);
		\node at (-2.8,0.5) {$x$};
		\draw[thick, -<] (-2,-0.3) to [out=-90, in=180] (-2.5,-0.5);
		\node at (-2.9,-0.5) {$a$};

		\node at (2,1.3) {Bob};
		\shade[draw, thick, ,rounded corners, inner color=white,outer color=gray!50!white] (1.7,-0.3) rectangle (2.3,0.3) ;		
		\draw[thick, ->] (2.5,0.5) to [out=180, in=90] (2,0.3);
		\node at (2.7,0.5) {$y$};
		\draw[thick, ->] (2,-0.3) -- (2,-0.7);
		\node at (2,-1) {$\sigma_{a|xy}$};
		
		\node at (0,0) {s};
		\draw[thick, dashed, color=gray!70!white] (0,0) circle (0.3cm);
		\draw[thick, dashed, color=gray!70!white, ->] (-0.3,0) -- (-1.6,0);
		\draw[thick, dashed, color=gray!70!white, ->] (0.3,0) -- (1.6,0);		
	    \end{tikzpicture}}	    
	    	    \hspace{0.5cm}
\subfigure[\quad Instrumental]{	    
	    \begin{tikzpicture}[scale=0.5]
		\node at (-2,1.3) {Alice};
		\shade[draw, thick, ,rounded corners, inner color=white,outer color=gray!50!white] (-1.7,-0.3) rectangle (-2.3,0.3) ;
		\draw[thick, ->] (-2.5,0.5) to [out=180, in=90] (-2,0.3);
		\node at (-2.7,0.5) {$x$};
		\draw[thick] (-2,-0.3) to [out=-30, in=-150] (1,0);
		\draw[thick, ->] (1,0) to [out=30, in=90] (2,0.3);
		\node at (-2,-0.7) {$a$};

		\node at (2,1.3) {Bob};
		\shade[draw, thick, ,rounded corners, inner color=white,outer color=gray!50!white] (1.7,-0.3) rectangle (2.3,0.3) ;		
		\node at (2,-1) {$\sigma_{a|x}$};
		\draw[thick, ->] (2,-0.3) -- (2,-0.7);
		
		\node at (0,0) {s};
		\draw[thick, dashed, color=gray!70!white] (0,0) circle (0.3cm);
		\draw[thick, dashed, color=gray!70!white, ->] (-0.3,0) -- (-1.6,0);
		\draw[thick, dashed, color=gray!70!white, ->] (0.3,0) -- (1.6,0);		
	    \end{tikzpicture}}
\end{center}
\caption{{Different generalised bipartite steering setups (a) The traditional scenario: Alice makes a measurement, steering the state of Bob. (b) The sequential-measurement scenario: Alice now performs a sequence of measurements, steering the state of Bob multiple times (c) The Bob-with-input (BWI) scenario: Bob now also has an input, allowing him to also influence his state, but performing some operation on it (d) The instrumental steering scenario: is similar to BWI except that Bob's input now depends on Alice's outcome. The top two scenarios ((a) and (b)) do not admit post-quantum steering. We show here that the bottom two scenarios ((c) and (d)) have post-quantum steering.}
\label{fig:steebip}}
\end{figure}

\textit{Preliminaries.---}
In the traditional bipartite quantum steering scenario (see Fig.~\ref{fig:steebip}(a)) Alice and Bob share a system in a possibly entangled quantum state $\rho$. 
{Alice is allowed to perform generalised measurements on her share of the system, which correspond to positive-operator valued measures (POVM). 
Alice chooses one such measurement} $\{M_{a|x}\}_a$, labelled by $x$, from a set of measurements, and obtains an outcome $a$ with probability $p(a|x) = \Tr\left\{ (M_{a|x} \otimes \id_B ) \rho \right\}$. After the measurement, Bob's steered state is $\rho_{a|x} = \Tr_A\left\{ (M_{a|x} \otimes \id_B ) \rho \right\}/p(a|x)$. It is convenient to work with the unnormalised steered states $\sigma_{a|x} = p(a|x)\, \rho_{a|x} =  \Tr_A\left\{ (M_{a|x} \otimes \id_B ) \rho \right\}$, which contain both the information about both Alice's conditional probabilities $p(a|x)=\Tr\left\{ \sigma_{a|x} \right\}$, and Bob's conditional states $\rho_{a|x}$. The collection $\{\sigma_{a|x}\}_{a,x}$ of unnormalised states Bob is steered into is called an \textit{assemblage}. Due to the completeness relation for Alice's measurements, $\sum_a M_{a|x} = \openone$ for all $x$, it follows that $\sum_a \sigma_{a|x} = \Tr_A\{\rho\} = \rho_B$, independent of $x$. This can be seen as a no-signalling condition from Alice to Bob, since Bob, without knowledge of the outcome of Alice, has no information about the choice of measurement she made. 
 
One natural generalisation {of} the traditional steering scenario is to allow Alice to make a sequence of measurements on her share of the system, such that each measurement has the potential to steer the state of Bob. This situation is depicted in Fig.~\ref{fig:steebip} (b). In the Appendix, we show that this generalisation in fact does not feature post-quantum steering either; this can be seen as an extension of the  GHJW theorem \cite{gisin,HJW} to the sequential scenario. 

\textit{Bipartite steering when Bob has an input (Fig. \ref{fig:steebip}c).---}
We consider now the generalisation where Bob's device also accepts an input before producing a quantum state. Intuitively, we can think that this input may determine the preparation of some quantum system, which could come about from a transformation on a quantum system inside Bob's device. This situation is depicted in Fig.~\ref{fig:steebip} (c), where $y$ denotes the input. In this generalised scenario, the members of the assemblage will be  $\{\sigma_{a|xy}\}_{a,x,y}$.
{Note that when the variable $y$ takes only one possible value, this scenario reduces to the traditional bipartite steering setup of Fig.~\ref{fig:steebip}(a).}

In the context of quantum theory, {we assume} that Alice and Bob share a quantum state $\rho$ and that Alice performs measurements labelled by $x$, as in the standard scenario. Given that Bob now has an input, the most general  operation that he could apply is a Completely-Positive and Trace-Preserving (CPTP) channel onto his part of the quantum system. Thus, the {quantum} assemblages that can be generated are:

\begin{defn}\textbf{Quantum Bob-{with-i}nput assemblage{s}.}\label{def:quant_gen}\\
An assemblage $\{\sigma_{a|xy}\}_{a,x,y}$ has a has a quantum realisation in the {Bob-with-input steering} scenario {iff} there exists a Hilbert space $\cH_A$ and POVMs $\{ M_{a|x} \}_{a,x}$ for Alice,  a state $\rho$ in $\cH_A \otimes \cH_B$, and a collection of CPTP maps $\{\mathcal{E}_y\}_y$ in $\cH_B$ for Bob, such that 
\begin{equation}
\sigma_{a|xy} = \mathcal{E}_y \left[ \mathrm{tr}_{A}\left\{ (M_{a|x} \otimes \id  ) \rho   \right\} \right].
\end{equation}
We denote this set of assemblages as $\mathcal{Q}_{BWI}$.
\end{defn}
{Note that this definition of Quantum Bob-with-input (BWI) assemblages does not require that the operations take place in a particular order. That is, the same assemblage can be obtained if the map $\mathcal{E}_y$ is applied to Bob's subsystem before Alice measures hers.}

To go beyond quantum theory, we have to identify the {most general} constraints that apply {here}. Not only must we now ensure no-signalling from Alice to Bob, but since Bob has an input, we must also ensure no-signalling from Bob to Alice. These constraints are captured by the following definition:

\begin{defn}\textbf{Non-signalling Bob-{with-i}nput assemblage{s}.}\label{def:axiomatic}\\ 
An assemblage $\{\sigma_{a|xy}\}_{a,x,y}$ is non-signalling in the {Bob-with-input steering scenario} i{ff} $\sigma_{a|xy} \geq 0$ for all $a,x,y$, and 
\begin{align}
\label{eq:non_oper_def}
&\sum_{a} \sigma_{a|xy} = \sum_{a} \sigma_{a|x'y}  \quad \forall \, x,x',y \,,\\
&\mathrm{tr} \left\{  \sigma_{a|xy} \right\} = p(a|x) \quad \forall \, a,x,y\,,\\
&\mathrm{tr}  \sum_{a} \sigma_{a|xy}  = 1 \quad \forall \, x,y \,,
\end{align}
where $p(a|x)$ is the probability that Alice obtains outcome $a$ when performing measurement $x$ on her share of the system. 

\noindent We denote th{e set of such} assemblages as ${\mathcal{G}}_{BWI}$.
\end{defn}

We can now return to our central question of whether there can exist post-quantum steering in this scenario. Here we find that this is indeed the {case:}
\begin{thm}
\label{theo:G_gap} The set of all non-signalling Bob-{with-i}nput assemblages is strictly larger than the set of quantum Bob-{with-i}nput assemblages, $\mathcal{Q}_{BWI} \not\equiv {\mathcal{G}}_{BWI}$. Hence, there is post-quantum steering in the Bob-{with-i}nput steering scenario.
\end{thm}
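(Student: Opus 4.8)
The plan is to prove the two inclusions separately, with all the content concentrated in the strictness. The inclusion $\mathcal{Q}_{BWI} \subseteq \mathcal{G}_{BWI}$ is routine: for any quantum realisation $\sigma_{a|xy} = \mathcal{E}_y[\mathrm{tr}_A\{(M_{a|x}\otimes\id)\rho\}]$ one checks the axioms of Definition~\ref{def:axiomatic} directly. Positivity holds because $\mathcal{E}_y$ is completely positive and $\mathrm{tr}_A\{(M_{a|x}\otimes\id)\rho\}\geq 0$; the no-signalling-to-Bob condition follows from $\sum_a \sigma_{a|xy} = \mathcal{E}_y[\mathrm{tr}_A\{\rho\}]$ using $\sum_a M_{a|x}=\id$, which is manifestly independent of $x$; and $\mathrm{tr}\{\sigma_{a|xy}\}=\mathrm{tr}\{(M_{a|x}\otimes\id)\rho\}=p(a|x)$ is independent of $y$ precisely because $\mathcal{E}_y$ is trace-preserving. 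The real work is to exhibit an explicit assemblage in $\mathcal{G}_{BWI}\setminus\mathcal{Q}_{BWI}$.

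The construction I would use embeds a post-quantum Bell correlation into an assemblage. Fix binary alphabets and let $p(ab|xy)$ be the Popescu--Rohrlich box, $p(ab|xy)=\tfrac12\,\delta_{a\oplus b,\,x\cdot y}$, which is non-signalling but violates Tsirelson's bound (it attains the algebraic CHSH value $4$). Taking $\cH_B=\mathbb{C}^2$ with computational basis $\{\ket{b}\}_{b=0,1}$, I would define
\begin{equation}
\sigma_{a|xy} \;=\; \sum_b p(ab|xy)\,\ket{b}\!\bra{b}.
\end{equation}
One then checks that $\{\sigma_{a|xy}\}\in\mathcal{G}_{BWI}$: each $\sigma_{a|xy}\geq 0$; $\sum_a\sigma_{a|xy}=\sum_b p(b|y)\ket{b}\!\bra{b}$ depends only on $y$ (no signalling Alice-to-Bob for the box); $\mathrm{tr}\{\sigma_{a|xy}\}=p(a|xy)=p(a|x)$ is $y$-independent (no signalling Bob-to-Alice); and $\mathrm{tr}\sum_a\sigma_{a|xy}=1$. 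All of these translate immediately into the non-signalling constraints satisfied by the PR box.

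The crux, and the step I expect to be the main obstacle, is showing that this assemblage is \emph{not} in $\mathcal{Q}_{BWI}$; the idea is that a quantum realisation would yield a quantum Bell model for the PR box. Suppose $\sigma_{a|xy}=\mathcal{E}_y[\mathrm{tr}_A\{(M_{a|x}\otimes\id)\rho\}]$ for some $\rho$, POVMs $\{M_{a|x}\}$ and CPTP maps $\{\mathcal{E}_y\}$. Since the assemblage operator is diagonal by construction, $p(ab|xy)=\bra{b}\sigma_{a|xy}\ket{b}$, and pushing the effect $\ket{b}\!\bra{b}$ through the channel via its adjoint gives
\begin{equation}
p(ab|xy) \;=\; \mathrm{tr}\!\left\{\big(M_{a|x}\otimes E^{y}_{b}\big)\,\rho\right\},\qquad E^{y}_{b}:=\mathcal{E}_y^{\dagger}(\ket{b}\!\bra{b}).
\end{equation}
Because $\mathcal{E}_y$ is CPTP, its adjoint $\mathcal{E}_y^{\dagger}$ is completely positive and unital, so $\{E^{y}_{b}\}_{b}$ is a legitimate POVM for every $y$. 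This exhibits the PR box as an ordinary quantum Bell correlation on $\rho$ with measurements $\{M_{a|x}\}$ and $\{E^{y}_{b}\}$, contradicting the fact that it exceeds Tsirelson's bound. Hence $\{\sigma_{a|xy}\}\notin\mathcal{Q}_{BWI}$, and $\mathcal{Q}_{BWI}\subsetneq\mathcal{G}_{BWI}$.

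The only genuinely delicate point in this argument is the passage to the adjoint: one must verify that unitality $\mathcal{E}_y^{\dagger}(\id)=\id$, which is exactly the Heisenberg-picture counterpart of trace-preservation of $\mathcal{E}_y$, is what guarantees $\sum_b E^{y}_{b}=\id$, and that the diagonal-basis readout is insensitive to the internal order of Alice's measurement and Bob's channel (as already noted after Definition~\ref{def:quant_gen}). I would finally remark that this diagonal embedding in fact shows the assemblage is quantum \emph{iff} the encoded box is; so for this particular witness the post-quantum steering is inherited from post-quantum Bell nonlocality, and a more refined example (deferred to later) is required to show the effect is genuinely new. For Theorem~\ref{theo:G_gap} itself, however, the PR-box embedding already suffices.
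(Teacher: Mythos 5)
Your proposal is correct and follows essentially the same route as the paper: your diagonal embedding $\sigma_{a|xy}=\sum_b p(ab|xy)\ket{b}\!\bra{b}$ of the PR box is exactly the assemblage the paper writes as $\tfrac{1}{2}\left(\ket{a}\bra{a}\,\delta_{xy=0}+\ket{a\oplus1}\bra{a\oplus1}\,\delta_{xy=1}\right)$, and your contradiction argument, pushing Bob's computational-basis effects through the channel adjoint $\mathcal{E}_y^\dagger$ (completely positive and unital, hence POVM-preserving) to exhibit the PR box as a quantum Bell correlation, is precisely the paper's proof. Your additional explicit check of $\mathcal{Q}_{BWI}\subseteq\mathcal{G}_{BWI}$ and your closing remark that this witness inherits its post-quantumness from Bell nonlocality (so a finer example is needed for Theorem~\ref{thm:ex}) match observations the paper makes implicitly and immediately after the theorem, respectively.
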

\begin{proof}
We {construct} an explicit example of a assemblage in ${\mathcal{G}}_{BWI}$ which cannot be realised in quantum theory. 

Consider the specific scenario where Alice has binary inputs and outcomes, $x \in \{0,1\}$ and $a \in \{0,1\}$, {Bob} has a binary input {$y \in \{0,1\}$,} and the dimension of Bob's Hilbert space is $2$. Consider the following assemblage: 
\begin{equation}\label{e:example BWI}
\sigma^*_{a|xy} := \frac{1}{2} \left( \ket{a}\bra{a} \, \delta_{xy=0} + \ket{a\oplus 1}\bra{a \oplus 1}  \delta_{xy=1} \right)
\end{equation}
Note that (i) $\sigma^*_{a|xy} \geq 0$ for all $a,x,y$; (ii) $\sum_a \sigma^*_{a|x,y} = \frac{1}{2}(\delta_{xy=0} + \delta_{xy=1}) \id = \frac{1}{2}\id$, which is independent of $x$ and $y$; (iii) $\Tr \left\{ \sigma^*_{a|xy} \right\} = \frac{1}{2}(\delta_{xy=0} + \delta_{xy=1}) = \frac{1}{2}$ is independent of $y$ and (iv) $\Tr \sum_a \sigma_{a|xy}^* = 1$. This shows that $\{\sigma_{a|xy}^*\}$ is a valid no-signalling assemblage, i.e., $\{\sigma^*_{a|xy}\} \in {\mathcal{G}}_{BWI}$. 

Now we show that this assemblage cannot arise in quantum theory, {i.e.,} $\{\sigma^*_{a|xy}\} \not\in \mathcal{Q}_{BWI}$. We do so by first noting that for a quantum-realisable assemblage, since $(\id_A \otimes \mathcal{E}^y) [\rho]$ is a bipartite quantum state when $\mathcal{E}^y$ is a CPTP channel, Alice and Bob {can only} produce quantum {Bell} correlation{s,} should Bob choose to measure his system. Namely, let Bob make an arbitrary measurement $\{N_b\}_b$, on his state in a quantum assemblage $\{\sigma_{a|xy}\}_{a,x,y}$. Then, the correlations obtained are 
\begin{align}
	p(a,b|x,y) &= \Tr\{N_b \sigma_{a|xy}\}, \nonumber \\ \nonumber
	&= \Tr_B \{N_b\mathcal{E}_y \left[ \mathrm{tr}_{A}\left\{ (M_{a|x} \otimes \id  ) \rho   \right\} \right]\}, \\
	&= \Tr\left\{ (M_{a|x} \otimes \mathcal{E}^\dagger_y(N_b)  ) \rho   \right\}. \nonumber
\end{align}
Since $\mathcal{E}_y(\cdot)$ is a CPTP channel , the dual map $\mathcal{E}_y^\dagger(\cdot)$ is unital, and hence $\mathcal{E}_y^\dagger(N_b)$ is always valid POVM. This provides an explicit quantum realisation of the correlations $p(a,b|x,y)$. We will thus prove that \eqref{e:example BWI} is not quantum-realisable by demonstrating that it can generate correlations $p(a,b|x,y)$ which are known to be impossible within quantum theory. 

Let $N_b = \ket{b}\bra{b}$ be the computational basis measurement. The correlations that Alice and Bob obtain are 
\begin{align*}
p(a,b|x,y) =  \bra{b} \sigma^*_{a|xy} \ket{b} = \begin{cases}
\frac{1}{2} \quad \text{if} \quad a\oplus b = xy \\
0 \quad \text{otherwise} 
\end{cases}\,.
\end{align*}
These are the correlations of the "Popescu-Rohrlich'' box \cite{PR}, which are not achievable within quantum theory. Hence, $\sigma^*_{a|xy} \not\in \mathcal{Q}_{BWI} $ and so $\mathcal{Q}_{BWI} \not\equiv {\mathcal{G}}_{BWI}$.  
\end{proof}

We see then that post-quantum steering can arise in a generalised bipartite steering scenario. This example given however relies on post-quantum nonlocality and hence the post-quantum steering found may be argued to be just another guise of the former effect. In the following theorem, we prove that the two phenomena are genuinely different:

\begin{thm}
\label{thm:ex}
Post-quantum steering in the Bob-{with-i}nput steering scenario is independent of post-quantum nonlocality. Namely, there exist non-signalling assemblages $\{\sigma_{a|xy}\}$ that are not quantum realisable, but which can only lead to quantum correlations $p(a,b|x,y)$ in the Bell scenario.
\end{thm}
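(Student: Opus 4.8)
The plan is to exhibit an explicit non-signalling assemblage $\{\sigma_{a|xy}\}\in{\mathcal{G}}_{BWI}$, to show by hand that every measurement Bob could perform extracts only local (hence quantum) Bell correlations, and then to certify separately that the assemblage lies outside $\mathcal{Q}_{BWI}$. The point that dictates the whole strategy is that the separation cannot be witnessed at the level of correlations: as in the proof of Theorem~\ref{theo:G_gap}, for any quantum assemblage and any fixed POVM $\{N_b\}$ the table $p(a,b|x,y)=\Tr\{N_b\,\mathcal{E}_y[\widetilde\sigma_{a|x}]\}=\Tr\{(M_{a|x}\otimes\mathcal{E}_y^\dagger(N_b))\rho\}$ is a genuine single-state quantum Bell correlation. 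Hence post-quantumness of a BWI assemblage can never surface in the extractable statistics alone and must be certified from the full steered \emph{operators}. This is precisely the feature that would make the two effects independent, and it tells me the proof needs a genuinely steering-level (tomographic) certificate rather than a Bell argument.

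First I would build the candidate so that it admits a common local-hidden-state model
\begin{equation}
\sigma_{a|xy}=\sum_{\lambda}p(\lambda)\,D(a|x,\lambda)\,\omega_{\lambda,y},
\end{equation}
with a single $y$-independent weight $p(\lambda)$ and deterministic response $D(a|x,\lambda)$, and with qubit states $\omega_{\lambda,y}$ allowed to depend on Bob's input $y$. Verifying Definition~\ref{def:axiomatic} is routine (the sum over $a$ is $\sum_\lambda p(\lambda)\omega_{\lambda,y}$, manifestly $x$-independent, while $\Tr\sigma_{a|xy}=\sum_\lambda p(\lambda)D(a|x,\lambda)$ reproduces a fixed $p(a|x)$ independent of $y$). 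The payoff is immediate: for \emph{any} POVM $\{N_b\}$ Bob applies, $p(a,b|x,y)=\sum_\lambda p(\lambda)\,D(a|x,\lambda)\,\Tr\{N_b\,\omega_{\lambda,y}\}$ is a local model with shared variable $\lambda$, Alice's response $D(a|x,\lambda)$ depending only on $x$, and Bob's response $\Tr\{N_b\,\omega_{\lambda,y}\}$ depending only on $y$. This settles the ``only quantum correlations'' half in one stroke, for all of Bob's measurements at once.

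The core of the argument is to choose the $\omega_{\lambda,y}$ so that the assemblage still fails to lie in $\mathcal{Q}_{BWI}$. Membership there requires a single \emph{parent} assemblage $\{\widetilde\sigma_{a|x}\}$ (non-signalling, hence quantum by GHJW) and channels $\{\mathcal{E}_y\}$ with $\mathcal{E}_y[\widetilde\sigma_{a|x}]=\sigma_{a|xy}$ for all $y$ simultaneously. The design idea is to let the two slices $y=0,1$ encode the correlation between Bob's qubit and Alice's outcome in two complementary bases, so that reproducing both from one qubit parent would force the fixed parent states to be distinguishable along incompatible directions at once -- which a single qubit cannot supply and no pair of channels can create, since data processing gives $\Norm{\sigma_{a|xy}-\sigma_{a'|x'y}}\le\Norm{\widetilde\sigma_{a|x}-\widetilde\sigma_{a'|x'}}$ for each $y$ with the \emph{same} right-hand side. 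I expect the clean analytic form of this obstruction to be delicate, because the response functions $D(a|x,\lambda)$ dilute the intra-slice distinguishability; the robust route I would actually carry out is to convert non-membership into the steering functional furnished by the paper's numerical method: build a linear witness $W$, bound $\beta_Q:=\max_{\mathcal{Q}_{BWI}}\langle W,\sigma\rangle$ via the semidefinite relaxation, and verify $\langle W,\sigma\rangle>\beta_Q$ on the constructed assemblage.

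The main obstacle is this last step. The set $\mathcal{Q}_{BWI}$ is not directly semidefinite-representable -- it is the image of the quantum-assemblage set under an optimisation over channels, and the product ``state $\times$ channel'' makes its natural description bilinear and non-convex. My plan is therefore to work with a convex outer approximation (a moment/NPA-type relaxation adapted to the BWI structure, which is what the paper develops) so that $\beta_Q$ is a rigorous upper bound over \emph{all} of $\mathcal{Q}_{BWI}$, and then to promote the numerically found $W$ and bound to an exact certificate by rational rounding, so that $\langle W,\sigma\rangle>\beta_Q$ holds provably. Combined with the explicit local model above, this makes the assemblage simultaneously post-quantum-steering and Bell-local, which is exactly the independence claimed.
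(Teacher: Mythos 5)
There is a genuine gap, and it is fatal to the construction rather than to a technical step: \emph{every} assemblage of the form you propose,
\begin{equation*}
\sigma_{a|xy}\;=\;\sum_{\lambda}p(\lambda)\,D(a|x,\lambda)\,\omega_{\lambda,y},
\end{equation*}
is automatically quantum-realisable in the Bob-with-input scenario, so no choice of the states $\omega_{\lambda,y}$ can make it fall outside $\mathcal{Q}_{BWI}$. To see this, take $\cH_A$ and Bob's pre-channel system both of dimension $|\Lambda|$, let $\rho=\sum_\lambda p(\lambda)\ket{\lambda}\bra{\lambda}\otimes\ket{\lambda}\bra{\lambda}$, let Alice's POVMs be $M_{a|x}=\sum_\lambda D(a|x,\lambda)\ket{\lambda}\bra{\lambda}$, and let $\mathcal{E}_y$ be the measure-and-prepare channel $\mathcal{E}_y[\cdot]=\sum_\lambda\bra{\lambda}\cdot\ket{\lambda}\,\omega_{\lambda,y}$, which is CPTP. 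Then $\mathcal{E}_y\left[\Tr_A\left\{(M_{a|x}\otimes\id)\rho\right\}\right]=\sum_\lambda p(\lambda)D(a|x,\lambda)\,\omega_{\lambda,y}=\sigma_{a|xy}$, exactly as required by Definition \ref{def:quant_gen}. Your candidate is precisely an LHS assemblage in the sense of the paper's appendix (LOSR-preparable), and the LHS set sits strictly \emph{inside} $\mathcal{Q}_{BWI}$, not partially outside it. The data-processing obstruction you sketch fails because you tacitly force the parent assemblage $\{\widetilde\sigma_{a|x}\}$ to live on a qubit; the definition of $\mathcal{Q}_{BWI}$ places no bound on the dimension of Bob's pre-channel system, and with a $|\Lambda|$-dimensional classical flag the parent states can be made perfectly distinguishable, after which the channels $\mathcal{E}_y$ can prepare the slices $\omega_{\lambda,y}$ at will. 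Consequently your planned SDP certificate could never fire either: an LHS assemblage lies inside any valid outer relaxation of the quantum set.

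The structural point your plan misses is that a valid example must be \emph{steerable} (no LHS model) and yet yield only quantum Bell correlations; these two requirements cannot both be met by an LHS construction, which trivialises the second at the cost of killing the first. The paper threads this needle differently: its assemblage \eqref{e:example} is represented as Pauli measurements on a maximally entangled state followed by Bob applying either the identity or the \emph{transpose} map --- a positive but not completely positive, trace-preserving map. Positivity plus unitality of the dual map guarantees that any POVM of Bob is pushed to another valid POVM, so all correlations $p(a,b|x,y)$ are quantum (Theorem \ref{PTPquantum}); non-complete-positivity is then what makes post-quantumness possible, certified either numerically ($\beta^{\aQ}=0.4135>0=\beta^*$) or analytically via Lemma \ref{theIDthing} (a single CPTP map would have to send $X\mapsto X$, $Z\mapsto Z$, $Y\mapsto -Y$, which is not CP). A correct proof along your general lines would have to replace the LHS model by some such non-classical but correlation-quantum structure; as written, the two halves of your argument are mutually exclusive. (A minor additional error: your claim that post-quantumness of a BWI assemblage ``can never surface in the extractable statistics'' contradicts Theorem \ref{theo:G_gap}, whose example produces Popescu-Rohrlich correlations; what is true is only that quantum assemblages always produce quantum correlations, so the converse detection direction fails.)
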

The proof of this theorem is given in the Appendix. The main idea is to show that the following assemblage has the desired properties, i.e., that it is post-quantum and that whenever Bob performs a measurement $\{N_b\}$ on it, the observed outcome statistics $p(ab|xy) = \Tr\left\{ N_b \, \sigma_{a|xy} \right\} $ have always a quantum realisation:
\begin{align}\label{e:example}
\sigma_{a|xy} = \frac{1}{4}(\openone + (-1)^{a+\delta_{x,2}\delta_{y,1}} \sigma_x )
\end{align}
where $x \in \{1,2,3\}$ and  $(\sigma_1,\sigma_2,\sigma_3) = (X,Y,Z)$ are the Pauli operators.

The method to show that this assemblage can only yield quantum correlations is to notice that one may mathematically represent this assemblage as
Alice performing Pauli measurements on the maximally entangled state, and Bob applying either the identity or transpose map (which crucially is positive but not completely positive) depending on $y$ \cite{footnote1}. Then, following Ref.~\cite{oform}, the assemblage Eq.~\eqref{e:example} can only yield quantum correlations. 
{In the Appendix we further prove that the assemblage is postquantum: we construct an explicit steering inequality that is (robustly) violated by the assemblage \eqref{e:example} beyond the quantum bound. For this, we present a method to bound the quantum bound of a steering inequality in this scenario. We also present an alternative analytic proof of the claim.}

Hence, just as for multipartite post-quantum steering \cite{pqs}, the effect here is independent of the existence of post-quantum Bell nonlocality.

\textit{Instrumental steering  (Fig. \ref{fig:steebip}d).---}
We now consider the instrumental steering scenario \cite{inststee}. In this case, Bob still {has an input that can inform the preparation of a quantum system}, however now {this input can} depend on Alice's measurement outcome (see Fig.~\ref{fig:steebip} (d)). {For example, Bob's input could just decide a transformation upon a quantum system.} {To recover the traditional steering scenario, we again enforce the constraint that Bob only has one input, and thus we trivially have no dependence on Alice's output.} This scenario is closely related to the so-called `instrumental setup' \cite{SGS01,Pearl09}, only now one of the variables has become a quantum system. Indeed, this close relation between instrumental steering and the instrumental setup, will enable us to identify a connection between the instrumental steering scenario and the Bob-with-input scenario further below.

In the instrumental steering {scenario,} an assemblage is given by the collection of subnormalised states $\{\sigma_{a|x}\}$, where $x$ denotes the choice of measurement by Alice, and $a$ denotes both Alice's outcome and Bob's input. Within quantum theory, the assemblages they can generate {are the} following:

\begin{defn}\textbf{Quantum Instrumental assemblage{s}.}\\
An assemblage $\{\sigma_{a|x}\}_{a,x}$ has a has a quantum realisation in the instrumental steering scenario {iff} there exists a Hilbert space $\cH_A$ and POVMs $\{ M_{a|x} \}_{a,x}$ for Alice, a state $\rho$ in $\cH_A \otimes \cH_B$, and a collection of CPTP maps $\{\mathcal{E}_a\}_a$ in $\cH_B$ for Bob, such that
\begin{equation}
\sigma_{a|x} = \mathcal{E}_a \left[ \Tr_{A}\left\{ (M_{a|x} \otimes \id) \rho   \right\} \right].
\end{equation} 
We denote this set of assemblages by $\mathcal{Q}_I$.
\end{defn}

The instrumental steering scenario has no straightforward non-signalling constraints.
Hence, in order to define general assemblages here, we adopt the relation between non-signalling Bell correlations and generic instrumental correlations in the black-box scenario found in Ref.~\cite{pironio} {(see also Supplementary Material of Ref.~\cite{chaves})}. {In the instrumental setup, where the so-called device-independent instrumental correlations are studied, it was recently found that these correlations are indeed a post-selection of the correlations found in a Bell scenario: the post-selection procedure consists on keeping the events where $y=a$ \cite{pironio}. This inspires the following definition:}

\begin{defn}\label{def:steeint}\textbf{{General} instrumental assemblage{s}.}\\
An assemblage $\{\sigma_{a|x}\}_{a,x}$ is a {general} instrumental assemblage i{ff} there exists a non-signalling Bob-{with-i}nput assemblage 
$\{\omega_{a|xy}\} \in {\mathcal{G}}_{BWI}$ such that
$ \sigma_{a|x} = \omega_{a|x,y=a}$ {for all $a$ and $x$}.
\noindent We denote th{e set of such general} assemblages by ${\mathcal{G}}_I$.
\end{defn}
{This definition allows us to adopt the viewpoint of Refs.~\cite{pironio,chaves}, and hence understand the assemblages in the instrumental steering scenario as being a post-selection of those in a Bob-with-input scenario.}

{Note that this connection between the Bob-with-input scenario and the instrumental steering scenario allows us to interpret the latter beyond the traditional way that the instrumental setup is presented. Usually, the instrumental setup is such that there is signalling from Alice to Bob, since he needs to learn her outcome in order to implement the operation on his system. However, the particular perspective brought in by Ref.~\cite{pironio}, and which we adopt here, highlights that, ultimately, this communication plays no distinct role in how resourceful the assemblages are, since the Bob-with-Input scenario does not allow for signalling and can simulate them.}

{Returning to} our central question, we now show that there is post-quantum steering in the instrumental steering scenario. Moreover, we show that this does not follow from post-quantum instrumental {black-box} correlations, and {it} i{s} hence another independent form of post-quantumness. 
\begin{thm}
\label{theo:inst1} The set of {general} instrumental assemblages {strictly contains} the set of quantum instrumental assemblages, $\mathcal{Q}_{I} \not\equiv {\mathcal{G}}_{I}$. Hence, {post-quantum instrumental steering exists}.
\end{thm}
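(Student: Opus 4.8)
The plan is to exploit a structural fact that is implicit in Definition~\ref{def:steeint} together with the definition of $\mathcal{Q}_I$: both the general and the quantum instrumental sets are \emph{post-selections} of their Bob-with-input counterparts. Indeed, a quantum instrumental assemblage $\sigma_{a|x}=\mathcal{E}_a[\Tr_A\{(M_{a|x}\otimes\id)\rho\}]$ is exactly the $y=a$ slice of the quantum BWI assemblage $\omega_{a|xy}=\mathcal{E}_y[\Tr_A\{(M_{a|x}\otimes\id)\rho\}]$, so $\mathcal{Q}_I$ is the post-selection of $\mathcal{Q}_{BWI}$, while $\mathcal{G}_I$ is \emph{by definition} the post-selection of $\mathcal{G}_{BWI}$. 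Since Theorem~\ref{theo:G_gap} already gives $\mathcal{Q}_{BWI}\subsetneq\mathcal{G}_{BWI}$, the question reduces to whether this gap survives post-selection. First I would record the detection tool obtained exactly as in the proof of Theorem~\ref{theo:G_gap}: if $\{\sigma_{a|x}\}\in\mathcal{Q}_I$ and Bob applies a POVM $\{N_b\}$, then $p(a,b|x)=\Tr\{N_b\sigma_{a|x}\}=\Tr\{(M_{a|x}\otimes\mathcal{E}_a^\dagger(N_b))\rho\}$, and since the dual maps $\mathcal{E}_a^\dagger$ send POVMs to POVMs, these are precisely the black-box instrumental correlations obtained by post-selecting $y=a$ from a \emph{quantum} Bell behaviour. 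Hence, to certify that an assemblage lies outside $\mathcal{Q}_I$, it suffices to exhibit one $\{N_b\}$ whose induced $p(a,b|x)$ is a post-quantum device-independent instrumental correlation in the sense of Refs.~\cite{pironio,chaves}.

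Then I would construct the candidate. A convenient family is the diagonal encoding $\omega_{a|xy}=\sum_b q(a,b|x,y)\,\ket{b}\!\bra{b}$ built from a non-signalling Bell behaviour $q$; positivity is manifest and the two no-signalling conditions of Definition~\ref{def:axiomatic} are inherited from the no-signalling of $q$, so $\omega\in\mathcal{G}_{BWI}$. Its instrumental post-selection $\sigma_{a|x}=\omega_{a|x,y=a}$ lies in $\mathcal{G}_I$, and a computational-basis measurement returns $p(a,b|x)=q(a,b|x,a)$, i.e.\ the $y=a$ post-selection of $q$. By the detection tool above, $\sigma\notin\mathcal{Q}_I$ as soon as $q$ is chosen so that this post-selection is post-quantum; invoking the existence of such instrumental correlations (for a suitable number of inputs and outputs) then yields $\mathcal{Q}_I\subsetneq\mathcal{G}_I$.

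The main obstacle is precisely the choice of $q$, because post-selection can wash out post-quantumness. For the binary Popescu--Rohrlich behaviour of~\eqref{e:example BWI} the $y=a$ slice is in fact reproducible even by a local model: the resulting $\sigma_{a|x}$ is diagonal and can be realised within $\mathcal{Q}_I$ using a ``reset'' channel for $\mathcal{E}_0$ and the identity for $\mathcal{E}_1$, so the obvious reuse of the BWI witness fails and the scenario must be enlarged until genuinely post-quantum device-independent instrumental correlations become available. As a more robust alternative that sidesteps this collapse (and that matches the steering-inequality philosophy of the rest of the paper), I would instead detect the candidate at the level of the assemblage itself: build a linear steering functional for the instrumental scenario, bound its value over $\mathcal{Q}_I$ with the numerical method developed here, and show that the post-selected assemblage violates this quantum bound. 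This route is preferable because it can also certify assemblages that are post-quantum while producing only quantum black-box correlations, which is exactly what the accompanying independence claim requires.
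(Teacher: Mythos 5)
Your first route is correct, and it is genuinely different from the paper's proof. The paper establishes Theorem~\ref{theo:inst1} as a corollary of Theorem~\ref{thm:inst2}: it takes the single candidate $\sigma^*_{a|x}=\sigma^*_{a|x,y=a}$ obtained by post-selecting the Pauli/transpose assemblage of Eq.~\eqref{e:example}, proves its post-quantumness by a self-testing argument (any quantum model $\{\rho, M_{a|x},\mathcal{E}^a\}$ of the instrumental assemblage would force Pauli measurements on a maximally entangled state up to local isometry, and hence would lift to a quantum BWI realisation of Eq.~\eqref{e:example}, contradicting Theorem~\ref{thm:ex}), and shows via Theorem~\ref{PTPquantum} that it only yields quantum correlations. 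Your construction instead encodes a non-signalling Bell behaviour $q$ diagonally, $\omega_{a|xy}=\sum_b q(a,b|x,y)\ket{b}\!\bra{b}$, and certifies post-quantumness of the post-selection through the induced black-box correlations; your detection tool ($\sigma\in\mathcal{Q}_I$ implies every $p(a,b|x)=\Tr\{(M_{a|x}\otimes\mathcal{E}_a^\dagger(N_b))\rho\}$ is a $y=a$ post-selection of a quantum Bell behaviour) is derived correctly, the diagonal assemblage does inherit the conditions of Definition~\ref{def:axiomatic} from no-signalling of $q$, and your diagnosis that the PR box of Eq.~\eqref{e:example BWI} is useless here is accurate (its $y=a$ slice gives $b=a(1\oplus x)$, realisable in $\mathcal{Q}_I$ exactly as you describe). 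The one external ingredient you need --- existence of post-quantum device-independent instrumental correlations in some finite alphabet --- is indeed available in Refs.~\cite{pironio,chaves}, as the paper's own discussion acknowledges, so the argument closes. What the two approaches buy: yours is more elementary (no self-testing, no SDP relaxation) and mirrors the paper's own proof of Theorem~\ref{theo:G_gap} one level up; but, as you correctly note, it can only ever prove Theorem~\ref{theo:inst1}, because an example whose post-quantumness is witnessed by post-quantum correlations cannot establish the independence claim of Theorem~\ref{thm:inst2}, which is precisely why the paper works with the post-selected assemblage of Theorem~\ref{thm:ex} instead. Your fallback ``route 2'' (a steering functional bounded over $\mathcal{Q}_I$ via the $\aQ_I$ relaxation of Remark~\ref{rem:aqi}) is the machinery the paper sets up in its appendix, but be aware that the paper does not actually execute that computation for the instrumental case --- it resorts to the analytic self-testing argument --- and as written your route 2 is a plan rather than a proof, so it should be regarded as supplementary to, not a replacement for, your completed first argument.
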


\begin{thm}
\label{thm:inst2}
Post-quantum steering in the instrumental steering scenario is independent of post-quantum instrumental correlations. Namely, there exist {general} assemblages $\{\sigma_{a|x}\}$ that are not quantum realisable, but which can only lead to quantum correlations $p(a{,b}|x)$ in the instrumental scenario.
\end{thm}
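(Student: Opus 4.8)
The plan is to produce a single \emph{general} instrumental assemblage that is simultaneously post-quantum and yet yields only quantum correlations, in close analogy with Theorem~\ref{thm:ex}. The natural candidate is the post-selection at $y=a$ of the Bob-with-input assemblage of Eq.~\eqref{e:example}: defining
\begin{equation}
\sigma^{I}_{a|x} := \sigma_{a|x,\,y=a} = \tfrac14\bigl(\openone + (-1)^{a+\delta_{x,2}\delta_{a,1}}\sigma_x\bigr),
\end{equation}
with $x\in\{1,2,3\}$, membership $\sigma^{I}\in\mathcal{G}_{I}$ is immediate from Definition~\ref{def:steeint}, since Eq.~\eqref{e:example} is a valid non-signalling Bob-with-input assemblage. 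It then remains to establish two things: (a) that every correlation $p(a,b|x)=\Tr\{N_b\,\sigma^{I}_{a|x}\}$ obtainable from a measurement $\{N_b\}$ of Bob admits a quantum instrumental model, and (b) that $\sigma^{I}\notin\mathcal{Q}_{I}$.

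For part (a) I would exploit the very post-selection that defines $\sigma^{I}$. For any Bob measurement $\{N_b\}$, set $P(a,b|x,y):=\Tr\{N_b\,\sigma_{a|xy}\}$ with $\sigma_{a|xy}$ the assemblage of Eq.~\eqref{e:example}; by construction $p(a,b|x)=P(a,b|x,y=a)$. Theorem~\ref{thm:ex} guarantees that $P$ coincides with a \emph{quantum Bell} correlation. Adopting the relation of Ref.~\cite{pironio} between instrumental and Bell correlations already used in Definition~\ref{def:steeint}, the post-selection at $y=a$ of any quantum Bell correlation is precisely a quantum instrumental correlation. Hence every $p(a,b|x)$ produced by $\sigma^{I}$ is quantum, which is the second assertion of the theorem.

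For part (b) I would first record the characterisation, following the connection between the two scenarios noted in the main text, that $\mathcal{Q}_{I}$ consists exactly of the post-selections at $y=a$ of $\mathcal{Q}_{BWI}$; equivalently, $\sigma^{I}\in\mathcal{Q}_{I}$ iff there exist a valid (GHJW-realisable) assemblage $\{\tau_{a|x}\}$ and a family of CPTP maps $\{\mathcal{E}_a\}_a$ with $\mathcal{E}_a[\tau_{a|x}]=\sigma^{I}_{a|x}$ for all $a,x$. To rule this out I would construct an explicit instrumental steering inequality and bound its value over $\mathcal{Q}_{I}$ using the numerical method developed for Theorem~\ref{thm:ex}, then verify that $\sigma^{I}$ overshoots this quantum bound robustly, so that the separation survives small perturbations. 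An analytic cross-check is available by leveraging the positivity-but-not-complete-positivity structure underlying Eq.~\eqref{e:example}: matching $\sigma^{I}$ forces the map acting in the relevant branch towards the orientation-reversing (transpose-like) action on the Bloch ball that no completely positive map can implement.

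The main obstacle is part (b). Unlike the no-signalling set $\mathcal{G}_{I}$, the quantum set $\mathcal{Q}_{I}$ is defined through the \emph{existence} of an auxiliary assemblage composed with unknown CPTP maps, so it has no simple closed-form description and its boundary is not directly accessible; moreover the free values of the underlying Bob-with-input assemblage at $y\neq a$ mean that the post-quantumness of Eq.~\eqref{e:example} does not transfer to $\sigma^{I}$ for free. Establishing a certified quantum bound for the steering inequality is therefore the crux: one must either carry out the relaxation-based numerics carefully enough to trust the separation, or push the positivity argument sketched above into a complete analytic proof that no choice of $\{\tau_{a|x}\}$ and $\{\mathcal{E}_a\}_a$ can reproduce $\sigma^{I}$.
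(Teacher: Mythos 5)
Your membership claim ($\sigma^{I}\in\mathcal{G}_I$ via Definition~\ref{def:steeint}) and your part (a) are correct and essentially identical to the paper's argument: for any POVM $\{N_b\}$, the Bell correlation $P(a,b|x,y)=\Tr\{N_b\,\sigma_{a|xy}\}$ built from the assemblage of Eq.~\eqref{e:example} is quantum (this is Theorem~\ref{PTPquantum}, which underlies Theorem~\ref{thm:ex}), and the instrumental correlation produced by $\sigma^{I}$ is its post-selection at $y=a$; a quantum Bell model for $P$ immediately yields a quantum instrumental model, since Bob can simply measure the Bell-model POVM labelled $y=a$ after receiving $a$.

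Part (b), however, is a genuine gap, and you have named the crux yourself without closing it: a quantum instrumental model only constrains the branch $y=a$, so the post-quantumness of the Bob-with-input assemblage \eqref{e:example} does not transfer for free, and neither of your two routes is actually carried out. The numerical route would require exhibiting explicit operators $F_{ax}$, computing the relaxation bound $\beta^{\aQ_I}$, and checking that $\sigma^{I}$ beats it; the paper sets up this machinery in its appendix but does not use it to prove this theorem, and you do not execute it either. Your analytic sketch (``the map in the relevant branch is forced towards the transpose'') cannot be completed as stated, because the inputs $\tau_{a|x}$ to the unknown CPTP maps $\mathcal{E}_a$ are themselves unknown: with only three output states per branch, nothing a priori forces $\mathcal{E}_1$ to act as an orientation-reversing map on the whole Bloch ball. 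The paper closes exactly this hole with a \emph{self-testing} step: assuming $\sigma^{I}_{a|x}=\mathcal{E}_a\left[\Tr_A\{(M_{a|x}\otimes\id)\rho\}\right]$, the structure of the assemblage self-tests that, up to local isometry, $\rho$ is the maximally entangled state and $\{M_{a|x}\}$ are Pauli measurements. With the state and measurements pinned down, one reuses the same maps, now indexed by $y$ instead of $a$, to write $\sigma^*_{a|xy}=\mathcal{E}_y\left[\Tr_A\{(M_{a|x}\otimes\id)\rho\}\right]$ for all $(a,x,y)$, i.e., a quantum Bob-with-input realisation of the full assemblage \eqref{e:example}, contradicting Theorem~\ref{thm:ex}. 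That self-testing argument is the missing idea: it is what allows the unconstrained $y\neq a$ entries to be reconstructed from data that only constrains $y=a$, turning post-quantumness of the parent assemblage into post-quantumness of its post-selection.
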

T{hese two theorems are proven} together in the Supplementary Material, but {their proof is} very similar to th{at} of Theorem \ref{thm:ex}. The {general} assemblage that is used {here as an example} is that which derives from \eqref{e:example} by setting $y = a$, which is both provably post-quantum in the instrumental scenario, and can only lead to quantum instrumental {black-box} correlations.

{Thus,} post-quantum steering is also possible within the instrumenta{l s}cenario, an{d t}his is {independent of the existence of} correlations with no quantum explanation in the fully device-independent instrumental scenario. Hence, post-quantum instrumental steering is another genuinely new {effect. 
Finally,} in terms of number of variables (inputs and outputs), the instrumental scenario is the simplest one where post-quantum steering can exist.

\textit{Discussion.---}
Exploring plausible effects beyond quantum theory that are nevertheless consistent with relativistic causality  \cite{PR}, is important from various perspectives: on the one hand, it allows possible extensions of quantum theory to be explored, in the light of quantum gravity \cite{hardy}. On the other, it allows us to develop a deeper understanding of quantum theory itself, by identifying those properties of it that are truly quantum \cite{hardy2,giulio,lluis,john}. Here, we have shown that for the important form of nonlocality known as steering, it is possible in principle to go beyond what quantum theory allows even when considering only two parties, if suitable generalisations of the traditional scenario are considered. Crucially, we showed that our examples of post-quantum steering are genuinely new, and are not related to other post-quantum nonlocal effects. {We further developed a numerical technique to bound the maximum quantum violation of a steering inequality --  a promising tool for developing future applications of this new type of steering.}
 
In addition, on the way, we also showed that post-quantum steering is impossible in the sequential-measurement generalisation of steering, schematically depicted in Fig.~\ref{fig:steebip} (b).  As such, we have extended the GHJW no-go theorem \cite{gisin, HJW} to this setting, with detailed provided in the Appendix.

The `instrumental setup' \cite{SGS01,Pearl09} is known to be the one with the fewest number of variables able to admit a classical-quantum gap \cite{HLP14}. This is closely related to the setup of Fig.~\ref{fig:steebip}(d), {except that} Bob's system is a classical variable. Previously, classical-quantum gaps had been found in Bell-type \cite{chaves,pironio,Agresti} and steering \cite{inststee} scenarios. Furthermore, quantum-post-quantum gaps have also been found in Bell-type scenarios \cite{chaves,pironio}; but the existence of post-quantum instrumental steering remained an open question. {The discovery of the latter here thus also resolves this open question.}

{Going forward, the most interesting question now is to understand the power of post-quantum steering. For instance, are there information-theoretic or physical principles that are violated by the newly-discovered forms of post-quantum steering found here? In addition, it would be interesting to explore information processing tasks exploiting post-quantum steering as a resource. For example, one task where traditional steering is a resource is subchannel discrimination \cite{pianiwatrous}. It would be interesting to study whether post-quantum steering gives an advantage in tasks related to this. More generally, now that we have uncovered post-quantum steering in a bipartite setting, it paves the way for analysing a broad range of bipartite tasks from this new direction. Indeed, we note that our newly introduced Bob-with-input steering scenario has already been investigated within the context of resource theories \cite{Schmid19}. {We expect our newly developed numerical technique will become relevant for exploring the power of post-quantum steering.}

Our overarching hope is that studying quantum theory `from the outside', whether from the perspective of steering, or other nonlocal and nonclassical effects, will lead to novel insights into the very structure of quantum theory and the possibilities and limitations of quantum theory for information processing. We expect our results and new insights to contribute to this rapidly developing and exciting field.}

\textit{Acknowledgements.--}
We thank {Rodrigo Gallego,} Matt Pusey, John Selby and Elie Wolfe for fruitful discussion{s}.
This research was supported by Perimeter Institute for Theoretical Physics. Research at Perimeter Institute is supported by the Government of Canada through the Department of Innovation, Science and Economic Development Canada and by the Province of Ontario through the Ministry of Research, Innovation and Science.
ABS acknowledges partial support by the Foundation for Polish Science (IRAP project, ICTQT, contract no. 2018/MAB/5, co-financed by EU within Smart Growth Operational Programme). 
MJH and ABS acknowledge the FQXi large grant ``The Emergence of Agents from Causal Order''. 
LA acknowledges financial support from the Brazilian agencies CNPq (PQ grant No. 311416/2015-2 and INCT-IQ), FAPERJ (JCN E-26/202.701/2018), CAPES (PROCAD2013), and the Serrapilheira Institute (grant number Serra-1709-17173). PS acknowledges support from a Royal Society URF (UHQT).

\onecolumngrid
\appendix

\section{Review of the Gisin's and Hughston-Josza-Wooter's theorem}

In this section we review the proof of the theorem by Gisin \cite{gisin} and Hughston, Josza and Wootters \cite{HJW} (GHJW), which states that post-quantum steering in bipartite traditional scenarios compatible with the No Signalling (NS) principle does not exist. 

The idea is to find a quantum realization of a generic NS assemblage. One could view this NS assemblage as a non-signalling sequential assemblage (as defined in the main body of the paper) with only one round of measurements. Given an NS assemblage $\{\sigma_{a|x}\}$, we want to find a Hilbert space for Alice, a state $\rho$ shared by Alice and Bob, and measurement operators $\{M_{a|x}\}$ for Alice such that: 
\begin{align*}
\begin{cases}
\sum_a M_{a|x} = \id_A \quad \forall \, x\,,\\
\sigma_{a|x} = \mathrm{tr}_A \left\{ (M_{a|x} \otimes \id_B) \, \rho \right\} \quad \forall \, a,x\,.
\end{cases}
\end{align*}

GHJW find such a general construction as follows. First, notice that {the reduced state $\sigma_R = \sum_a \sigma_{a|x}$ satisfies} $\sigma_R \geq 0$, hence it has a {diagonal} decomposition as: 
\begin{align*}
\sigma_R = \sum_{r \in S_R} \mu_r \ket{r}\bra{r}\,,
\end{align*}
where $\{\ket{r}\}_r$ is an orthonormal basis, and $S_R$ is the set of those basis' vectors that $\sigma_R$ has support on (that is, we can assume $\mu_r > 0$).
GHJW then define the following state and measurements for the quantum realisation of the given assemblage: 
\begin{align}
\label{eq:state}
\ket{\psi} &= \sum_{r \in S_R} \sqrt{\mu_r} \ket{r} \otimes \ket{r} \,,\\
\label{eq:def_POVM}
M_{a|x} &= \begin{cases} \sqrt{\sigma_R^{-1}} \, \sigma_{a|x}^{\mathrm{T}} \, \sqrt{\sigma_R^{-1}}\,, \quad \text{if} \quad a>0 \\
	\sqrt{\sigma_R^{-1}} \, \sigma_{a|x}^{\mathrm{T}} \, \sqrt{\sigma_R^{-1}} + \id -  \id_R \,, \quad \text{if} \quad a=0
	\end{cases} \,,
\end{align}
where 
\begin{align*}
\sqrt{\sigma_R^{-1}} = \sum_{r \in S_R} \frac{1}{\sqrt{\mu_r}} \ket{r}\bra{r}\,,
\end{align*}
and $ \id_R = \sum_{r \in S_R} \ket{r}\bra{r}$.
These state and measurements are a valid quantum realisation of the assemblage, as we see next. 

Let's first show that the collection of operators $\{M_{a|x}\}$  forms well defined measurements. Since the operators are positive semidefinite by definition, we only need to check that the are suitably normalised: 
\begin{align*}
\sum_a M_{a|x} &= \sqrt{\sigma_R^{-1}} \, \sigma_R^{\mathrm{T}} \, \sqrt{\sigma_R^{-1}} + \id -  \id_R \\
&= \sum_{r,r',r" \in S_R} \frac{\mu_{r'}}{\sqrt{\mu_r}\sqrt{\mu_{r"}}} \ket{r}\bra{r}\ket{r'}\bra{r'}\ket{r"}\bra{r"} + \id -  \id_R \\
&= \sum_{r \in S_R} \ket{r}\bra{r} + \id -  \id_R \\
&= \id \,. \nonumber
\end{align*}

Now let's show that the model actually recovers the assemblage. When $a>0$:
\begin{align} \nonumber
\mathrm{tr}_A \left\{ (M_{a|x} \otimes \id_B) \, \rho \right\} &= \mathrm{tr}_A \left\{ \left(\sqrt{\sigma_R^{-1}} \, \sigma_{a|x}^{\mathrm{T}} \, \sqrt{\sigma_R^{-1}} \otimes \id_B \right) \, \left(\sum_{r,r' \in S_R} \sqrt{\mu_r}\sqrt{\mu_{r'}} \ket{rr} \bra{r'r'} \right) \right\} \\ \nonumber
&= \mathrm{tr}_A \left\{ \left( \sigma_{a|x}^{\mathrm{T}} \otimes \id_B \right) \, \left(\sum_{r,r' \in S_R}  \ket{rr} \bra{r'r'} \right) \right\} \\
&= \sigma_{a|x} \,,\nonumber
\end{align}
where to see that these equalities follow we made use of some technical lemmas that you may find in the {next} section of our supplemental material: by Lemma \ref{lem:sup}, {the support of $\sigma_{a|x}$ is not outside that of $\sigma_R$}, and hence Lemma \ref{lem:tel} applies. 

Now, when $a=0$:
\begin{align} \nonumber
\mathrm{tr}_A \left\{ (M_{a|x} \otimes \id_B) \, \rho \right\} &= \mathrm{tr}_A \left\{ \left(\sqrt{\sigma_R^{-1}} \, \sigma_{a|x}^{\mathrm{T}} \, \sqrt{\sigma_R^{-1}} \otimes \id_B \right) \, \left(\sum_{r,r' \in S_R} \sqrt{\mu_r}\sqrt{\mu_{r'}} \ket{rr} \bra{r'r'} \right) \right\}  \\ \nonumber
&+ \mathrm{tr}_A \left\{ \left( (\id -  \id_R) \otimes \id_B \right) \, \left(\sum_{r,r' \in S_R} \sqrt{\mu_r}\sqrt{\mu_{r'}} \ket{rr} \bra{r'r'} \right) \right\} \\  \nonumber
&= \mathrm{tr}_A \left\{ \left( \sigma_{a|x}^{\mathrm{T}} \otimes \id_B \right) \, \left(\sum_{r,r' \in S_R}  \ket{rr} \bra{r'r'} \right) \right\} \\ \nonumber
&= \sigma_{a|x} \,.
\end{align}

This completes the proof of the GHJW theorem for bipartite steering scenarios.


\section{The sequential-measurement scenario admits no post-quantum steering}

One natural generalisation {of} the traditional steering scenario is to allow Alice to make a sequence of measurements on her share of the system, such that each measurement has the potential to steer the state of Bob. For clarity in the presentation, we will focus on the case depicted in Fig.~\ref{fig:steebip} (b), where Alice makes two measurements, with general case of an arbitrary number of measurements following by induction{. 
I}n this scenario, Alice first chooses to make a measurement labelled by $x_1$, with outcomes labelled by $a_1$. Since we will need the post-measurement state, we must specify the Kraus operators $\{K_{a_1|x_1}\}_{a_1}$ such that $M_{a_1|x_1} = K^\dagger_{a_1|x_1}\,{K_{a_1|x_1}}$, and not just the POVM elements $M_{a_1|x_1}$. Alice then chooses to perform a second measurement, labelled by $x_2$, with outcome $a_2$, with corresponding POVM elements $\{M_{a_2|x_2}\}_{a_2}$ \cite{FTN}. In this case, {Bob's assemblage has} elements $\sigma_{a_1a_2|x_1x_2} = \mathrm{tr}_{A}\left\{ (K^\dagger_{a_1|x_1} \, M_{a_2|x_2} K_{a_1|x_1} \otimes \id_B) \rho  \right\}$ such that the probabilities for Alice's pair of measurement {outcomes} are $p(a_1,a_2|x_1,x_2) = \Tr \left\{ \sigma_{a_1a_2|x_1x_2} \right\}$. This leads us to the following definition: 

\begin{defn}\textbf{Quantum sequential assemblage{s.}}\\
An assemblage $\{\sigma_{a_1a_2|x_1x_2}\}_{a_1,a_2,x_1,x_2}$ has a quantum realisation in the sequential steering {scenario if} there exists a Hilbert space $\cH_A$ for Alice, Kraus operators $\{ K_{a_1|x_1} \}_{a_1,x_1}$ and POVMs $\{M_{a_2|x_2}\}_{a_2,x_2}$ for Alice, and a state $\rho$ in $\cH_A \otimes \cH_B$, such that 
\begin{align}\label{eq:theqseq}
\sigma_{a_1a_2|x_1x_2} &= \mathrm{tr}_{A}\left\{ (K^\dagger_{a_1|x_1} \, M_{a_2|x_2} K_{a_1|x_1} \otimes \id_B ) \rho  \right\}.
\end{align}
We denote this set of assemblages as $\mathcal{Q}_S$.
\end{defn}
We are then interested in thinking about this scenario more abstractly than from the point of view of quantum {theory: we} would like to think about the most general type of correlation that could arise in this sequential scenario, potentially beyond what quantum theory predicts. For {this, we} need to find the {corresponding} no-signalling conditions that apply. 
{Here, we} still have no-signalling from Alice to Bob, just as before, but now taking into account all the information on Alice's {side: that is,} $\sum_{a1,a_2} \sigma_{a_1a_2|x_1x_2}$ is independent of $x_1$ and $x_2$. {Moreover, we} have an additional no-signalling constraint, from the future to the past within Alice's lab: {the outcome statistics of the first measurement performed by Alice ($x_1$) should not depend on her choice of measurement for the future time step ($x_2$)}. This means that $\sum_{a_2} \sigma_{a_1 a_2|x_1 x_2}$ must be independent of $x_2$. We thus {naturally arrive at} the following definition for the most general non-signalling assemblage{.}
\begin{defn}\textbf{Non-signalling sequential assemblage{s.}}\\
An assemblage $\{\sigma_{a_1a_2|x_1x_2}\}_{a_1,a_2,x_1,x_2}$ is non-signalling in the {sequential steering scenario iff} $\sigma_{a_1 a_2 |x_1 x_2} \geq 0$ for all $a_1$,$a_2$,$x_1$,$x_2$,  and 
\begin{align}
&\sum_{a_1,a_2} \sigma_{a_1a_2|x_1x_2} &= \sum_{a_1,a_2} \sigma_{a_1a_2|x'_1x'_2} \quad \forall \, x_1, x'_1, x_2, x'_2,\\
&\sum_{a_2} \sigma_{a_1a_2|x_1x_2} &= \sum_{a_2} \sigma_{a_1a_2|x_1x'_2} \quad \forall \, a_1, x_1, x_2, x_2'.
\end{align}
We denote th{e} set of {such} assemblages as ${\mathcal{G}}_S$.
\end{defn}
{The choice of notation ${\mathcal{G}}_S$ comes from thinking of these assemblages as the most general ones compatible with the no-signalling constraints of the setup.}

The first question w{e c}onsider is the{n} whether every {non-signalling} sequential assemblage is also a quantum sequential {assemblage. 
As} a first main result, we prove that this is not the cas{e:}
\begin{thm}
\label{theo:seq}
The set of all non-signalling sequential assemblages coincides with the set of quantum sequential assemblages, ${\mathcal{Q}}_S = \mathcal{G}_S$. {That is, there} is no post-quantum steering in the bipartite sequential steering scenario.
\end{thm}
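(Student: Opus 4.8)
The plan is to establish the two inclusions $\mathcal{Q}_S \subseteq \mathcal{G}_S$ and $\mathcal{G}_S \subseteq \mathcal{Q}_S$ separately. The first is a routine consistency check and the second is the genuine content, an extension of the GHJW theorem \cite{gisin,HJW} to the two-round setting (the general case then following by induction on the number of measurements). For $\mathcal{Q}_S \subseteq \mathcal{G}_S$, I would start from an arbitrary quantum realisation \eqref{eq:theqseq} and verify the two non-signalling constraints directly. Summing over both $a_1$ and $a_2$ and using $\sum_{a_2} M_{a_2|x_2} = \id$ and then $\sum_{a_1} K^\dagger_{a_1|x_1} K_{a_1|x_1} = \id$ collapses the right-hand side to $\Tr_A\{\rho\} = \rho_B$, manifestly independent of $x_1$ and $x_2$; summing over $a_2$ alone collapses it to $\Tr_A\{(M_{a_1|x_1}\otimes\id_B)\rho\}$, which is independent of $x_2$. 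This gives exactly the two defining conditions of $\mathcal{G}_S$.

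The substance is the reverse inclusion. Given a non-signalling sequential assemblage $\{\sigma_{a_1a_2|x_1x_2}\}$, I would first pass to the first-round marginal $\tilde\sigma_{a_1|x_1} := \sum_{a_2} \sigma_{a_1a_2|x_1x_2}$, which the future-to-past no-signalling constraint guarantees is independent of $x_2$, and whose sum $\sigma_R := \sum_{a_1}\tilde\sigma_{a_1|x_1}$ is independent of $x_1$. Thus $\{\tilde\sigma_{a_1|x_1}\}$ is an ordinary non-signalling assemblage, and I would apply the GHJW construction of the previous section to it, obtaining the state $\ket{\psi}$ of \eqref{eq:state} and first-round POVM elements $M_{a_1|x_1}$ as in \eqref{eq:def_POVM}. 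To realise the \emph{sequential} structure I then need Kraus operators rather than mere POVM elements: I would pick any decomposition $K_{a_1|x_1} = U_{a_1|x_1}\sqrt{M_{a_1|x_1}}$ with $U_{a_1|x_1}$ an isometry on Alice's (possibly enlarged) space. The post-measurement pure state $\ket{\chi_{a_1|x_1}} := (K_{a_1|x_1}\otimes\id_B)\ket{\psi}$ then has reduced state $\Tr_A\{\ket{\chi_{a_1|x_1}}\bra{\chi_{a_1|x_1}}\} = \Tr_A\{(M_{a_1|x_1}\otimes\id_B)\ket{\psi}\bra{\psi}\} = \tilde\sigma_{a_1|x_1}$ on Bob, i.e. it is a purification of $\tilde\sigma_{a_1|x_1}$.

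The second round is handled by a second, conditional application of GHJW. For each fixed $(a_1,x_1)$, the conditional family $\{\sigma_{a_1a_2|x_1x_2}\}_{a_2,x_2}$ is itself a (subnormalised) non-signalling assemblage: its members are positive, and its reduced state $\sum_{a_2}\sigma_{a_1a_2|x_1x_2} = \tilde\sigma_{a_1|x_1}$ is independent of $x_2$ precisely by the future-to-past constraint — this is the one place where that constraint is essential. Running the GHJW recipe \eqref{eq:def_POVM} relative to the purification $\ket{\chi_{a_1|x_1}}$ (with the transpose taken in its Schmidt bases, and support issues controlled by Lemma~\ref{lem:sup} since $\sigma_{a_1a_2|x_1x_2}\le\tilde\sigma_{a_1|x_1}$ together with Lemma~\ref{lem:tel}) yields second-round POVMs $M^{(a_1,x_1)}_{a_2|x_2}$ reproducing $\sigma_{a_1a_2|x_1x_2}$ when Alice measures her half of $\ket{\chi_{a_1|x_1}}$. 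The only remaining wrinkle is that these depend on $(a_1,x_1)$, whereas the definition asks for a single $\{M_{a_2|x_2}\}$; I would resolve this exactly as in \cite{FTN}, by letting $K_{a_1|x_1}$ write a classical flag for $(a_1,x_1)$ into an ancilla and defining $M_{a_2|x_2} = \sum_{a_1,x_1}\ket{a_1 x_1}\bra{a_1 x_1}\otimes M^{(a_1,x_1)}_{a_2|x_2}$, which the second measurement reads coherently. The main obstacle is conceptual rather than computational: unlike in GHJW one may not choose a fresh entangled resource for the second measurement, so the argument must show that the state \emph{forced} upon Alice by her first measurement is already a valid purification on which the second GHJW step can run, and that the two no-signalling conditions are exactly strong enough (and no stronger) to make both rounds legitimate assemblages.
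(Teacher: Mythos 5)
Your proposal is correct and follows essentially the same route as the paper's own proof: a GHJW construction for the first round (the paper's explicit choice $K_{a_1|x_1}=\sqrt{\sigma_{a_1|x_1}^{\mathrm{T}}}\sqrt{\sigma_R^{-1}}$ is precisely an instance of your $U\sqrt{M_{a_1|x_1}}$ decomposition), followed by a second, conditional GHJW step relative to the post-measurement purification, with the support and link-product lemmas (Lemmas \ref{lem:sup} and \ref{lem:tel}) controlling exactly the same steps you flag, and the $(a_1,x_1)$-dependence of the second-round POVMs absorbed as in \cite{FTN}. The paper simply carries out these steps with explicit formulas and case analysis ($a=0$ versus $a>0$, $b=0$ versus $b>0$) where you argue at the level of the construction.
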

The basic idea in the proof of this theorem is to extend the standard GHJW construction working sequentially through the measurements of Alice in the order they are performed, using  the L\"uders rule to specify their Kraus operators. 
Before proceeding to the formal proof, we need three technical lemmata. The first lemma concerns a traditional bipartite steering scenario, for an assemblage of the form $\{\sigma_{a|x}\}_{a,x}$:
\begin{lemma}\label{lem:1}
{The support of $\sigma_{a|x}$ is equal to or contained in that of} $\sigma_R$, for all $a,x$.
\end{lemma}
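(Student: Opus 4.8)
The plan is to exploit positivity: since $\sigma_R$ is a sum of positive semidefinite operators, no vector annihilated by $\sigma_R$ can survive any single summand. Recalling that for a self-adjoint operator the support is the orthogonal complement of the kernel, it suffices to prove the reverse inclusion of kernels, namely $\ker \sigma_R \subseteq \ker \sigma_{a|x}$ for every $a$ and $x$; this immediately yields the claimed inclusion of supports.

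First I would fix an arbitrary $x$ and invoke the defining (no-signalling) relation $\sigma_R = \sum_{a'} \sigma_{a'|x}$, valid for that particular $x$ because $\sigma_R$ is independent of the input. Taking any $\ket{v}$ with $\sigma_R \ket{v} = 0$, I would sandwich with $\bra{v}$ to obtain
\[
0 = \bra{v}\sigma_R\ket{v} = \sum_{a'}\bra{v}\sigma_{a'|x}\ket{v}.
\]
Because each $\sigma_{a'|x} \geq 0$, every term $\bra{v}\sigma_{a'|x}\ket{v}$ is a nonnegative real number, so a vanishing sum of nonnegative reals forces each term to vanish; in particular $\bra{v}\sigma_{a|x}\ket{v}=0$ for the chosen $a$.

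The remaining step is to upgrade ``zero expectation value'' to ``annihilation''. Writing $\sigma_{a|x} = T^\dagger T$ with $T = \sqrt{\sigma_{a|x}}$, the condition $\bra{v}\sigma_{a|x}\ket{v} = \Norm{T\ket{v}}^2 = 0$ gives $T\ket{v}=0$ and hence $\sigma_{a|x}\ket{v} = T^\dagger T\ket{v} = 0$. Thus $\ket{v}\in\ker\sigma_{a|x}$, which establishes $\ker\sigma_R\subseteq\ker\sigma_{a|x}$ and therefore $\mathrm{supp}\,\sigma_{a|x}\subseteq\mathrm{supp}\,\sigma_R$ for all $a,x$.

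There is no serious obstacle here: the statement is a routine consequence of positive semidefiniteness, and it is exactly the kind of support fact later needed to make sense of the operators $\sqrt{\sigma_R^{-1}}\,\sigma_{a|x}^{\mathrm{T}}\,\sqrt{\sigma_R^{-1}}$ appearing in the GHJW construction. The only points that merit care are ensuring the completeness relation is applied for the same $x$ as the operator whose support is being bounded, and justifying the passage from a vanishing quadratic form to annihilation of the vector rather than waving at it; both are handled cleanly above.
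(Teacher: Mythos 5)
Your proof is correct and rests on the same core fact as the paper's: in the decomposition $\sigma_R = \sum_{a'}\sigma_{a'|x}$ every summand is positive semidefinite, so no vector in the kernel of $\sigma_R$ can have nonzero overlap with any single summand. The only difference is presentational: the paper argues by contradiction, compressing $\sigma_R - \sigma_{a|x}\geq 0$ with the projector $\id_{\overline{R}}$ onto the orthogonal complement of $\mathrm{supp}\,\sigma_R$, whereas you argue directly via the kernel inclusion $\ker\sigma_R\subseteq\ker\sigma_{a|x}$ and the square-root step $\bra{v}\sigma_{a|x}\ket{v}=0\Rightarrow\sigma_{a|x}\ket{v}=0$ — if anything a slightly cleaner route, since it makes explicit the passage from vanishing quadratic form to annihilation that the paper's compression argument leaves implicit.
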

\begin{proof}
Since $\sigma_R = \sum_\alpha \sigma_{\alpha|x}$, then $\sigma_R - \sigma_{a|x}= \sum_{\alpha \neq a} \sigma_{\alpha|x} \geq 0$.

Now let $\id_{\overline{R}}$ be the projector onto the {orthogonal complement to the support of} $\sigma_R$. Then, by definition $\id_{\overline{R}} \sigma_R \id_{\overline{R}} = 0$. Now let's assume that {the support of $\sigma_{a|x}$ is not contained in that of $\sigma_R$}. That is, $\id_{\overline{R}} \sigma_{a|x} \id_{\overline{R}} > 0$. Then,
\begin{align*}
\id_{\overline{R}} \left( \sigma_R - \sigma_{a|x} \right)  \id_{\overline{R}} = - \id_{\overline{R}} \sigma_{a|x} \id_{\overline{R}} < 0\,.
\end{align*} 
which is a contradiction. Hence, $\sigma_{a|x}$ must live in the support of $\sigma_R$. 
\end{proof}

The second lemma concerns a sequential bipartite steering scenario, for an assemblage $\{\sigma_{a_{1}a_{2}|x_{1}x_{2}}\}_{a_{1},a_{2},x_{1},x_{2}} \in {\mathcal{G}}_s$.
\begin{lemma} \label{lem:sup} 
\begin{enumerate}
\item[(i)] The marginal state $\sigma_{a_{1}|x_{1}}$ lives in the subspace supported by $\sigma_R$, for all $a,x$. 
\item[(ii)] {The support of $\sigma_{a_{1}a_{2}|x_{1}x_{2}}$ is equal to or contained in that of}  $\sigma_{a_{1}|x_{1}}$, for all $a_{1},a_{2},x_{1},x_{2}$. 
\end{enumerate}
\end{lemma}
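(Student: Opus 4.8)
The plan is to derive both parts as direct applications of Lemma~\ref{lem:1}, which establishes the general fact that whenever a positive operator is written as a sum of positive operators, each summand is supported within the total. The only preliminary work is to invoke the no-signalling constraints defining ${\mathcal{G}}_S$ to guarantee that the relevant marginals are well-defined, and to recognise which operator plays the role of the ``reduced state'' in each case.

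For part (i), I would first note that the second no-signalling constraint, $\sum_{a_2} \sigma_{a_1 a_2|x_1 x_2} = \sum_{a_2} \sigma_{a_1 a_2|x_1 x_2'}$, ensures that the marginal
\begin{align*}
\sigma_{a_1|x_1} := \sum_{a_2} \sigma_{a_1 a_2|x_1 x_2}
\end{align*}
is independent of $x_2$ and hence well-defined, while the first no-signalling constraint gives $\sum_{a_1} \sigma_{a_1|x_1} = \sigma_R$ independent of $x_1$. Thus $\{\sigma_{a_1|x_1}\}_{a_1,x_1}$ is itself a traditional no-signalling assemblage whose reduced state is $\sigma_R$, and Lemma~\ref{lem:1} applies verbatim, with $a_1,x_1$ in the roles of $a,x$, to conclude that the support of each $\sigma_{a_1|x_1}$ is contained in that of $\sigma_R$.

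For part (ii), I would fix $a_1,x_1,x_2$ and observe that the identity $\sigma_{a_1|x_1} = \sum_{a_2} \sigma_{a_1 a_2|x_1 x_2}$ exhibits the positive operator $\sigma_{a_1|x_1}$ as a sum of the positive summands $\{\sigma_{a_1 a_2|x_1 x_2}\}_{a_2}$. Applying Lemma~\ref{lem:1} a second time, now with $\sigma_{a_1|x_1}$ in the role of the reduced state and $a_2,x_2$ in the roles of $a,x$, yields that the support of each $\sigma_{a_1 a_2|x_1 x_2}$ is contained in that of $\sigma_{a_1|x_1}$.

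There is no substantive obstacle here: the entire content is the nested support structure, in which the support of $\sigma_{a_1 a_2|x_1 x_2}$ sits inside that of $\sigma_{a_1|x_1}$, which in turn sits inside that of $\sigma_R$, and this reduces to two reuses of the single positivity argument from Lemma~\ref{lem:1}. The one point that requires care is the bookkeeping with the no-signalling constraints, ensuring the marginals do not depend on the summed-out inputs, so that the support-containment statements are well-posed in the first place.
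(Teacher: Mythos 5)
Your proposal is correct and matches the paper's own proof: part (i) is obtained by reducing the marginals $\sigma_{a_1|x_1}$ to the traditional scenario of Lemma~\ref{lem:1}, and part (ii) by re-running the same positivity/support argument with $\sigma_{a_1 a_2|x_1 x_2}$ in the role of $\sigma_{a|x}$ and $\sigma_{a_1|x_1}$ in the role of $\sigma_R$, exactly as the paper does. Your explicit check that the no-signalling constraints make the marginals well-defined is a minor (and welcome) addition of bookkeeping, not a different route.
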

\begin{proof}
The proof of (i) follows similarly to that of Lemma \ref{lem:1} once we see that when considering only the marginal states the situation is equivalent to the traditional bipartite steering scenario.

The proof of (ii) follows a similar logic than that of Lemma \ref{lem:1}, by having $\sigma_{a_{1}a_{2}|x_{1}x_{2}}$ play the role of $\sigma_{a_{1}|x_{1}}$  and $\sigma_{a_{1}|x_{1}}$ that of $\sigma_{R}$. In the following we present the proof explicitly. For simplicity we will replace $a_{1}$ and $a_{2}$ with $a$ and $b$ respectively, and $x_{1}$ and $x_{2}$ with $x$ and $y$ respectively.

Since $\sigma_{a|x} = \sum_\beta \sigma_{a\beta |xy}$, then $\sigma_{a|x} - \sigma_{ab|xy}= \sum_{\beta \neq b} \sigma_{a\beta | xy} \geq 0$.
Now let $\id_{\overline{ax}}$ be the projector onto the space not supported by $\sigma_{a|x}$. Then, by definition $\id_{\overline{ax}} \sigma_{a|x} \id_{\overline{ax}} = 0$. Now let's assume that $\sigma_{ab|xy}$ has support outside the space where $\sigma_{a|x}$ lives. That is, $\id_{\overline{ax}} \sigma_{ab|xy} \id_{\overline{ax}} > 0$. Then,
\begin{align*}
\id_{\overline{ax}} \left( \sigma_{a|x} - \sigma_{ab|xy} \right)  \id_{\overline{ax}} = - \id_{\overline{ax}} \sigma_{ab|xy} \id_{\overline{ax}} < 0\, ,
\end{align*} 
which is a contradiction. Hence, $\sigma_{ab|xy}$ must live in the support of $\sigma_{a|x}$. 
\end{proof}

The third, and final, lemma is a special instance of the link product \cite{chlink} in the Choi representation, and we make it explicit for simplicity in the before mentioned proofs. 
\begin{lemma}\label{lem:tel}
\begin{align*}
\mathrm{tr}_1 \left\{ (A^{\mathrm{T}} \otimes \id_2) \ket{\psi}\bra{\psi}  \right\} = A \,,
\end{align*}
where $\ket{\psi} = \sum_k \ket{k} \otimes \ket{k}$, and $\{\ket{k}\}_k$ spans the space where $A$ has support on.
\end{lemma}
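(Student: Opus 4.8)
The plan is to verify the identity by a direct computation in the basis $\{\ket{k}\}_k$ spanning the support of $A$. First I would expand the (unnormalised) maximally entangled vector as $\ket{\psi}\bra{\psi} = \sum_{k,l} \ket{k}\bra{l} \otimes \ket{k}\bra{l}$, where all sums run over the spanning index set. Applying $A^{\mathrm{T}} \otimes \id_2$ to the first tensor factor then gives $\sum_{k,l} \left( A^{\mathrm{T}} \ket{k}\bra{l} \right) \otimes \ket{k}\bra{l}$.

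Next I would take the partial trace over the first system, using the elementary rule $\mathrm{tr}_1(M \otimes N) = \mathrm{tr}(M)\, N$ applied termwise. This yields $\sum_{k,l} \mathrm{tr}\!\left( A^{\mathrm{T}} \ket{k}\bra{l} \right)\, \ket{k}\bra{l} = \sum_{k,l} \bra{l} A^{\mathrm{T}} \ket{k}\, \ket{k}\bra{l}$, where the second equality is just cyclicity of the scalar trace $\mathrm{tr}\!\left( A^{\mathrm{T}} \ket{k}\bra{l} \right) = \bra{l} A^{\mathrm{T}} \ket{k}$.

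The final step is to recognise the scalar coefficient as a matrix element of $A$: since $\bra{l} A^{\mathrm{T}} \ket{k} = \bra{k} A \ket{l}$ by the definition of the transpose in this basis, the right-hand side becomes $\sum_{k,l} \bra{k} A \ket{l}\, \ket{k}\bra{l}$, which is precisely the resolution of $A$ in the basis $\{\ket{k}\}$. Here the hypothesis that $\{\ket{k}\}_k$ spans the support of $A$ is exactly what guarantees that this double sum reconstructs $A$ in full, rather than only its compression onto a smaller subspace; any part of $A$ outside $\mathrm{span}\{\ket{k}\}$ would be lost, so the support assumption cannot be dropped.

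There is no substantive obstacle here: this is the standard \emph{transpose trick} (or ``ricochet'' identity) and is a special case of the link-product formula in the Choi representation, as noted in the statement. The only point requiring care is notational bookkeeping, namely that the transpose $A^{\mathrm{T}}$ is taken in the same basis $\{\ket{k}\}$ in which $\ket{\psi}$ is written, and that $\id_2$ denotes the identity on $\mathrm{span}\{\ket{k}\}$ so that the partial trace is taken over precisely that subspace. With these conventions fixed, the three displayed manipulations above constitute the complete proof.
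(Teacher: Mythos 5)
Your proof is correct and follows essentially the same route as the paper's: a direct expansion of $\ket{\psi}\bra{\psi}=\sum_{k,l}\ket{k}\bra{l}\otimes\ket{k}\bra{l}$, termwise partial trace, and the identification $\bra{l}A^{\mathrm{T}}\ket{k}=\bra{k}A\ket{l}$ to reconstruct $A$ on its support. The only cosmetic difference is that the paper traces against a full orthonormal basis $\{\ket{j}\}_{j=1}^{d}$ of the ambient space and collapses $\langle l\vert j\rangle$ explicitly, whereas you invoke $\mathrm{tr}_1(M\otimes N)=\mathrm{tr}(M)\,N$ termwise; these are the same computation.
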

\begin{proof}
Let $d$ be the dimension of the Hilbert space where $A$ lives, and $n$ the dimension of the space spanned by $\{\ket{k}\}_k$. 
\begin{align*}
\mathrm{tr}_1 \left\{ (A^{\mathrm{T}} \otimes \id_2) \ket{\psi}\bra{\psi}  \right\} &= \sum_{j=1}^d \sum_{k,l=1}^n \bra{j} A^{\mathrm{T}} \ket{k} \langle l \vert j \rangle \ket{k}\bra{l}
= \sum_{j=1}^n \sum_{k=1}^n A^{\mathrm{T}}_{jk} \ket{k}\bra{j} 
= \sum_{j,k=1}^d  A_{kj} \ket{k}\bra{j} = A\,.
\end{align*}
\end{proof}

We are now in a position to prove Theorem \ref{theo:seq}.
\\
\\
\textit{Proof of Theorem 1.} Let us start from an assemblage $\{ \sigma_{ab|xy} \} \in {\mathcal{G}}_s$, where for simplicity we change slightly the notation and denote by $(b,y)$ the labels of the second time step, and $(a,x)$ those of the first time step. Now define the following: 
\begin{align} \label{eq:1}
\ket{\psi} &= \sum_{r \in S_R} \sqrt{\mu_r} \ket{r} \otimes \ket{r} \,,\\ \label{eq:2}
K_{a|x} &= \begin{cases} \sqrt{\sigma_{a|x}^{\mathrm{T}}} \, \sqrt{\sigma_R^{-1}} \quad \text{if} \quad a>0\\
\sqrt{\sigma_{a|x}^{\mathrm{T}}} \, \sqrt{\sigma_R^{-1}} + \id - \id_{\overline{R}} \quad \text{if} \quad a=0
\end{cases} \,, \\ \label{eq:3}
M^{a,x}_{b|y} &= \begin{cases} \left(\sqrt{\sigma_{a|x}^{\mathrm{T}}}\right)^{-1} \, \sigma_{ab|xy}^{\mathrm{T}} \, \left(\sqrt{\sigma_{a|x}^{\mathrm{T}}}\right)^{-1} \quad \text{if} \quad b>0 \\
 \left(\sqrt{\sigma_{a|x}^{\mathrm{T}}}\right)^{-1} \, \sigma_{ab|xy}^{\mathrm{T}} \, \left(\sqrt{\sigma_{a|x}^{\mathrm{T}}}\right)^{-1} + \id - \id_{\overline{ax}} \quad \text{if}  \quad
b=0
\end{cases} \,,
\end{align}
where: 
\begin{itemize}
\item $\sigma_R = \sum_{r \in S_R} \mu_r \ket{r}\bra{r}$, 
with $\{\ket{r}\}_r$ an orthonormal basis, and $S_R$ the set of those basis' vectors that $\sigma_R$ has support on.
\item $\sqrt{\sigma_R^{-1}} = \sum_{r \in S_R} \frac{1}{\sqrt{\mu_r}} \ket{r}\bra{r}$.
\item $\sigma_{a|x} = \sum_{k \in S_{a,x}} \nu^{a,x}_k \ket{k}_{a,x}\bra{k}_{a,x}$ ,
with $\{\ket{k}_{a,x}\}_k$ an orthonormal basis, and $S_{a,x}$ the set of those basis' vectors where $\sigma_{a|x}$ has support on.
\item $\sqrt{\sigma_{a|x}^{\mathrm{T}}} = \sum_{k \in S_{a,x}} \sqrt{\nu^{a,x}_k} \ket{k}_{a,x}\bra{k}_{a,x}$\, , $\left(\sqrt{\sigma_{a|x}^{\mathrm{T}}}\right)^{-1} = \sum_{k \in S_{a,x}} \frac{1}{\sqrt{\nu^{a,x}_k}} \ket{k}_{a,x}\bra{k}_{a,x}$ .
\item $ \id_{\overline{R}} $ the projector onto the orthogonal complement of the support of $\sigma_R$,
\item $ \id_{\overline{ax}} $ the projector onto the orthogonal complement of the support of $\sigma_{a|x}$,
\end{itemize}

Now we see how the state, measurement operators and Kraus operators reproduce the assemblage and give well defined measurements for each time step. 

Let us begin showing that the Kraus operators $\{K_{a|x}\}$ correspond to well defined measurements for the first time step. First, $\sum_a K_{a|x}^\dagger K_{a|x} \geq 0$ by definition. Second,
\begin{align} \nonumber
\sum_a K_{a|x}^\dagger K_{a|x} &= \sum_{a \neq 0}  \sqrt{\sigma_R^{-1}} \, \sigma_{a|x}^{\mathrm{T}} \, \sqrt{\sigma_R^{-1}} + \left(  \sqrt{\sigma_R^{-1}} \, \sqrt{ \sigma_{0|x}^{\mathrm{T}}} + \id_{\overline{R}} \right) \left( \sqrt{ \sigma_{0|x}^{\mathrm{T}}} \, \sqrt{\sigma_R^{-1}} + \id_{\overline{R}} \right) \\ \nonumber
&= \sum_{a \neq 0}  \sqrt{\sigma_R^{-1}} \, \sigma_{a|x}^{\mathrm{T}} \, \sqrt{\sigma_R^{-1}} + \sqrt{\sigma_R^{-1}} \, \sigma_{0|x}^{\mathrm{T}} \, \sqrt{\sigma_R^{-1}} +  \id_{\overline{R}} \\ \nonumber
&= \sqrt{\sigma_R^{-1}} \, \sigma_R \, \sqrt{\sigma_R^{-1}} +  \id_{\overline{R}} = \id_R +  \id_{\overline{R}}\\ \nonumber
&= \id\,,
\end{align}
where we used the fact that $ \id_{\overline{R}} \, \sigma_{a|x}^{\mathrm{T}} = 0 = \sigma_{a|x}^{\mathrm{T}} \,  \id_{\overline{R}} $ according to Lemma \ref{lem:sup}. These two properties show that the Kraus operators correspond indeed to a well defined measurement. 

Now let us show that the operators $\{M_{b|y}^{a,x}\}$ define a measurement for each $a,x,y$ in the second time step. First, notice that each $M_{b|y}^{a,x}$ is positive semidefinite by definition, hence we only need to check that they are properly normalised. This is shown as follows: 
\begin{align} \nonumber
\sum_b M_{b|y}^{a,x} &= \sum_b \left(\sqrt{\sigma_{a|x}^{\mathrm{T}}}\right)^{-1} \, \sigma_{ab|xy}^{\mathrm{T}} \, \left(\sqrt{\sigma_{a|x}^{\mathrm{T}}}\right)^{-1}  +  + \id - \id_{\overline{ax}} \\ \nonumber
&= \left(\sqrt{\sigma_{a|x}^{\mathrm{T}}}\right)^{-1} \, \sigma_{a|x}^{\mathrm{T}} \, \left(\sqrt{\sigma_{a|x}^{\mathrm{T}}}\right)^{-1} +  + \id - \id_{\overline{ax}} \\ \nonumber
&= \sum_{k,k',k" \in S_{a,x}} \frac{\nu_{k'}^{a,x}}{\sqrt{\nu_{k}^{a,x}}\sqrt{\nu_{k"}^{a,x}}} \ket{k}_{a,x}\bra{k}_{a,x} \ket{k'}_{a,x}\bra{k'}_{a,x} \ket{k"}_{a,x}\bra{k"}_{a,x}  + \id - \id_{\overline{ax}}  \\ \nonumber
&= \sum_{k \in S_{a,x}} \ket{k}_{a,x}\bra{k}_{a,x}  + \id - \id_{\overline{ax}}  \\ \nonumber
&= \id\, .
\end{align}

Finally,  let us show that these state and measurements recover the assemblage. Here we will use some technical lemmas presented earlier. Let us begin with the case of $a>0$ and $b>0$: 
\begin{align*} \nonumber
&\mathrm{tr}_{A}\left\{ K^\dagger_{a|x} \,  M^{ax}_{b|y} K_{a|x} \otimes \id_B \, \rho  \right\}  \\ &= \mathrm{tr}_{A}\left\{ \left( \sqrt{\sigma_R^{-1}} \, \sqrt{\sigma_{a|x}^{\mathrm{T}}} \, \left(\sqrt{\sigma_{a|x}^{\mathrm{T}}}\right)^{-1} \, \sigma_{ab|xy}^{\mathrm{T}} \, \left(\sqrt{\sigma_{a|x}^{\mathrm{T}}}\right)^{-1} \sqrt{\sigma_{a|x}^{\mathrm{T}}} \, \sqrt{\sigma_R^{-1}} \otimes \id_B \right) \, \rho  \right\} \\
&= \mathrm{tr}_{A}\left\{ \left( \sqrt{\sigma_{a|x}^{\mathrm{T}}} \, \left(\sqrt{\sigma_{a|x}^{\mathrm{T}}}\right)^{-1} \, \sigma_{ab|xy}^{\mathrm{T}} \, \left(\sqrt{\sigma_{a|x}^{\mathrm{T}}}\right)^{-1} \sqrt{\sigma_{a|x}^{\mathrm{T}}} \otimes \id_B \right) \sqrt{\sigma_R^{-1}} \, \rho \,  \sqrt{\sigma_R^{-1}}  \right\}  \\
&= \mathrm{tr}_{A}\left\{ \left( \sum_{k,k' \in S_{a,x}} \ket{k}_{a,x}\bra{k}_{a,x} \, \sigma_{ab|xy}^{\mathrm{T}} \, \ket{k'}_{a,x}\bra{k'}_{a,x} \otimes \id_B \right) \sqrt{\sigma_R^{-1}} \, \rho \,  \sqrt{\sigma_R^{-1}}  \right\}  \\
&= \mathrm{tr}_{A}\left\{ \left( \id_{a,x} \, \sigma_{ab|xy}^{\mathrm{T}} \, \id_{a,x} \otimes \id_B \right) \left(\sum_{r,r' \in S_R}  \ket{rr} \bra{r'r'} \right) \right\} \\ \\
&= \mathrm{tr}_{A}\left\{ \left(\sigma_{ab|xy}^{\mathrm{T}}  \otimes \id_B \right) \left(\sum_{r,r' \in S_R}  \ket{rr} \bra{r'r'} \right) \right\} \\ 
&=  \sigma_{ab|xy} \,,
\end{align*}
where we use the fact that Lemma \ref{lem:sup} implies $\id_{a,x} \, \sigma_{ab|xy}^{\mathrm{T}} \, \id_{a,x}  = \sigma_{ab|xy}^{\mathrm{T}}$, and Lemmas \ref{lem:sup} and \ref{lem:tel} imply the last step. 

Now consider the case where $a>0$ and $b=0$:
\begin{align*} \nonumber
&\mathrm{tr}_{A}\left\{ K^\dagger_{a|x} \, \left( M^{ax}_{b|y} + \id_{\overline{ax}} \right) K_{a|x} \otimes \id_B \, \rho  \right\}  \\ 
&= \mathrm{tr}_{A}\left\{ K^\dagger_{a|x} \, M^{ax}_{b|y} \, K_{a|x} \otimes \id_B \, \rho  \right\} + \mathrm{tr}_{A}\left\{ K^\dagger_{a|x} \, \id_{\overline{ax}} \, K_{a|x} \otimes \id_B \, \rho  \right\}  \\ 
&= \mathrm{tr}_{A}\left\{ K^\dagger_{a|x} \, M^{ax}_{b|y} \, K_{a|x} \otimes \id_B \, \rho  \right\} \\
&= \mathrm{tr}_{A}\left\{ \left( \sqrt{\sigma_R^{-1}} \, \sqrt{\sigma_{a|x}^{\mathrm{T}}} \, \left(\sqrt{\sigma_{a|x}^{\mathrm{T}}}\right)^{-1} \, \sigma_{ab|xy}^{\mathrm{T}} \, \left(\sqrt{\sigma_{a|x}^{\mathrm{T}}}\right)^{-1} \sqrt{\sigma_{a|x}^{\mathrm{T}}} \, \sqrt{\sigma_R^{-1}} \otimes \id_B \right) \, \rho  \right\} \\
&= \mathrm{tr}_{A}\left\{ \left( \sqrt{\sigma_{a|x}^{\mathrm{T}}} \, \left(\sqrt{\sigma_{a|x}^{\mathrm{T}}}\right)^{-1} \, \sigma_{ab|xy}^{\mathrm{T}} \, \left(\sqrt{\sigma_{a|x}^{\mathrm{T}}}\right)^{-1} \sqrt{\sigma_{a|x}^{\mathrm{T}}} \otimes \id_B \right) \sqrt{\sigma_R^{-1}} \, \rho \,  \sqrt{\sigma_R^{-1}}  \right\}  \\
&= \mathrm{tr}_{A}\left\{ \left( \sum_{k,k' \in S_{a,x}} \ket{k}_{a,x}\bra{k}_{a,x} \, \sigma_{ab|xy}^{\mathrm{T}} \, \ket{k'}_{a,x}\bra{k'}_{a,x} \otimes \id_B \right) \sqrt{\sigma_R^{-1}} \, \rho \,  \sqrt{\sigma_R^{-1}}  \right\}  \\
&= \mathrm{tr}_{A}\left\{ \left( \id_{a,x} \, \sigma_{ab|xy}^{\mathrm{T}} \, \id_{a,x} \otimes \id_B \right) \left(\sum_{r,r' \in S_R}  \ket{rr} \bra{r'r'} \right) \right\} \\ \\
&= \mathrm{tr}_{A}\left\{ \left(\sigma_{ab|xy}^{\mathrm{T}}  \otimes \id_B \right) \left(\sum_{r,r' \in S_R}  \ket{rr} \bra{r'r'} \right) \right\} \\ 
&=  \sigma_{ab|xy} \,,
\end{align*}
 where we use similar techniques to those in the previous case, together with the fact that Lemma \ref{lem:sup} implies that $K^\dagger_{a|x} \, \id_{\overline{ax}} \, K_{a|x} = 0$. 
 
Consider now the case $a=0$ and $b>0$. Here, the operator inside the trace is: 
 \begin{align*}
 \left( K^\dagger_{a|x} + \id_{\overline{R}} \right) \, M^{ax}_{b|y} \, \left( K_{a|x} + \id_{\overline{R}} \right) &= K^\dagger_{a|x} \, M^{ax}_{b|y} \, K_{a|x} \,,
 \end{align*}
 since Lemma \ref{lem:sup} guarantees that $\id_{\overline{R}} \, M^{ax}_{b|y} = 0 = M^{ax}_{b|y} \, \id_{\overline{R}}$. 
Hence,  
\begin{align*} \nonumber
&\mathrm{tr}_{A}\left\{  \left( K^\dagger_{a|x} + \id_{\overline{R}} \right) \, M^{ax}_{b|y} \, \left( K_{a|x} + \id_{\overline{R}} \right) \otimes \id_B \, \rho  \right\}  \\ 
&= \mathrm{tr}_{A}\left\{ K^\dagger_{a|x} \, M^{ax}_{b|y} \, K_{a|x} \otimes \id_B \, \rho  \right\} \\ 
&= \mathrm{tr}_{A}\left\{ \left( \sqrt{\sigma_R^{-1}} \, \sqrt{\sigma_{a|x}^{\mathrm{T}}} \, \left(\sqrt{\sigma_{a|x}^{\mathrm{T}}}\right)^{-1} \, \sigma_{ab|xy}^{\mathrm{T}} \, \left(\sqrt{\sigma_{a|x}^{\mathrm{T}}}\right)^{-1} \sqrt{\sigma_{a|x}^{\mathrm{T}}} \, \sqrt{\sigma_R^{-1}} \otimes \id_B \right) \, \rho  \right\} \\
&= \mathrm{tr}_{A}\left\{ \left( \sqrt{\sigma_{a|x}^{\mathrm{T}}} \, \left(\sqrt{\sigma_{a|x}^{\mathrm{T}}}\right)^{-1} \, \sigma_{ab|xy}^{\mathrm{T}} \, \left(\sqrt{\sigma_{a|x}^{\mathrm{T}}}\right)^{-1} \sqrt{\sigma_{a|x}^{\mathrm{T}}} \otimes \id_B \right) \sqrt{\sigma_R^{-1}} \, \rho \,  \sqrt{\sigma_R^{-1}}  \right\}  \\
&= \mathrm{tr}_{A}\left\{ \left( \sum_{k,k' \in S_{a,x}} \ket{k}_{a,x}\bra{k}_{a,x} \, \sigma_{ab|xy}^{\mathrm{T}} \, \ket{k'}_{a,x}\bra{k'}_{a,x} \otimes \id_B \right) \sqrt{\sigma_R^{-1}} \, \rho \,  \sqrt{\sigma_R^{-1}}  \right\}  \\
&= \mathrm{tr}_{A}\left\{ \left( \id_{a,x} \, \sigma_{ab|xy}^{\mathrm{T}} \, \id_{a,x} \otimes \id_B \right) \left(\sum_{r,r' \in S_R}  \ket{rr} \bra{r'r'} \right) \right\} \\ \\
&= \mathrm{tr}_{A}\left\{ \left(\sigma_{ab|xy}^{\mathrm{T}}  \otimes \id_B \right) \left(\sum_{r,r' \in S_R}  \ket{rr} \bra{r'r'} \right) \right\} \\ 
&=  \sigma_{ab|xy} \, .
\end{align*}

Finally, consider the case $a=0$ and $b=0$. Here
\begin{align*}
 \left( K^\dagger_{a|x} + \id_{\overline{R}} \right) \,  \left( M^{ax}_{b|y} + \id_{\overline{ax}} \right) \, \left( K_{a|x} + \id_{\overline{R}} \right) &=  K^\dagger_{a|x} \, M^{ax}_{b|y} \, K_{a|x} +  \id_{\overline{R}} \, \id_{\overline{ax}} \,  \id_{\overline{R}}\,,
\end{align*}
where we used the fact that Lemma \ref{lem:sup} implies:
\begin{align*}
 \id_{\overline{R}} \, M^{ax}_{b|y} = 0 = M^{ax}_{b|y} \, \id_{\overline{R}} \quad \textrm{and} \quad K^\dagger_{a|x} \, \id_{\overline{ax}} = 0 = \id_{\overline{ax}} \,  K_{a|x}.
\end{align*}

Hence, 
\begin{align*} \nonumber
&\mathrm{tr}_{A}\left\{  \left( K^\dagger_{a|x} + \id_{\overline{R}} \right)  \,  \left( M^{ax}_{b|y} + \id_{\overline{ax}} \right) \, \left( K_{a|x} + \id_{\overline{R}} \right) \otimes \id_B \, \rho  \right\}  \\ 
&= \mathrm{tr}_{A}\left\{ \left( K^\dagger_{a|x} \, M^{ax}_{b|y} \, K_{a|x} + \id_{\overline{R}} \, \id_{\overline{ax}} \,  \id_{\overline{R}} \right) \otimes \id_B \, \rho  \right\} \\ 
&= \mathrm{tr}_{A}\left\{ K^\dagger_{a|x} \, M^{ax}_{b|y} \, K_{a|x} \otimes \id_B \, \rho  \right\} + \mathrm{tr}_{A}\left\{  \id_{\overline{R}} \, \id_{\overline{ax}} \,  \id_{\overline{R}} \otimes \id_B \, \rho  \right\} \\ 
&= \mathrm{tr}_{A}\left\{ K^\dagger_{a|x} \, M^{ax}_{b|y} \, K_{a|x} \otimes \id_B \, \rho  \right\} +  \mathrm{tr}_{A}\left\{  \id_{\overline{ax}} \otimes \id_B \, (\id_{\overline{R}} \otimes \id_B) \rho (\id_{\overline{R}} \otimes \id_B)  \right\} \\ 
&= \mathrm{tr}_{A}\left\{ K^\dagger_{a|x} \, M^{ax}_{b|y} \, K_{a|x} \otimes \id_B \, \rho  \right\} \\ 
&= \mathrm{tr}_{A}\left\{ \left( \sqrt{\sigma_R^{-1}} \, \sqrt{\sigma_{a|x}^{\mathrm{T}}} \, \left(\sqrt{\sigma_{a|x}^{\mathrm{T}}}\right)^{-1} \, \sigma_{ab|xy}^{\mathrm{T}} \, \left(\sqrt{\sigma_{a|x}^{\mathrm{T}}}\right)^{-1} \sqrt{\sigma_{a|x}^{\mathrm{T}}} \, \sqrt{\sigma_R^{-1}} \otimes \id_B \right) \, \rho  \right\} \\
&= \mathrm{tr}_{A}\left\{ \left( \sqrt{\sigma_{a|x}^{\mathrm{T}}} \, \left(\sqrt{\sigma_{a|x}^{\mathrm{T}}}\right)^{-1} \, \sigma_{ab|xy}^{\mathrm{T}} \, \left(\sqrt{\sigma_{a|x}^{\mathrm{T}}}\right)^{-1} \sqrt{\sigma_{a|x}^{\mathrm{T}}} \otimes \id_B \right) \sqrt{\sigma_R^{-1}} \, \rho \,  \sqrt{\sigma_R^{-1}}  \right\}  \\
&= \mathrm{tr}_{A}\left\{ \left( \sum_{k,k' \in S_{a,x}} \ket{k}_{a,x}\bra{k}_{a,x} \, \sigma_{ab|xy}^{\mathrm{T}} \, \ket{k'}_{a,x}\bra{k'}_{a,x} \otimes \id_B \right) \sqrt{\sigma_R^{-1}} \, \rho \,  \sqrt{\sigma_R^{-1}}  \right\}  \\
&= \mathrm{tr}_{A}\left\{ \left( \id_{a,x} \, \sigma_{ab|xy}^{\mathrm{T}} \, \id_{a,x} \otimes \id_B \right) \left(\sum_{r,r' \in S_R}  \ket{rr} \bra{r'r'} \right) \right\} \\ \\
&= \mathrm{tr}_{A}\left\{ \left(\sigma_{ab|xy}^{\mathrm{T}}  \otimes \id_B \right) \left(\sum_{r,r' \in S_R}  \ket{rr} \bra{r'r'} \right) \right\} \\ 
&=  \sigma_{ab|xy} \,,
\end{align*} 
where we used similar techniques to those in the previous cases, together with the fact that $\id_{\overline{R}} \otimes \id_B \ket{\psi} = 0$.

We see then that the state defined in Eq.~\eqref{eq:1}, together with the measurement operators from Eqs.~\eqref{eq:2} and \eqref{eq:3} are indeed well defined and reproduce the assemblage. Hence, it follows that the GHJW theorem generalises to these sequential scenarios. $\square$

\section{Steering inequality as a certificate of post-quantumness}

In this section, we present a method to certify that an assemblage is post-quantum. The method relies on defining a steering inequality, and showing that the candidate post-quantum assemblage can violate this inequality -- i.e., achieve a value that no quantum assemblages can reach. This method does not necessary always succeed at detecting post-quantum assemblages, since it is the first step in a more complex technique developed in a forthcoming work \cite{ourhier}.

A general linear steering functional in the Bob-with-input scenario is of the form:
\begin{align}\label{eq:genineq}
 \beta(\{\sigma_{a|xy}\}) = \Tr\left\{ \sum_{a,x,y} F_{axy} \, \sigma_{a|xy}   \right\}\,,
\end{align}
where $\{F_{axy}\}$ is a set of Hermitian operators.

\subsection{Local Hidden State models}

Traditionally, a steering inequality is used to test whether an assemblage admits a classical model. Indeed, when the value to which an assemblage evaluates the functional of Eq.~\eqref{eq:genineq} goes below the smallest value (or above the largest value) that can be obtained with ``unsteerable'' assemblages, then the assemblage is said to ``violate'' the inequality. Therefore, a key ingredient in constructing a steering inequality from the steering functional of Eq.~\eqref{eq:genineq} is to define the set of ``unsteerable'', a.k.a.~\textit{Local Hidden States} (LHS), assemblages. An LHS assemblage is one that may be prepared by Alice and Bob when their actions can only be correlated via a classical random variable, i.e., when they share a source of classical randomness. As discussed in Ref.~\cite{Schmid19}, an LHS model is one that can be freely prepared by using \textit{Local Operations and Shared Randomness} (a.k.a.~LOSR operations). In the context of the Bob-with-input scenario, then, unsteerable assemblages correspond to the following: 
\begin{defn}\textbf{LHS assemblages in the Bob-with-input scenario.}\\
An assemblage $\{\sigma_{a|xy}\}_{a,x,y}$ in the Bob-with-input scenario admits a Local Hidden State (LHS) model, if there exists a random variable $\lambda \in \Lambda$, a set $\{\rho_{\lambda,y}\}_{\lambda,y}$ of normalised states for Bob, and a conditional probability distribution $p(a|x,\lambda)$, such that:
\begin{align*}
\sigma_{a|xy} = \sum_\lambda \, p(\lambda) \, p(a|x,\lambda) \, \rho_{\lambda,y}\,,
\end{align*}
where $p(\lambda)$ is a normalised probability distribution over the variable $\lambda$.
\end{defn}
Notice that, similar to the traditional steering scenario \cite{PaulSDP}, checking whether an assemblage admits an LHS model is hence an instance of a semi-definite program (SDP). 

Given a steering functional as per Eq.~\eqref{eq:genineq}, then a steering inequality is defined as:
\begin{align}\label{eq:genineqbound}
\beta(\{\sigma_{a|xy}\}) \geq \beta^{\mathrm{LHS}}\,,
\end{align}
where 
\begin{align}
\beta^{\mathrm{LHS}} := \min_{\{\sigma_{a|xy}\} \, \text{is LHS}} \beta(\{\sigma_{a|xy}\})\,.
\end{align}
Therefore, should an assemblage $\{\sigma^\prime_{a|xy}\}$ yield $\beta(\{\sigma^\prime_{a|xy}\}) < \beta^{\mathrm{LHS}}$, the assemblage would be steerable. 

\subsection{Quantum bounds: necessity for numerical estimation}

Notice that up to here we have described a steering inequality as a test to certify whether an assemblage is steerable or not, rather than whether it is post-quantum or not. Should one wish to certify post-quantumness of an assemblage, then one should focus on the quantum-steering inequality:

\begin{align}
\beta^{\mathrm{Q}} \leq \Tr\left\{ \sum_{a,x,y} F_{axy} \, \sigma_{a|xy}   \right\}\,,
\end{align}
where 
\begin{align}
\beta^{\mathrm{Q}} := \min_{\{\sigma_{a|xy}\} \, \text{is quantum}} \Tr\left\{ \sum_{a,x,y} F_{axy} \, \sigma_{a|xy}   \right\}\,.
\end{align}

Computing the minimum quantum value of a steering or Bell functional is known to be a complex computational task, which yields a definite value only for certain inequalities. In the case of Bell scenarios, there are numerical tools to lower bound the value of $\beta^{\mathrm{Q}}$ \cite{NPA}, which may be adapted to traditional steering scenarios \cite{CavSkrAgu+15,pqs}. In the scenario we are considering here, however, no such computational tools exist in the literature. In the following, we will discuss how one may develop a technique, similar in spirit to the so-called 1+AB level of the  Navascu\'es-Pironio-Ac\'in (NPA) hierarchy  \cite{NPA}, which will allow us to lower (or upper) bound the value of $\beta^{\mathrm{Q}} $ for any steering inequality, in particular the steering inequality relevant to this manuscript. 

\subsection{$\aQ$ bounds: numerical lower bound for $\beta^{\mathrm{Q}}$}

In this section we present the details of a relaxation of the set of quantum assemblages in the Bob-with-Input scenario, which we call $\aQ$ assemblages. 
This relaxation will allow us to compute lower bounds on the quantum bound of any steering functional in the Bob-with-Input scenario.

For simplicity in the presentation, let us consider the case where Alice performs dichotomic measurements. The case for arbitrary number of outcomes will be properly discussed within the more general technique of our upcoming work \cite{ourhier}. Similarly to the almost-quantum assemblages of Ref.~\cite{pqs}, consider a moment matrix $\Gamma$ whose rows and columns are labelled by the `words' from the following set: 
\begin{align}
\mathcal{S} := \{\emptyset\} \cup \{ x \}_{x=1:m_A} \cup \{ y \}_{y=1:m_B} \cup \{ xy \}_{x=1:m_A\,,\, y=1:m_B} \,,
\end{align}
where $m_A$ ($m_B$) is the cardinality of the set of inputs of Alice (Bob). The elements of $\Gamma$ are taken to be square complex matrices of dimension given by that of Bob's Hilbert space.

The moment matrix $\Gamma$ is asked to fulfill the following properties: 
\begin{defn} \textbf{Moment matrix} $\boldsymbol{\Gamma}$ .-- \label{def:MM}\\
A matrix $\Gamma$ (whose rows and columns are labelled by the words in the set $\mathcal{S}$) is a moment matrix if it satisfies the following properties: 
\begin{align}
& \Gamma \geq 0 \,,\\
& \Gamma(\emptyset,\emptyset) = \id_d \label{c8-2}\,,\\
& \Gamma(v,v) = \Gamma(\emptyset,v) \quad \forall \, v \in \mathcal{S} \label{c8-3}\,,\\
& \Gamma(\emptyset,xy) =  \Gamma(x,xy) = \Gamma(y,xy) = \Gamma(x,y) \quad \forall \, x,y \label{c8-4} \,,\\
& \Gamma(x,x'y) = \Gamma(xy,x'y) = \Gamma(xy,x') \quad \forall \, x,x',y \label{c8-5} \,,\\
& \Gamma(y,xy') = \Gamma(xy,xy') = \Gamma(xy,y') \quad \forall \, x,y,y' \label{c8-6}\,,\\
& \Gamma(x,x') \propto \id_d \quad \forall \, x,x' \label{c8-7}\,,
\end{align}
where $d$ is the dimension of Bob's Hilbert space. 
\end{defn}

We can now define the set of $\aQ$ assemblages:

\begin{defn} \textbf{Set of $\boldsymbol{\aQ}$ assemblages.--} \label{def:AQ}\\
An assemblage $\{\sigma_{a|xy}\}$ in the Bob-with-Input scenario is a ${\aQ}$ assemblage iff there exists a moment matrix $\Gamma$ such that: 
\begin{align}
& \Gamma(\emptyset,x) = \Tr\left( \sigma_{0|xy} \right) \, \id_d \quad \forall \, x \label{c9-1}\,,\\
& \Gamma(\emptyset,y) = \frac{1}{d} \, \sigma^\mathrm{T}_y \quad \forall \, y \label{c9-2}\,,\\
& \Gamma(\emptyset, xy) = \frac{1}{d} \, \sigma^\mathrm{T}_{0|xy} \quad \forall \, x,y \label{c9-3}\,,
\end{align}
where $\sigma_y = \sum_a \sigma_{a|xy}$ is the marginal state on Bob's side, and $^\mathrm{T}$ denotes transposition.
\end{defn}

Notice that testing whether an assemblage belongs to $\aQ$ is hence a single instance of a semidefinite program (SDP). These can be efficiently computed via standard computational tools \cite{matlab, cvx, cvx2, sdpt3}. 

The minimum value that $\aQ$ assemblages can yield for a given steering inequality is then: 
\begin{align}
\beta^{\aQ} := \min_{\{\sigma_{a|xy}\} \in \aQ} \Tr\left\{ \sum_{a,x,y} F_{axy} \, \sigma_{a|xy}   \right\}\,,
\end{align}
and computing $\beta^{\aQ}$ requires a single instance of an SDP. 

We will now show that the set of quantum assemblages is included within $\aQ$, from which follows that $\beta^{\aQ} \leq \beta^{\mathrm{Q}}$. 

\begin{thm}
An assemblage $\{\sigma_{a|xy}\}$ in the Bob-with-Input scenario is a ${\aQ}$ assemblage if it is a quantum assemblage. 
\end{thm}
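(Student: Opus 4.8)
The plan is to start from a quantum realisation $\sigma_{a|xy}=\mathcal{E}_y[\Tr_A\{(M_{a|x}\otimes\id)\rho\}]$ and to build, from the data $(\rho,\{M_{a|x}\},\{\mathcal{E}_y\})$, an explicit moment matrix $\Gamma$, then verify directly that it satisfies Definition~\ref{def:MM} together with the linking conditions \eqref{c9-1}--\eqref{c9-3}. First I would put the realisation into a canonical form that is easier to manipulate: since $\cH_A$ and $\rho$ are unconstrained, Naimark's theorem lets me assume Alice's dichotomic POVMs are genuine projective measurements $\{P_{0|x},P_{1|x}=\id-P_{0|x}\}$, and Stinespring's theorem lets me dilate each channel to an isometry $V_y:\cH_B\to\cH_B\otimes\cH_E$ with $\mathcal{E}_y[\tau]=\Tr_E\{V_y\tau V_y^\dagger\}$ and $V_y^\dagger V_y=\id_B$ (equivalently, to its Choi operator $C_y=(\id\otimes\mathcal{E}_y)(|\Omega\rangle\langle\Omega|)$). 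The transpose and the factor $1/d$ appearing in \eqref{c9-2}--\eqref{c9-3} signal that the construction should be organised around a reference copy $\cH_{\bar B}\cong\mathbb{C}^d$ of Bob's space carrying the operator indices of $\Gamma$, tied to Bob's physical leg by the unnormalised maximally entangled vector $|\Omega\rangle=\sum_{k=1}^d|k\rangle_{\bar B}|k\rangle_B$; the elementary identity of Lemma~\ref{lem:tel} then converts an operator acting on $\cH_B$ into its transpose on $\cH_{\bar B}$, which is exactly what the right-hand sides of \eqref{c9-2}--\eqref{c9-3} require.

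Concretely, I would define each entry $\Gamma(v,w)$ as a partial trace, over every system except the reference $\cH_{\bar B}$, of a single global positive operator assembled from $\rho$, the projectors $P_{0|x}$ (for $x$-labels), and the channel resources $V_y$/$C_y$ (for $y$-labels), with Alice's operators acting on $\cH_A$ and Bob's on $\cH_B\otimes\cH_E$. Because $\cH_A$ and Bob's factors $\cH_B,\cH_E$ are distinct tensor legs, all of Alice's operators commute with all of Bob's; this commutation is precisely the content of the factorisation relations \eqref{c8-4}--\eqref{c8-6}, which equate the entry of a product word $xy$ with the entries built from $x$ and $y$ separately. Presenting $\Gamma$ as such a partial trace of one positive operator makes $\Gamma\succeq0$ automatic (it is a Gram/Choi form), and it makes \eqref{c8-2}, i.e.\ $\Gamma(\emptyset,\emptyset)=\id_d$, and the idempotence relation \eqref{c8-3} immediate from $P_{0|x}^2=P_{0|x}$ and $V_y^\dagger V_y=\id_B$. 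Relation \eqref{c8-7}, $\Gamma(x,x')\propto\id_d$, holds because when no Bob operator appears the reference leg is untouched, so the entry collapses to the scalar $\Tr_A\{P_{0|x}P_{0|x'}\rho_A\}$ times $\id_d$.

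It then remains to check the linking conditions. Condition \eqref{c9-1} follows because the $x$-word contributes only the scalar $p(0|x)=\Tr_A\{P_{0|x}\rho_A\}$ on Alice's side while leaving Bob's reference untouched, giving $p(0|x)\id_d$. For \eqref{c9-2} and \eqref{c9-3} the $y$-label inserts the channel, and tracing out $\cH_E$ (equivalently, contracting the Choi operator $C_y$) reconstructs the Schr\"odinger-picture action $\mathcal{E}_y[\cdot]$ applied to $\rho_B$, respectively to $\Tr_A\{(P_{0|x}\otimes\id)\rho\}$; the maximally entangled reference then transposes the result onto $\cH_{\bar B}$ and supplies the prefactor $1/d$, yielding exactly $\tfrac1d\sigma_y^{\mathrm{T}}$ and $\tfrac1d\sigma_{0|xy}^{\mathrm{T}}$, since $\sigma_y=\mathcal{E}_y[\rho_B]$ and $\sigma_{0|xy}=\mathcal{E}_y[\Tr_A\{(P_{0|x}\otimes\id)\rho\}]$. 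Assembling these checks shows that $\Gamma$ is a valid moment matrix realising $\{\sigma_{a|xy}\}$, so the assemblage lies in $\aQ$; hence $\mathcal{Q}_{BWI}\subseteq\aQ$ and, as a corollary, $\beta^{\aQ}\le\beta^{\mathrm{Q}}$.

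The main obstacle is the faithful and simultaneous encoding of Bob's channels: the assemblage depends on each $\mathcal{E}_y$ as a completely positive map, yet the moment-matrix entries must remain linear operator-valued quantities on $\cH_{\bar B}$ and must respect all of \eqref{c8-4}--\eqref{c8-6} at once. Getting the off-diagonal entries $\Gamma(\emptyset,y)$ and $\Gamma(\emptyset,xy)$ to reproduce the \emph{full} channel action rather than a single Kraus factor is what forces the use of the Choi/Stinespring representation together with the maximally entangled reference, and it is here that one must be most careful about which tensor leg is traced, which leg is transposed, and where the $1/d$ is placed. Once the global positive operator is written down correctly, the remaining verifications of \eqref{c8-2}--\eqref{c8-7} and \eqref{c9-1}--\eqref{c9-3} are elementary consequences of projectivity, the isometry and trace-preservation identities, and the $A$--$B$ commutation.
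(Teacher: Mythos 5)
Your proposed architecture coincides with the paper's: Naimark for Alice, a $d$-dimensional reference copy of Bob's space carrying the operator indices of $\Gamma$, a maximally entangled state supplying the transpose and the $1/d$ in \eqref{c9-2}--\eqref{c9-3}, entries defined by tracing a single global positive operator, positivity via a Gram argument, and $A$--$B$ commutation. However, there is a genuine gap at the single non-routine point of the construction: how each channel $\mathcal{E}_y$ is turned into a word operator. You encode the $y$-labels by Stinespring isometries $V_y:\cH_B\to\cH_B\otimes\cH_E$ (or Choi operators) and claim that \eqref{c8-3} is immediate from $P_{0|x}^2=P_{0|x}$ and $V_y^\dagger V_y=\id_B$. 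That is the wrong identity. Conditions \eqref{c8-3}--\eqref{c8-6} require the operator $\Sigma_{0|y}$ assigned to the word $y$ to be a \emph{projector}: for instance $\Gamma(y,xy)=\Gamma(\emptyset,xy)$ reduces, after commuting Alice's projector through, to $\Sigma_{0|y}^\dagger\Sigma_{0|y}=\Sigma_{0|y}$ inside the trace, so commutation alone does not give \eqref{c8-4}--\eqref{c8-6} either. The natural candidate that reproduces the full channel action in the off-diagonal entries,
\begin{equation*}
\Sigma_{0|y}=\bigl(V_y^\dagger\otimes\id_{\bar B}\bigr)\bigl(\id_{E}\otimes\ket{\Phi}\bra{\Phi}\bigr)\bigl(V_y\otimes\id_{\bar B}\bigr)\,,
\end{equation*}
is not a projector: squaring it inserts $V_yV_y^\dagger$ in the middle, which is the projector onto the range of the isometry, not the identity, and one computes
\begin{equation*}
\Sigma_{0|y}^{\dagger}\Sigma_{0|y}=\bigl(V_y^\dagger\otimes\id_{\bar B}\bigr)\Bigl(\tfrac{1}{d}\,\Tr_B\{V_yV_y^\dagger\}\otimes\ket{\Phi}\bra{\Phi}\Bigr)\bigl(V_y\otimes\id_{\bar B}\bigr)\neq\Sigma_{0|y}
\end{equation*}
whenever $\dim\cH_E>1$, since $\tfrac1d\Tr_B\{V_yV_y^\dagger\}$ is a unit-trace positive operator on $\cH_E$ and hence cannot equal $\id_E$. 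Thus $\Gamma(y,y)\neq\Gamma(\emptyset,y)$ for any channel genuinely requiring an environment. Conversely, any arrangement in which $V_y$ enters linearly, so that $V_y^\dagger V_y=\id_B$ is actually used, either erases all $y$-dependence from $\Gamma(y,y)$ (contradicting $\Gamma(\emptyset,y)=\tfrac1d\sigma_y^{\mathrm{T}}$) or makes $\Gamma(\emptyset,y)$ depend on a single Kraus factor rather than the whole channel (contradicting \eqref{c9-2}). This is exactly the tension you flag in your final paragraph, but your proposal does not resolve it.

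The missing device --- and the heart of the paper's proof --- is to dilate each $\mathcal{E}_y$ to a \emph{unitary} $U_y$ on $\cH_B\otimes\cH_{B_{aux}}$, absorbing the fixed auxiliary input state into a global pure state $\ket{\psi}$, and to set $\Sigma_{0|y}=(U_y^\dagger\otimes\id_{B'})(\id_{B_{aux}}\otimes\ket{\Phi}\bra{\Phi})(U_y\otimes\id_{B'})$. The identity making this a projector is $U_yU_y^\dagger=\id$ (surjectivity of the dilation), which is precisely what an isometry lacks; with it, every word operator $\mathbb{O}_v$ is a projector, the paper's checks of Definitions \ref{def:MM} and \ref{def:AQ} go through, and the linking conditions \eqref{c9-1}--\eqref{c9-3} follow just as you outline. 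In short, your plan has the right skeleton but would fail at conditions \eqref{c8-3}--\eqref{c8-6} as written; replacing the Stinespring isometry by a full unitary dilation repairs it and recovers the paper's argument.
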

\begin{proof}
Let us start from a quantum assemblage $\{\sigma_{a|xy}\}$, and show that it belongs to $\aQ$. For this, we need to show that there exist a moment matrix $\Gamma$ that satisfies the conditions of Def.~\ref{def:AQ}. 

Let the state $\rho$,  the collection of projective measurements $\{\Pi_{a|x}\}$, and the CPTP maps $\{\mathcal{E}_y\}$ provide a quantum realisation of $\{\sigma_{a|xy}\}$. Notice that, due to  Stinespring dilation \cite{vp,stine}, we can always take Alice's measurements to be projective and the state $\rho$ to be pure, without loss of generality. Denote by $B_{aux}$ the auxiliary system in Bob's lab which is necessary to dilate the CPTP maps $\{\mathcal{E}_y\}$ into unitaries $U_y$, and denote by $\ket{\psi}$ the quantum state on the Hilbert space $\cH_A \otimes \cH_B \otimes \cH_{B_{aux}}$ that satisfies: 
\begin{align*}
\sigma_{a|xy} = \Tr_{A} \left( \Pi_{a|x} \otimes \mathcal{E}_y \, [\rho] \right) = \Tr_{AB_{aux}} \left( \Pi_{a|x} \otimes U_y \, [\ket{\psi} \bra{\psi}] \right)\,.
\end{align*}

Construct now the following state and measurements: 
\begin{align}
&\tilde{\rho} := \ket{\psi} \bra{\psi} \otimes \id_d^{B'} \,,\\
&\Sigma^{BB'B_{aux}}_{0|y} = \left(U^\dagger_y \otimes \id_{B'} \right) \left(\id_{B_{aux}} \otimes \ket{\Phi}\bra{\Phi}\right) \left(U_y \otimes \id_{B'} \right)\,,\\
&\Sigma^{BB'B_{aux}}_{1|y}= \id_{BB'B_{aux}} -\Sigma^{BB'B_{aux}}_{0|y}\,,
\end{align}
where:
\begin{compactitem}
\item $d$ is the dimension of the Hilbert space of $\{\sigma_{a|xy}\}$,
\item $B'$ is an auxiliary system of dimension $d$ in Bob's lab,
\item $\ket{\Phi}$ is a maximally entangled state of two qudits.
\end{compactitem}
Notice that the operator $\Sigma^{BB'B_{aux}}_{0|y}$ is a projector: on the one hand, $\Sigma^{BB'B_{aux}}_{0|y}$ is a unitary applied to a state, and since unitaries preserve hermiticity then $\Sigma^{BB'B_{aux}}_{0|y}$ is hermitian; on the other hand, 
\begin{align*}
\left(\Sigma^{BB'B_{aux}}_{0|y}\right)^\dagger \Sigma^{BB'B_{aux}}_{0|y} &= \left(U^\dagger_y \otimes \id_{B'} \right) \left(\id_{B_{aux}} \otimes \ket{\Phi}\bra{\Phi}\right) \left(U_y \otimes \id_{B'} \right)\left(U^\dagger_y \otimes \id_{B'} \right) \left(\id_{B_{aux}} \otimes \ket{\Phi}\bra{\Phi}\right) \left(U_y \otimes \id_{B'} \right) \\
&= \left(U^\dagger_y \otimes \id_{B'} \right) \left(\id_{B_{aux}} \otimes \ket{\Phi}\bra{\Phi}\right) \left(\id_{B_{aux}} \otimes \ket{\Phi}\bra{\Phi}\right) \left(U_y \otimes \id_{B'} \right) \\
&= \left(U^\dagger_y \otimes \id_{B'} \right) \left(\id_{B_{aux}} \otimes \ket{\Phi}\bra{\Phi}\right) \left(U_y \otimes \id_{B'} \right) \\
&=  \Sigma^{BB'B_{aux}}_{0|y} \,.
\end{align*}

Now define the following operators: 
\begin{align}
&\mathbb{O}_{x} = \Pi_{0|x} \otimes \id_{BB'B_{aux}} \,,\\
&\mathbb{O}_{y} = \id_A \otimes \Sigma^{BB'B_{aux}}_{0|y} \,,\\
&\mathbb{O}_{xy} =  \Pi_{0|x} \otimes \Sigma^{BB'B_{aux}}_{0|y} \,,\\
&\mathbb{O}_{\emptyset} = \id_{ABB'B_{aux}}\,.
\end{align}
Notice that, by definition, $\mathbb{O}_v$ is a projector $\forall \, v \in \mathcal{S}$.

Now we can define a matrix $\Gamma$ as follows: 
\begin{align}
\Gamma(u,v) := \Tr_{ABB_{aux}} \left( \mathbb{O}^\dagger_u \, \mathbb{O}_v \, \tilde{\rho} \right)\,,
\end{align}
where $u,v \in \mathcal{S}$. Notice that the elements $\Gamma(u,v)$ are d-dimensional operators in the Hilbert space $\cH_{B'}$. 

We will now show that $\Gamma$ is a moment matrix that satisfies the constraints of Definitions \ref{def:MM} and \ref{def:AQ}. 
\begin{itemize}
\item $\Gamma(\emptyset,\emptyset) = \Tr_{ABB_{aux}} \left( \tilde{\rho}\right) = \Tr_{ABB_{aux}} \left( \ket{\psi} \bra{\psi}\right) \id_d^{B'} = \id_d$, hence Eq.~\eqref{c8-2} is satisfied. 
\item Since the operators $\mathbb{O}_v$ are projectors, then $\Gamma(v,v) = \Tr_{ABB_{aux}} \left( \mathbb{O}^\dagger_v \, \mathbb{O}_v \, \tilde{\rho} \right) = \Tr_{ABB_{aux}} \left( \mathbb{O}_v \, \tilde{\rho} \right) = \Gamma(\emptyset,v)$, hence Eq.~\eqref{c8-3} is satisfied. 
\item The proofs that equations \eqref{c8-4}, \eqref{c8-5}, and \eqref{c8-6} are satisfied, follow similarly from the fact that the operators $\mathbb{O}_v$, $\Pi_{0|x}$, and $ \Sigma^{BB'B_{aux}}_{0|y}$ are projectors. As an example, let us present the case of Eq.~\eqref{c8-5}. First, notice that the properties of the before-mentioned projection operators imply that:
\begin{align*}
\Tr_{ABB_{aux}} & \left( \left( \Pi_{0|x} \otimes \id_{BB'B_{aux}}\right)^\dagger \, \Pi_{0|x'} \otimes \Sigma^{BB'B_{aux}}_{0|y} \, \tilde{\rho} \right) \\
&= \Tr_{ABB_{aux}} \left( \left( \Pi_{0|x} \otimes \id_{BB'B_{aux}}\right)^\dagger \, \Pi_{0|x'} \otimes (\Sigma^{BB'B_{aux}}_{0|y})^\dagger \Sigma^{BB'B_{aux}}_{0|y} \, \tilde{\rho} \right) \\
&= \Tr_{ABB_{aux}} \left( \left( \Pi_{0|x} \otimes \Sigma^{BB'B_{aux}}_{0|y} \right)^\dagger \, \Pi_{0|x'} \otimes \Sigma^{BB'B_{aux}}_{0|y} \, \tilde{\rho} \right)\,,
\end{align*}
from which $\Gamma(x,x'y) = \Gamma(xy,x'y)$ follows, and
\begin{align*}
\Tr_{ABB_{aux}} \left( \left( \Pi_{0|x} \otimes \id_{BB'B_{aux}}\right)^\dagger \, \Pi_{0|x'} \otimes \Sigma^{BB'B_{aux}}_{0|y} \, \tilde{\rho} \right) &= 
\Tr_{ABB_{aux}} \left( \left( \Pi_{0|x} \otimes \id_{BB'B_{aux}}\right)^\dagger \, \Pi_{0|x'} \otimes (\Sigma^{BB'B_{aux}}_{0|y})^\dagger \, \tilde{\rho} \right) \\
&= \Tr_{ABB_{aux}} \left( \left( \Pi_{0|x} \otimes \Sigma^{BB'B_{aux}}_{0|y} \right)^\dagger \, \Pi_{0|x'} \otimes \id_{BB'B_{aux}} \, \tilde{\rho} \right),
\end{align*}
from which $\Gamma(x,x'y) = \Gamma(xy,x')$ follows. Hence, Eq.~\eqref{c8-5} is satisfied.
\item To see that conditions given by Eq.~\eqref{c8-7} are satisfied, notice that: 
\begin{align*}
\Tr_{ABB_{aux}} & \left( \left( \Pi_{0|x} \otimes \id_{BB'B_{aux}}\right)^\dagger \, \Pi_{0|x'} \otimes  \id_{BB'B_{aux}} \, \tilde{\rho} \right) \\
&= \Tr_{ABB_{aux}} \left( \left( \Pi_{0|x} \otimes \id_{BB'B_{aux}}\right)^\dagger \, \Pi_{0|x'} \otimes  \id_{BB'B_{aux}} \, \ket{\psi} \bra{\psi} \otimes \id_d^{B'}  \right) \\
&= \Tr_{ABB_{aux}} \left( \left( \Pi_{0|x} \otimes \id_{BB_{aux}}\right)^\dagger \, \Pi_{0|x'} \otimes  \id_{BB_{aux}} \, \ket{\psi} \bra{\psi} \right) \, \id_d^{B'}  \\
&\propto \id_d^{B'} \,,
\end{align*}
from which $\Gamma(x,x')\propto \id_d$ follows. 
\item For the first condition of Def.~\ref{def:AQ}, notice that: 
\begin{align*}
\Gamma(\emptyset,x) &= \Tr_{ABB_{aux}}\left( \Pi_{0|x} \otimes  \id_{BB'B_{aux}} \, \tilde{\rho} \right) = \Tr_{ABB_{aux}}\left( \Pi_{0|x} \otimes  \id_{BB'B_{aux}} \,\ket{\psi} \bra{\psi} \otimes \id_d^{B'} \right) \\
&= \Tr_{ABB_{aux}}\left( \Pi_{0|x} \otimes  \id_{BB_{aux}} \,\ket{\psi} \bra{\psi} \right) \id_d^{B'} = \Tr_{ABB_{aux}}\left( \Pi_{0|x} \otimes  U_y \,[\ket{\psi} \bra{\psi}] \right) \id_d^{B'} \\
&= \Tr_{B}\left( \sigma_{0|xy} \right) \id_d \,,
\end{align*}
where the equality in the second line follows from the fact that unitary operations are trace preserving. Hence, the conditions given by Eq.~\eqref{c9-1} are satisfied.
\item For the third condition of Def.~\ref{def:AQ}, notice that: 
\begin{align*}
\Gamma(\emptyset,xy) &=  \Tr_{ABB_{aux}}\left( \Pi_{0|x} \otimes \Sigma^{BB'B_{aux}}_{0|y}\, \tilde{\rho} \right) \\
&= \Tr_{ABB_{aux}}\left( \Pi_{0|x} \otimes \left( \left(U^\dagger_y \otimes \id_{B'} \right) \left(\id_{B_{aux}} \otimes \ket{\Phi}\bra{\Phi}\right) \left(U_y \otimes \id_{B'} \right) \right) \, \tilde{\rho} \right) \\
&= \Tr_{ABB_{aux}} \left( \left(\Pi_{0|x}^\dagger \otimes U^\dagger_y \otimes \id_{B'} \right) \left(\id_{AB_{aux}} \otimes \ket{\Phi}\bra{\Phi}\right) \left(\Pi_{0|x} \otimes U_y \otimes \id_{B'} \right) \, \tilde{\rho} \right) \\
&= \Tr_{ABB_{aux}} \left( \left(\id_{AB_{aux}} \otimes \ket{\Phi}\bra{\Phi}\right) \left( \left(\Pi_{0|x} \otimes U_y \otimes \id_{B'} \right) \, \tilde{\rho} \, \left(\Pi_{0|x}^\dagger \otimes U^\dagger_y \otimes \id_{B'} \right) \right) \right) \\
&= \Tr_{ABB_{aux}} \left( \left(\id_{AB_{aux}} \otimes \ket{\Phi}\bra{\Phi}\right) \left( \left(\Pi_{0|x} \otimes U_y \right) \, \ket{\psi} \bra{\psi} \, \left(\Pi_{0|x}^\dagger \otimes U^\dagger_y \right) \otimes \id_{B'} \right) \right) \\
&= \Tr_{B} \left( \ket{\Phi}\bra{\Phi} \sigma_{0|xy} \otimes \id_{B'} \right) \\
&= \frac{1}{d} \, \sigma^\mathrm{T}_{0|xy} \,.
\end{align*}
Hence, the conditions from Eq.~\eqref{c9-3} are satisfied. 
\item The second condition of Def.~\ref{def:AQ}, follows similarly from the third one plus the fact that Alice's measurements $\{\Pi_{a|x}\}$ are complete:
\begin{align*}
\Gamma(\emptyset,y) &= \Tr_{ABB_{aux}}\left( \id_A \otimes \Sigma^{BB'B_{aux}}_{0|y}\, \tilde{\rho} \right) = \sum_{a=0,1} \Tr_{ABB_{aux}}\left( \Pi_{a|x} \otimes \Sigma^{BB'B_{aux}}_{0|y}\, \tilde{\rho} \right) = \sum_{a=0,1}\frac{1}{d} \, \sigma^\mathrm{T}_{a|xy} = \frac{1}{d} \, \sigma^\mathrm{T}_{y} \,,
\end{align*}
where the third equality can be proven following the same steps as in the proof for the third condition of Def.~\ref{def:AQ}, but setting $a=1$. It hence follows that the matrix $\Gamma$ satisfies the conditions of Eq.~\eqref{c9-2}. 
\item The last condition that needs to be proven is that $\Gamma$ is positive semidefinite. For this, first notice that the $(i,j)$ entry of the element $\Gamma(u,v)$ is given by:
\begin{align*}
\Gamma^{i,j}(u,v) &= \bra{i} \Tr_{ABB_{aux}} \left( \mathbb{O}^\dagger_u \, \mathbb{O}_v \, \tilde{\rho} \right) \ket{j} 
= \bra{i} \Tr_{ABB_{aux}} \left( \mathbb{O}^\dagger_u \, \mathbb{O}_v \, \ket{\psi} \bra{\psi} \otimes \id_d^{B'} \right) \ket{j} 
= \bra{i} \bra{\psi} \mathbb{O}^\dagger_u \, \mathbb{O}_v \ket{\psi} \ket{j}\,.
\end{align*}
Now define the vectors $\ket{\phi_{i,v}} = \mathbb{O}_v \ket{\psi} \ket{i}$. Hence, $\Gamma^{i,j}(u,v) = \langle \phi_{i,u} \lvert \phi_{j,v} \rangle$. This shows that $\Gamma$ is a Gramian matrix, and therefore positive semidefinite. 
\end{itemize}
We see then that, for every quantum assemblage $\{\sigma_{a|xy}\}$, one can construct a moment matrix $\Gamma$ which (i) satisfies the conditions of Def.~\ref{def:MM}, i.e., its a valid moment matrix, and (ii) satisfies the conditions of Def.~\ref{def:AQ}. Therefore, quantum assemblage belong to the set $\aQ$. 
\end{proof}

\section{Proof of {Theorem \ref{thm:ex}:} {post-quantum steering does not imply post-quantum non-locality in the Bob-with-input scenario}}

In this section we will prove {Theorem \ref{thm:ex}}, which says that post-quantum steering in the Bob-with-input setting does not imply post-quantum non-locality. To prove this we introduce a new set of Bob-with-input assemblages, which strictly contain the quantum Bob-with-input assemblages, but provably can never give rise to post-quantum correlations. We will call this set the set of \textit{PTP assemblages}, since {they may be mathematically expressed by means of positive} and trace-preserving (PTP) maps, which may not necessarily be completely positive. We will define these assemblages, but first it is instructive to {recall} the definition of quantum Bob-with-input assemblages.
\\
\\
\textbf{{Definition \ref{def:quant_gen}}.} \textit{\textbf{Quantum Bob-with-input assemblages.}}\\
\textit{An assemblage $\{\sigma_{a|xy}\}_{a,x,y}$ has a has a quantum realisation in the steering scenario where Bob has an input if and only if there exists a Hilbert space $\cH_A$ and POVMs $\{ M_{a|x} \}_{a,x}$ for Alice,  a state $\rho$ in $\cH_A \otimes \cH_B$, and a collection of CPTP maps $\{\mathcal{E}_y\}_y$ in $\cH_B$ for Bob, such that}
\begin{equation}
\sigma_{a|xy} = \mathcal{E}_y \left[ \mathrm{tr}_{A}\left\{ (M_{a|x} \otimes \id  ) \rho   \right\} \right].
\end{equation}

A relaxation of the quantum assemblages from the previous setup is that where $\mathcal{E}^y$ is a PTP map (but not necessarily completely positive) instead:
\begin{defn}\textbf{PTP assemblage.} \\
An assemblage $\{\sigma_{a|xy}\}_{a,x,y}$ is a PTP assemblage iff there exists a Hilbert space $\cH_A$ for Alice, POVMs $\{ M_{a|x} \}_{a,x}$ for Alice, a state $\rho$ in $\cH_A \otimes \cH_B$, and a collection of PTP maps $\{\mathcal{E}^y\}_y$ in $\cH_B$ for Bob, such that $\sigma_{a|xy} = \mathcal{E}^y \left[ \mathrm{tr}_{A}\left\{ M_{a|x} \otimes \id  \, \rho   \right\} \right]$.
\end{defn} 

{When an assemblage is mathematically expressed via some PTP maps that are not completely positive, then there is a possibility for it to be post-quantum.} This method has been previously used in Ref.~\cite{oform} to construct post-quantum assemblages that may only exhibit quantum correlations in Bell scenarios. {We remark that, even if an assemblage is a PTP assemblage, this does not mean that it must be physically prepared in the laboratory by allowing Bob to physically implement a possibly not completely positive map. This is just a convenient mathematical characterisation of a set of assemblages, with no underlying physical connotations. }

In the bipartite steering scenarios considered here, this method {for mathematically constructing} assemblages has a similar advantage to that in \cite{oform}: when Bob measures on his system, the correlations $p(ab|xy)$ are compatible with quantum theory. This claim is formalised below.

\begin{thm}\label{PTPquantum}
{Consider a Bob-with-input scenario, and define the assemblage $\sigma_{a|xy} = \mathrm{tr}_{A}\left\{ M_{a|x} \otimes \mathcal{E}^y \, \rho  \right\}$, where $\rho$ is the state of a quantum system shared by Alice and Bob, $\{M_{a|x}\}_a$ a POVM for Alice for each $x$, and $\mathcal{E}^y$ a PTP map for Bob, for each $y$.}

Let $\{N_b\}_b$ be a POVM for Bob. Then, the correlations $p(ab|xy) = \Tr\left\{ N_b \, \sigma_{a|xy} \right\}$ 
have a quantum realisation in a Bell experiment. Moreover, let $\{N_{b|z}\}_b$ be a POVM for Bob, for each $z$. Then, the correlations 
$ p(ab|xyz) = \Tr\left\{ N_{b|z} \, \sigma_{a|xy} \right\}$
also have a quantum realisation in a Bell experiment.
\end{thm}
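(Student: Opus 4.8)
The plan is to mimic the dual-map computation already used in the main-text proof of Theorem~\ref{theo:G_gap} (and of Ref.~\cite{oform}), but to track carefully \emph{which} properties of Bob's map are actually needed, so that complete positivity can be dispensed with. First I would expand the correlation using the definition of the assemblage and then push $\mathcal{E}^y$ across the Hilbert--Schmidt pairing onto $N_b$ via its adjoint $(\mathcal{E}^y)^\dagger$, defined by $\Tr_B\{Q\,\mathcal{E}^y[\tau]\}=\Tr_B\{(\mathcal{E}^y)^\dagger[Q]\,\tau\}$ for all operators $Q,\tau$:
\begin{align*}
p(ab|xy) = \Tr\left\{ N_b\, \sigma_{a|xy}\right\} = \Tr_B\left\{ N_b\, \mathcal{E}^y\!\left[ \mathrm{tr}_A\left\{ (M_{a|x}\otimes\id)\rho\right\}\right]\right\} = \Tr\left\{ \left(M_{a|x}\otimes (\mathcal{E}^y)^\dagger[N_b]\right)\rho\right\}.
\end{align*}

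The crux of the argument is then to verify that $\{(\mathcal{E}^y)^\dagger[N_b]\}_b$ is a genuine POVM on Bob's Hilbert space. This rests on two elementary facts about the adjoint of a PTP map. Positivity of $\mathcal{E}^y$ forces positivity of its adjoint, since $\Tr\{(\mathcal{E}^y)^\dagger[Q]\,P\}=\Tr\{Q\,\mathcal{E}^y[P]\}\ge 0$ whenever $P,Q\ge 0$, so each $(\mathcal{E}^y)^\dagger[N_b]\ge 0$; and trace preservation of $\mathcal{E}^y$ is precisely equivalent to unitality of its adjoint, $(\mathcal{E}^y)^\dagger[\id]=\id$, whence $\sum_b (\mathcal{E}^y)^\dagger[N_b]=(\mathcal{E}^y)^\dagger[\id]=\id$. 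Therefore $\{M_{a|x}\}_a$, $\{(\mathcal{E}^y)^\dagger[N_b]\}_b$ and the state $\rho$ constitute a bona fide quantum Bell experiment, with $x$ labelling Alice's measurement, $y$ labelling Bob's, and $a,b$ the respective outcomes; this is exactly the claimed quantum realisation of $p(ab|xy)$.

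The one conceptual point I expect a reader might stumble on, and which I would therefore emphasise, is that only \emph{positivity} of $\mathcal{E}^y$ is invoked, never complete positivity. Even though $(\id_A\otimes\mathcal{E}^y)[\rho]$ generally fails to be a legitimate quantum state when $\mathcal{E}^y$ is not completely positive, that object simply never appears in the calculation: transferring $\mathcal{E}^y$ to Bob's local effect requires only that $(\mathcal{E}^y)^\dagger[N_b]$ be a positive operator on Bob's system \emph{alone}, which positivity alone guarantees. Because no ancilla is entangled with the subsystem on which the map acts, complete positivity plays no role whatsoever. This is the whole reason a merely positive map still yields quantum, rather than post-quantum, correlations.

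Finally, the ``moreover'' clause follows by running the identical computation with $N_{b|z}$ in place of $N_b$, obtaining
\begin{align*}
p(ab|xyz) = \Tr\left\{ \left(M_{a|x}\otimes (\mathcal{E}^y)^\dagger[N_{b|z}]\right)\rho\right\}.
\end{align*}
For each fixed pair $(y,z)$ the family $\{(\mathcal{E}^y)^\dagger[N_{b|z}]\}_b$ is again a POVM by the same positivity-plus-unitality argument, so interpreting the composite index $(y,z)$ as Bob's measurement setting exhibits $p(ab|xyz)$ as quantum Bell correlations, completing the proof.
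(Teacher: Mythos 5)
Your proposal is correct and follows essentially the same route as the paper's proof: both transfer $\mathcal{E}^y$ onto Bob's effects via the Hilbert--Schmidt dual map, observe that this dual is positive and unital (hence maps POVMs to POVMs), and realise the correlations with $\rho$, $\{M_{a|x}\}_a$ and the transformed POVMs, treating $(y,z)$ as a composite input for the second claim. The only difference is that you spell out why the dual of a PTP map is positive and unital, which the paper asserts without proof.
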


\begin{proof}
Let's start from the correlations $p(ab|xy)=\mathrm{tr}\left\{N_{b} \, \sigma_{a|x,y}\right\}$. For each PTP map $\mathcal{E}^y$, there exists the dual map $\mathcal{F}^y$, which is positive and unital, such that
\begin{align*}
p(ab|xy) &= \mathrm{tr}\left\{ M_{a|x} \otimes N_{b} \, \left( \id_A \otimes \mathcal{E}^{y} \right) [\rho]  \right\}\\
&=\mathrm{tr}\left\{ M_{a|x} \otimes \mathcal{F}^y(N_{b}) \, \rho  \right\}.
\end{align*}
For each $y$, since the map $\mathcal{F}^y$ is positive and unital, it takes a general measurement $N_{b}$ to a general measurement with POVM elements $M_{b|y}:=\mathcal{F}^y(N_{b})$. Therefore,
\begin{align*}
p(ab|xy) = \Tr\left\{ M_{a|x} \otimes M_{b|y} \, \rho \right\}\,,
\end{align*}
which gives a quantum realisation for the correlations $p(ab|xy)$.

Now let's move on to the second statement in the theorem. Let Bob choose from a set of general measurements indexed by the variable $z$ with POVM elements $N_{b|z}$. In this case the correlations are of the form 
\begin{align*}
p(ab|xyz) &= \mathrm{tr}\left\{ M_{a|x} \otimes N_{b|z} \, \left( \id_A \otimes \mathcal{E}^{y} \right) [\rho]  \right\}\\
&=\mathrm{tr}\left\{ M_{a|x} \otimes \mathcal{F}^y(N_{b|z}) \, \rho  \right\}.
\end{align*}
By a similar argument as before, $M_{b|yz} := \mathcal{F}^y(N_{b|z})$ defines a POVM for each choice of $(y,z)$. By reinterpreting $(y,z)$ as Bob's total input, this provides a quantum realisation for the correlations $p(ab|xyz)$.
\end{proof}

\

Theorem \ref{PTPquantum} shows that PTP assemblages cannot generate post-quantum correlations in a Bell scenario. 
In order to certify the post-quantumness of a PTP assemblage, thus, we need a new method that does not rely on Bell correlations. {Here we will use the method based on the steering inequality violation, discussed in the previous section. }

Now we are in a position to present the proof of {Theorem \ref{thm:ex}}, {recalled} below for convenience. 
\\
\\
\textbf{{Theorem \ref{thm:ex}}.} \textit{The following PTP assemblage has no quantum realisation:}
{
\begin{align*}
&\sigma^*_{a|10} = \tfrac{1}{2} (\ket{0}+(-1)^a \ket{1})(\bra{0}+(-1)^a \bra{1})\,, \\
&\sigma^*_{a|20} = \tfrac{1}{2} (\ket{0}-(-1)^a \, i \,  \ket{1})(\bra{0}+(-1)^a \, i \,  \bra{1})\,, \\
&\sigma^*_{0|30} = \tfrac{1}{2} \ket{0}\bra{0}\,, \quad \sigma^*_{1|30} = \tfrac{1}{2} \ket{1}\bra{1}\,, \\
&\sigma^*_{a|x1} = \sigma_{a|x0}^{* \mathrm{T}}\,. 
\end{align*}
}
\begin{proof}
First notice that the assemblage $\{\sigma^*_{a|xy}\}$, where {$x \in \{1,2,3\}$,} $a \in \{0,1\}$ and $y \in \{0,1\}$,  may {be mathematically expressed as one where} Alice and Bob share a maximally entangled state of two qubits, Alice chooses between the three Pauli measurements to perform on her qubit, and Bob applies on his qubit the identity channel when $y=0$ and the transpose when $y=1$. {This convenient mathematical representation shows that} $\{\sigma^*_{a|xy}\}$ is indeed a PTP assemblage, and by virtue of Theorem \ref{PTPquantum} may never yield post-quantum correlations in Bell experiments. 

{In order to show that the assemblage is post-quantum, let us consider the steering functional, which we will denote by $\beta^*$,  given by the following operators: 
\begin{align}\label{eq:theops}
F^*_{axy} = \frac{1}{2}(\openone - (-1)^a \sigma_x )^{\mathrm{T}^y}\,.
\end{align}
That is, 
\begin{align}\label{theineq}
 \beta^*(\{\sigma_{a|xy}\})= \Tr\left\{ \sum_{a,x,y} F^*_{axy} \, \sigma_{a|xy}   \right\}\,.
\end{align}
For the assemblage defined in Eq.~\eqref{e:example}, it follows that $\beta^* = 0$. In addition, the minimum value achieved by Non-signalling Bob-with-input assemblages is $\beta^{\mathrm{NS}} = 0$, and the LHS bound of the inequality is $\beta^{\mathrm{LHS}}=1.2679$. 

To lower bound the minimum quantum value of the steering functional of Eq.~\eqref{theineq}, we compute the minimum value achieved by the set $\aQ$ of assemblages. The solution to the SDP (computed in Matlab \cite{matlab}, using the software CVX \cite{cvx,cvx2} and the solver SDPT3 \cite{sdpt3})  yields $\beta^{\aQ}=0.4135$, which implies that $\beta^{\mathrm{Q}} \geq 0.4135 > 0$. Hence, the assemblage of Eq.~\eqref{e:example} is post-quantum and certified by this quantum-steering inequality. }
\end{proof}

An alternative proof that the assemblage of Eq.~\eqref{e:example} is post-quantum may also be found analytically. 
For this, we develop a method that can certify the post-quantumness of a PTP assemblage without relying on Bell correlations, and is based on the following observation:

\begin{lemma}\label{theIDthing}
{Consider the Bob-with-input scenario.} Let $\{\sigma_{a|xy}\}$ be a quantumly realisable assemblage, with the following property: $\sigma_{a|x1}$ is (proportional to) a pure state for all $a,x$. Then, there exist CPTP maps $\{\mathcal{F}^y\}_{y>1}$ such that $\sigma_{a|xy} = \mathcal{F}^y[\sigma_{a|x1}] \quad \forall \, y>1$.
\end{lemma}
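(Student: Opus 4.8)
The plan is to exploit the quantum realisation guaranteed by Definition \ref{def:quant_gen}, together with a Stinespring dilation of Bob's channels (as already invoked in the $\aQ$ section through \cite{vp,stine}), and to reduce the claim to the existence of a single CPTP ``recovery'' map that undoes $\mathcal{E}^1$ on Bob's bare steered states. Concretely, write the given realisation as $\sigma_{a|xy} = \mathcal{E}^y[\tau_{a|x}]$ with $\tau_{a|x} := \Tr_A\{(M_{a|x}\otimes\id)\rho\}$ the channel-free steered states, so that $\sigma_{a|x1} = \mathcal{E}^1[\tau_{a|x}]$. It then suffices to construct one CPTP map $\mathcal{R}$ on $\cH_B$ satisfying $\mathcal{R}[\sigma_{a|x1}] = \tau_{a|x}$ for every $a,x$: setting $\mathcal{F}^y := \mathcal{E}^y \circ \mathcal{R}$ immediately gives $\mathcal{F}^y[\sigma_{a|x1}] = \mathcal{E}^y[\tau_{a|x}] = \sigma_{a|xy}$, and $\mathcal{F}^y$ is CPTP as a composition of CPTP maps.

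The purity hypothesis is what should make such an $\mathcal{R}$ available. First I would dilate $\mathcal{E}^1$ to an isometry $V_1 : \cH_B \to \cH_B \otimes \cH_E$, so that $\mathcal{E}^1[\cdot] = \Tr_E\{V_1 \cdot V_1^\dagger\}$. Since $\sigma_{a|x1} = \Tr_E\{V_1\tau_{a|x}V_1^\dagger\}$ is proportional to a pure state, the bipartite operator $V_1\tau_{a|x}V_1^\dagger$ has a rank-one marginal on $\cH_B$ and must therefore factorise as $V_1\tau_{a|x}V_1^\dagger = \sigma_{a|x1}\otimes\omega_{a|x}$ for some normalised environment state $\omega_{a|x}$. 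Using $V_1^\dagger V_1 = \id$ then yields the explicit inversion $\tau_{a|x} = V_1^\dagger(\sigma_{a|x1}\otimes\omega_{a|x})V_1$, which motivates the candidate $\mathcal{R}[\cdot] = V_1^\dagger(\cdot \otimes \omega)V_1$, completed by a correction term so as to be trace preserving rather than merely trace non-increasing.

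The main obstacle is precisely that this candidate must be a \emph{single} map, independent of $a$ and $x$: the recovery works verbatim only if the environment states $\omega_{a|x}$ can be replaced by one fixed $\omega$ on the relevant supports, and this can genuinely fail (e.g.\ if $\mathcal{E}^1$ resets Bob's system, sending distinct $\tau_{a|x}$ to the same pure state). I would therefore devote the bulk of the argument to showing that the purity of all the $\sigma_{a|x1}$, together with the no-signalling constraint $\sum_a \tau_{a|x} = \rho_B$ (independent of $x$) and the full-rank character of the reduced state $\sigma_R = \sum_a \sigma_{a|x1}$ present in the case of interest, forces $\mathcal{E}^1$ to be reversible on $\mathrm{span}\{\tau_{a|x}\}$ -- equivalently, that the complementary channel carries no which-input information, so the $\omega_{a|x}$ collapse to a common $\omega$. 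A clean way to package this is via the Petz recovery map $\mathcal{P}_{\rho_B,\mathcal{E}^1}$, which is automatically CPTP and recovers each $\tau_{a|x}$ exactly once $\mathcal{E}^1$ is sufficient for the family, a property I expect to follow from the pure-output structure. Finally I would record the payoff for the application: for the assemblage \eqref{e:example} the only maps relating the pure states $\sigma_{a|x1}$ to $\sigma_{a|x0}$ are transposition-like and hence not completely positive, so no such $\mathcal{F}^y$ can exist and the assemblage cannot be quantum.
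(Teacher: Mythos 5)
Your construction is, at its core, the same as the paper's own proof. The paper (Theorem~\ref{thm:3b}) dilates Bob's $y{=}1$ channel to a unitary $U_1$ with environment state $\ket{\chi}$, so that the dilated states $\widetilde{\sigma}_{a|x1}=U_1\bigl(\tau_{a|x}\otimes\ket{\chi}\bra{\chi}\bigr)U_1^\dagger$ satisfy $\sigma_{a|xy}=F^y[\widetilde{\sigma}_{a|x1}]$ for CPTP maps $F^y$ (undo $U_1$, apply $U_y$, trace out the environment), and then argues that purity of the $\sigma_{a|x1}$ factorises $\widetilde{\sigma}_{a|x1}$. Your isometry version --- $V_1\tau_{a|x}V_1^\dagger=\sigma_{a|x1}\otimes\omega_{a|x}$, recovery $\mathcal{R}[\cdot]=V_1^\dagger(\cdot\otimes\omega)V_1$, then $\mathcal{F}^y=\mathcal{E}^y\circ\mathcal{R}$ --- is that argument in Stinespring form, and both the reduction to a single recovery map and the factorisation step are correct.

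The genuine gap is the step you flag and then postpone: nothing in your proposal proves that the environment states $\omega_{a|x}$ can be replaced by one fixed $\omega$. Invoking the Petz map and saying that sufficiency of $\mathcal{E}^1$ ``is expected to follow from the pure-output structure'' merely renames the claim; sufficiency is exactly what has to be established. Worse, your own reset example shows purity cannot give it: taking $\mathcal{E}^1[\cdot]=\Tr[\cdot]\,\ket{0}\bra{0}$, $\mathcal{E}^2=\mathrm{id}$, and $\tau_{a|x}$ the Pauli steered states of a maximally entangled state yields a quantum Bob-with-input assemblage in which every $\sigma_{a|x1}=\tfrac12\ket{0}\bra{0}$ is pure while the six $\sigma_{a|x2}$ are distinct, so no map of any kind sends the former to the latter. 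This is in fact a counterexample to the Lemma as literally stated; notably, the paper's own proof makes the same silent move, asserting $\widetilde{\sigma}_{a|x1}=\sigma_{a|x1}\otimes\omega$ with $\omega$ \emph{fixed}, so the gap you identified is inherited from the paper rather than introduced by you. Your proposed repair --- adding no-signalling and full rank of $\sigma_R$ --- is still not sufficient to force a common $\omega$: for the $Z$-dephasing channel and a single $Z$ measurement the outputs are pure, $\sigma_R$ is full rank and no-signalling holds, yet the dilated environment states are $\omega_a=\ket{a}\bra{a}$, which depend on $a$.

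What actually closes the gap in the case the Lemma is used for, the assemblage \eqref{e:example}, is the tomographic structure of the pure states, which you gesture at but do not exploit. There the normalised $\sigma_{a|x1}$ form three mutually unbiased bases, and no-signalling lifted to the dilated level, $\sum_a\sigma_{a|x1}\otimes\omega_{a|x}=V_1\rho_B V_1^\dagger$ for every $x$, kills the $(a,x)$-dependence: expanding the $x$-basis decomposition in the $z$ basis, the off-diagonal block is proportional to $\omega_{0|x}-\omega_{1|x}$ and must vanish (the $z$-basis decomposition has none), after which every decomposition reads $\tfrac12\id\otimes\omega$ and all $\omega_{a|x}$ coincide. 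If you prove this collapse and strengthen the Lemma's hypotheses accordingly, your $\mathcal{R}$ is well defined on the relevant supports and $\mathcal{F}^y=\mathcal{E}^y\circ\mathcal{R}$ finishes the argument; as it stands, the proposal is not yet a proof.
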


In order to prove this Lemma, we first need to show a more general result, stated as a theorem below. The proofs will be presented in the diagrammatic notation of Ref.~\cite{PT} for simplicity. 
\begin{thm}\label{thm:3b}
{Consider the Bob-with-input scenario.} Let $\{\sigma_{a|xy}\}$ be a quantum realisable assemblage. Then, there exist CPTP maps $\{F^y\}_{y>1}$ and a dilation  $\widetilde{\sigma}_{a|x1}$ of $\sigma_{a|x1}$ for each $(a,x)$ through an auxiliary system $E$, such that:
\begin{align*}
\begin{cases}
\sigma_{a|xy} = F^y[\widetilde{\sigma}_{a|x1}] \quad \forall \, y>1 \,,\\
\sigma_{a|x1} = \mathrm{tr}_{E'} \left\{ \widetilde{\sigma}_{a|x1} \right\}\,.
\end{cases}
\end{align*}
\end{thm}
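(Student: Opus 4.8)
The plan is to unpack the definition of a quantum realisable assemblage and then use a Stinespring dilation of Bob's $y=1$ channel to manufacture the required dilation and maps. By Definition \ref{def:quant_gen} there are POVMs $\{M_{a|x}\}$, a state $\rho$ on $\cH_A \otimes \cH_B$, and CPTP maps $\{\mathcal{E}_y\}_y$ with $\sigma_{a|xy} = \mathcal{E}_y[\tau_{a|x}]$, where I write $\tau_{a|x} := \mathrm{tr}_A\{(M_{a|x} \otimes \id)\rho\}$ for the ``raw'' steered operators, which are common to every $y$. The whole difficulty is that each $\mathcal{E}_y$ acts on the same $\tau_{a|x}$, whereas the theorem only grants us access to $\sigma_{a|x1} = \mathcal{E}_1[\tau_{a|x}]$; so I must recover (a dilated version of) $\tau_{a|x}$ from $\sigma_{a|x1}$ alone, using maps that do not depend on $a$ or $x$.

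First I would Stinespring-dilate the single channel $\mathcal{E}_1$: there is an isometry $V\colon \cH_B \to \cH_B \otimes \cH_E$ with $\mathcal{E}_1[\,\cdot\,] = \mathrm{tr}_E\{V(\cdot)V^\dagger\}$ and $V^\dagger V = \id_B$. I would then simply define the dilation by $\widetilde{\sigma}_{a|x1} := V\,\tau_{a|x}\,V^\dagger$. Tracing out $E$ gives $\mathrm{tr}_E\{\widetilde{\sigma}_{a|x1}\} = \mathcal{E}_1[\tau_{a|x}] = \sigma_{a|x1}$, which is exactly the second required identity, so $\widetilde{\sigma}_{a|x1}$ is a genuine dilation of $\sigma_{a|x1}$. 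The point of passing to the dilation is that the isometry is left-invertible: $V^\dagger \widetilde{\sigma}_{a|x1} V = (V^\dagger V)\tau_{a|x}(V^\dagger V) = \tau_{a|x}$, so the co-isometry $V^\dagger(\cdot)V$ perfectly reconstructs the raw operator $\tau_{a|x}$, and it does so in an $(a,x)$-independent way.

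The remaining task is to package $\mathcal{E}_y \circ \big(V^\dagger(\cdot)V\big)$ into a genuine CPTP map, which is where the only real obstacle lies: the recovery map $V^\dagger(\cdot)V$ is completely positive but only trace-preserving on the range of $V$, since $VV^\dagger$ is a proper projector on $\cH_B \otimes \cH_E$. I would fix this by adding a trace-completing ``prepare'' term, setting
\begin{align*}
F^y[\omega] := \mathcal{E}_y\!\left[V^\dagger \omega V\right] + \mathrm{tr}\!\left\{(\id_{BE} - VV^\dagger)\,\omega\right\}\,\rho_0\,,
\end{align*}
where $\rho_0$ is any fixed state on $\cH_B$. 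Each summand is completely positive, and a one-line trace computation (using that $\mathcal{E}_y$ is trace preserving, so $\mathrm{tr}\{\mathcal{E}_y[V^\dagger\omega V]\} = \mathrm{tr}\{VV^\dagger\omega\}$) gives $\mathrm{tr}\{F^y[\omega]\} = \mathrm{tr}\{\omega\}$; hence $F^y$ is CPTP and manifestly independent of $a$ and $x$. Finally, since $\widetilde{\sigma}_{a|x1}$ lies in the range of $V$ we have $VV^\dagger \widetilde{\sigma}_{a|x1} = \widetilde{\sigma}_{a|x1}$, so the correction term vanishes on it and $F^y[\widetilde{\sigma}_{a|x1}] = \mathcal{E}_y[\tau_{a|x}] = \sigma_{a|xy}$ for all $y>1$, completing the argument. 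Because the paper works in the diagrammatic calculus of Ref.~\cite{PT}, I would render the same steps pictorially --- Stinespring as an isometry box, $V^\dagger V = \id$ as the isometry equation, and the correction term as a discard-and-prepare --- but the mathematical content is precisely the dilation-plus-recovery above.
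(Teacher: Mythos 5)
Your proof is correct and takes essentially the same route as the paper: dilate Bob's $y=1$ channel so that it becomes invertible, let $\widetilde{\sigma}_{a|x1}$ be the pre-trace (dilated) output of that channel applied to the raw steered operator $\tau_{a|x}$, and build $F^y$ by undoing the dilation and then applying $\mathcal{E}_y$. The only divergence is technical: the paper extends the Stinespring isometry to a unitary $U_1$ with a fixed ancilla state, so its recovery $U_1^\dagger(\cdot)U_1$ is globally defined and $F_y[\omega]=\mathrm{tr}_{E'}\{U_y U_1^\dagger \omega U_1 U_y^\dagger\}$ is automatically CPTP, whereas your isometric recovery $V^\dagger(\cdot)V$ fails trace preservation off the range of $V$ and needs the discard-and-prepare correction term, which you supply correctly.
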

\begin{proof}
Let $\rho$, $\{M_{a|x}\}$ and $\{\mathcal{E}_y\}$ be the state, measurements and CPTP maps that provide a quantum realisation of $\{\sigma_{a|xy}\}$. In diagrammatic notation:
\begin{align*}
\sigma_{a|xy} \quad = \quad %
\begin{tikzpicture}
	\begin{pgfonlayer}{nodelayer}
		\node [style=none] (0) at (0, -0.5) {$\rho$};
		\node [style=none] (1) at (-1.5, 0) {};
		\node [style=none] (2) at (0, -1.25) {};
		\node [style=none] (3) at (1.5, 0) {};
		\node [style=none] (4) at (-1, 1) {\tiny{$M_{a|x}$}};
		\node [style=none] (5) at (-1, 0) {};
		\node [style=small box] (6) at (1, 1.25) {$\mathcal{E}_{y}$};
		\node [style=none] (7) at (1, 0) {};
		\node [style=none] (8) at (1, 2.25) {};
		\node [style=right label] (9) at (-1, 0.25) {$A$};
		\node [style=right label] (10) at (1, 0.25) {$B_{i}$};
		\node [style=right label] (11) at (1, 2) {$B$};
		\node [style=none] (12) at (-2, 0.75) {};
		\node [style=none] (13) at (0, 0.75) {};
		\node [style=none] (14) at (-1, 1.75) {};
		\node [style=none] (15) at (-1, 0.75) {};
	\end{pgfonlayer}
	\begin{pgfonlayer}{edgelayer}
		\draw (1.center) to (2.center);
		\draw (2.center) to (3.center);
		\draw (3.center) to (1.center);
		\draw [qWire] (6) to (7.center);
		\draw [qWire] (8.center) to (6);
		\draw (14.center) to (12.center);
		\draw (12.center) to (13.center);
		\draw (13.center) to (14.center);
		\draw [qWire] (15.center) to (5.center);
	\end{pgfonlayer}
\end{tikzpicture}}.
\end{align*}
Let the auxiliary system $E$ on state $\ket{\chi}$ and the unitary operators $\{U_y\}$ provide a unitary dilation for the CPTP maps $\{\mathcal{E}_y\}$, namely:
\begin{align*}
\sigma_{a|xy} \quad = \quad %
\begin{tikzpicture}
	\begin{pgfonlayer}{nodelayer}
		\node [style=none] (0) at (0, -0.5) {$\rho$};
		\node [style=none] (1) at (-1.5, 0) {};
		\node [style=none] (2) at (0, -1.25) {};
		\node [style=none] (3) at (1.5, 0) {};
		\node [style=none] (4) at (-1, 1) {\tiny{$M_{a|x}$}};
		\node [style=none] (5) at (-1, 0) {};
		\node [style=none] (7) at (1, 0) {};
		\node [style=none] (8) at (1, 2.25) {};
		\node [style=right label] (9) at (-1, 0.25) {$A$};
		\node [style=right label] (10) at (1, 0.25) {$B_{i}$};
		\node [style=right label] (11) at (1, 2) {$B$};
		\node [style=none] (12) at (-2, 0.75) {};
		\node [style=none] (13) at (0, 0.75) {};
		\node [style=none] (14) at (-1, 1.75) {};
		\node [style=none] (15) at (-1, 0.75) {};
		\node [style=none] (16) at (2.75, -0.5) {$\chi$};
		\node [style=none] (17) at (2, 0) {};
		\node [style=none] (18) at (3.5, 0) {};
		\node [style=none] (19) at (2.75, -1.25) {};
		\node [style=none] (20) at (0.5, 0.75) {};
		\node [style=none] (21) at (3.5, 0.75) {};
		\node [style=none] (22) at (3.5, 1.5) {};
		\node [style=none] (23) at (0.5, 1.5) {};
		\node [style=none] (24) at (1, 1.5) {};
		\node [style=none] (25) at (1, 0.75) {};
		\node [style=none] (26) at (2.75, 0) {};
		\node [style=none] (27) at (2.75, 0.75) {};
		\node [style=none] (28) at (2.75, 1.5) {};
		\node [style=upground] (29) at (2.75, 2.5) {};
		\node [style=right label] (30) at (2.75, 0.25) {$E$};
		\node [style=none] (31) at (2, 1) {\footnotesize{$U_{y}$}};
		\node [style=right label] (32) at (2.75, 1.75) {$E'$};
	\end{pgfonlayer}
	\begin{pgfonlayer}{edgelayer}
		\draw (1.center) to (2.center);
		\draw (2.center) to (3.center);
		\draw (3.center) to (1.center);
		\draw (14.center) to (12.center);
		\draw (12.center) to (13.center);
		\draw (13.center) to (14.center);
		\draw [qWire] (15.center) to (5.center);
		\draw (17.center) to (19.center);
		\draw (19.center) to (18.center);
		\draw (17.center) to (18.center);
		\draw (23.center) to (20.center);
		\draw (20.center) to (21.center);
		\draw (21.center) to (22.center);
		\draw (22.center) to (23.center);
		\draw [qWire] (8.center) to (24.center);
		\draw [qWire] (25.center) to (7.center);
		\draw [qWire] (29) to (28.center);
		\draw [qWire] (27.center) to (26.center);
	\end{pgfonlayer}
\end{tikzpicture}}.
\end{align*}
Let us now define the maps $\{F_y\}$ as follows:
\begin{align*}
\begin{tikzpicture}
	\begin{pgfonlayer}{nodelayer}
		\node [style=none] (29) at (1, 0.5) {};
		\node [style=none] (30) at (1, -0.5) {};
		\node [style=none] (31) at (-1, -0.5) {};
		\node [style=none] (32) at (-1, 0.5) {};
		\node [style=none] (35) at (0, 0.5) {};
		\node [style=none] (36) at (0, 1.5) {};
		\node [style=none] (42) at (0.5, -1.5) {};
		\node [style=none] (43) at (-0.5, -0.5) {};
		\node [style=none] (44) at (-0.5, -1.5) {};
		\node [style=none] (45) at (0.5, -0.5) {};
		\node [style=none] (48) at (0, 0) {$F_y$};
		\node [style=none] (56) at (0, 0.75) {};
		\node [style=right label] (57) at (0, 0.75) {$B$};
		\node [style=right label] (58) at (-0.5, -1.5) {$B$};
		\node [style=right label] (59) at (0.5, -1.5) {$E'$};
	\end{pgfonlayer}
	\begin{pgfonlayer}{edgelayer}
		\draw (32.center) to (29.center);
		\draw (29.center) to (30.center);
		\draw (30.center) to (31.center);
		\draw (31.center) to (32.center);
		\draw [qWire] (36.center) to (35.center);
		\draw [qWire] (45.center) to (42.center);
		\draw [qWire] (43.center) to (44.center);
	\end{pgfonlayer}
\end{tikzpicture}} \quad := \quad %
\begin{tikzpicture}
	\begin{pgfonlayer}{nodelayer}
		\node [style=none] (25) at (1.5, -0.25) {};
		\node [style=none] (26) at (1.5, -1.25) {};
		\node [style=none] (27) at (-1.5, -1.25) {};
		\node [style=none] (28) at (-1.5, -0.25) {};
		\node [style=none] (29) at (1.5, 1.25) {};
		\node [style=none] (30) at (1.5, 0.25) {};
		\node [style=none] (31) at (-1.5, 0.25) {};
		\node [style=none] (32) at (-1.5, 1.25) {};
		\node [style=none] (33) at (1, 1.25) {};
		\node [style=none] (34) at (1, 2) {};
		\node [style=none] (35) at (-1, 1.25) {};
		\node [style=none] (36) at (-1, 2.5) {};
		\node [style=upground] (37) at (1, 2.25) {};
		\node [style=none] (42) at (1, -0.25) {};
		\node [style=none] (43) at (-1, 0.25) {};
		\node [style=none] (44) at (-1, -0.25) {};
		\node [style=none] (45) at (1, 0.25) {};
		\node [style=none] (47) at (0, -0.75) {$U_1^{-1}$};
		\node [style=none] (48) at (0, 0.75) {$U_y$};
		\node [style=none] (49) at (1, -1.75) {};
		\node [style=none] (50) at (-1, -1.25) {};
		\node [style=none] (51) at (-1, -1.75) {};
		\node [style=none] (52) at (1, -1.25) {};
		\node [style=right label] (53) at (-1, -1.75) {$B$};
		\node [style=right label] (54) at (1, -1.75) {$E'$};
		\node [style=none] (56) at (-1, 1.5) {};
		\node [style=right label] (57) at (-1, 1.5) {$B$};
		\node [style=right label] (58) at (1, 1.5) {$E'$};
	\end{pgfonlayer}
	\begin{pgfonlayer}{edgelayer}
		\draw (28.center) to (25.center);
		\draw (25.center) to (26.center);
		\draw (26.center) to (27.center);
		\draw (27.center) to (28.center);
		\draw (32.center) to (29.center);
		\draw (29.center) to (30.center);
		\draw (30.center) to (31.center);
		\draw (31.center) to (32.center);
		\draw [qWire] (33.center) to (34.center);
		\draw [qWire] (36.center) to (35.center);
		\draw [qWire] (45.center) to (42.center);
		\draw [qWire] (43.center) to (44.center);
		\draw [qWire] (52.center) to (49.center);
		\draw [qWire] (50.center) to (51.center);
	\end{pgfonlayer}
\end{tikzpicture}}\,.
\end{align*}
Notice that each of these $F_y$ is a CPTP map acting on the joint system $BE'$. 

Let us now show the first part of the claim, i.e.~$\sigma_{a|xy} = F^y[\widetilde{\sigma}_{a|x1}] \quad \forall \, y>1$:
\begin{align*}
F_y[\tilde{\sigma}_{a|x1}]\quad &= \quad %
\begin{tikzpicture}
	\begin{pgfonlayer}{nodelayer}
		\node [style=none] (0) at (0, -0.5) {$\rho$};
		\node [style=none] (1) at (-1.5, 0) {};
		\node [style=none] (2) at (0, -1.25) {};
		\node [style=none] (3) at (1.5, 0) {};
		\node [style=none] (4) at (-1, 1) {\tiny{$M_{a|x}$}};
		\node [style=none] (5) at (-1, 0) {};
		\node [style=none] (6) at (1, 0.75) {};
		\node [style=none] (7) at (1, 0) {};
		\node [style=right label] (8) at (-1, 0.25) {$A$};
		\node [style=right label] (9) at (1, 0.25) {$B_{i}$};
		\node [style=none] (10) at (-2, 0.75) {};
		\node [style=none] (11) at (0, 0.75) {};
		\node [style=none] (12) at (-1, 1.75) {};
		\node [style=none] (13) at (-1, 0.75) {};
		\node [style=none] (14) at (3, 0.75) {};
		\node [style=none] (15) at (2, 0) {};
		\node [style=right label] (16) at (3, 0.25) {$E$};
		\node [style=none] (17) at (3, -1) {};
		\node [style=none] (18) at (3, -0.5) {$\chi$};
		\node [style=none] (19) at (3, 0) {};
		\node [style=none] (20) at (4, 0) {};
		\node [style=none] (21) at (0.5, 1.75) {};
		\node [style=none] (22) at (3.5, 1.75) {};
		\node [style=none] (23) at (3.5, 0.75) {};
		\node [style=none] (24) at (0.5, 0.75) {};
		\node [style=none] (25) at (3.5, 4) {};
		\node [style=none] (26) at (3.5, 3) {};
		\node [style=none] (27) at (0.5, 3) {};
		\node [style=none] (28) at (0.5, 4) {};
		\node [style=none] (29) at (3.5, 5.5) {};
		\node [style=none] (30) at (3.5, 4.5) {};
		\node [style=none] (31) at (0.5, 4.5) {};
		\node [style=none] (32) at (0.5, 5.5) {};
		\node [style=none] (33) at (3, 5.75) {};
		\node [style=none] (34) at (3, 5.5) {};
		\node [style=none] (35) at (1, 5.5) {};
		\node [style=none] (36) at (1, 6.5) {};
		\node [style=upground] (37) at (3, 6) {};
		\node [style=none] (38) at (3, 3) {};
		\node [style=none] (39) at (3, 1.75) {};
		\node [style=none] (40) at (1, 3) {};
		\node [style=none] (41) at (1, 1.75) {};
		\node [style=none] (42) at (3, 4) {};
		\node [style=none] (43) at (1, 4.5) {};
		\node [style=none] (44) at (1, 4) {};
		\node [style=none] (45) at (3, 4.5) {};
		\node [style=none] (46) at (2, 1.25) {$U_1$};
		\node [style=none] (47) at (2, 3.5) {$U_1^{-1}$};
		\node [style=none] (48) at (2, 5) {$U_y$};
		\node [style=none] (49) at (-2.5, 2.5) {};
		\node [style=none] (50) at (4.5, 2.5) {};
		\node [style=none] (51) at (4.5, -1.5) {};
		\node [style=none] (52) at (-1, -1.5) {};
		\node [style=none] (53) at (-2.5, -0.5) {};
	\end{pgfonlayer}
	\begin{pgfonlayer}{edgelayer}
		\draw (1.center) to (2.center);
		\draw (2.center) to (3.center);
		\draw (3.center) to (1.center);
		\draw [qWire] (6.center) to (7.center);
		\draw (12.center) to (10.center);
		\draw (10.center) to (11.center);
		\draw (11.center) to (12.center);
		\draw [qWire] (13.center) to (5.center);
		\draw (17.center) to (15.center);
		\draw (15.center) to (20.center);
		\draw (20.center) to (17.center);
		\draw [qWire] (19.center) to (14.center);
		\draw (21.center) to (22.center);
		\draw (22.center) to (23.center);
		\draw (23.center) to (24.center);
		\draw (24.center) to (21.center);
		\draw (28.center) to (25.center);
		\draw (25.center) to (26.center);
		\draw (26.center) to (27.center);
		\draw (27.center) to (28.center);
		\draw (32.center) to (29.center);
		\draw (29.center) to (30.center);
		\draw (30.center) to (31.center);
		\draw (31.center) to (32.center);
		\draw [qWire] (33.center) to (34.center);
		\draw [qWire] (36.center) to (35.center);
		\draw [qWire] (38.center) to (39.center);
		\draw [qWire] (40.center) to (41.center);
		\draw [qWire] (45.center) to (42.center);
		\draw [qWire] (43.center) to (44.center);
		\draw [thick gray dashed edge] (49.center) to (50.center);
		\draw [thick gray dashed edge] (50.center) to (51.center);
		\draw [thick gray dashed edge] (51.center) to (52.center);
		\draw [thick gray dashed edge] (52.center) to (53.center);
		\draw [thick gray dashed edge] (53.center) to (49.center);
	\end{pgfonlayer}
\end{tikzpicture}}
\quad =\quad %
\begin{tikzpicture}
	\begin{pgfonlayer}{nodelayer}
		\node [style=none] (0) at (0, -0.5) {$\rho$};
		\node [style=none] (1) at (-1.5, 0) {};
		\node [style=none] (2) at (0, -1.25) {};
		\node [style=none] (3) at (1.5, 0) {};
		\node [style=none] (4) at (-1, 1) {\tiny{$M_{a|x}$}};
		\node [style=none] (5) at (-1, 0) {};
		\node [style=none] (6) at (1, 0.75) {};
		\node [style=none] (7) at (1, 0) {};
		\node [style=right label] (8) at (-1, 0.25) {$A$};
		\node [style=right label] (9) at (1, 0.25) {$B_{i}$};
		\node [style=none] (10) at (-2, 0.75) {};
		\node [style=none] (11) at (0, 0.75) {};
		\node [style=none] (12) at (-1, 1.75) {};
		\node [style=none] (13) at (-1, 0.75) {};
		\node [style=none] (14) at (3, 0.75) {};
		\node [style=none] (15) at (2, 0) {};
		\node [style=right label] (16) at (3, 0.25) {$E$};
		\node [style=none] (17) at (3, -1) {};
		\node [style=none] (18) at (3, -0.5) {$\chi$};
		\node [style=none] (19) at (3, 0) {};
		\node [style=none] (20) at (4, 0) {};
		\node [style=none] (21) at (3.5, 1.75) {};
		\node [style=none] (22) at (3.5, 0.75) {};
		\node [style=none] (23) at (0.5, 0.75) {};
		\node [style=none] (24) at (0.5, 1.75) {};
		\node [style=none] (25) at (3, 2) {};
		\node [style=none] (26) at (3, 1.75) {};
		\node [style=none] (27) at (1, 1.75) {};
		\node [style=none] (28) at (1, 2.75) {};
		\node [style=upground] (29) at (3, 2.25) {};
		\node [style=none] (30) at (1, 0.75) {};
		\node [style=none] (31) at (3, 0.75) {};
		\node [style=none] (32) at (2, 1.25) {$U_y$};
	\end{pgfonlayer}
	\begin{pgfonlayer}{edgelayer}
		\draw (1.center) to (2.center);
		\draw (2.center) to (3.center);
		\draw (3.center) to (1.center);
		\draw [qWire] (6.center) to (7.center);
		\draw (12.center) to (10.center);
		\draw (10.center) to (11.center);
		\draw (11.center) to (12.center);
		\draw [qWire] (13.center) to (5.center);
		\draw (17.center) to (15.center);
		\draw (15.center) to (20.center);
		\draw (20.center) to (17.center);
		\draw [qWire] (19.center) to (14.center);
		\draw (24.center) to (21.center);
		\draw (21.center) to (22.center);
		\draw (22.center) to (23.center);
		\draw (23.center) to (24.center);
		\draw [qWire] (25.center) to (26.center);
		\draw [qWire] (28.center) to (27.center);
	\end{pgfonlayer}
\end{tikzpicture}} \\
\quad &=\quad %
\begin{tikzpicture}
	\begin{pgfonlayer}{nodelayer}
		\node [style=none] (0) at (0, -0.5) {$\rho$};
		\node [style=none] (1) at (-1.5, 0) {};
		\node [style=none] (2) at (0, -1.25) {};
		\node [style=none] (3) at (1.5, 0) {};
		\node [style=none] (4) at (-1, 1) {\tiny{$M_{a|x}$}};
		\node [style=none] (5) at (-1, 0) {};
		\node [style=small box] (6) at (1, 1.25) {$\mathcal{E}_{y}$};
		\node [style=none] (7) at (1, 0) {};
		\node [style=none] (8) at (1, 2.25) {};
		\node [style=right label] (9) at (-1, 0.25) {$A$};
		\node [style=right label] (10) at (1, 0.25) {$B_{i}$};
		\node [style=right label] (11) at (1, 2) {$B$};
		\node [style=none] (12) at (-2, 0.75) {};
		\node [style=none] (13) at (0, 0.75) {};
		\node [style=none] (14) at (-1, 1.75) {};
		\node [style=none] (15) at (-1, 0.75) {};
	\end{pgfonlayer}
	\begin{pgfonlayer}{edgelayer}
		\draw (1.center) to (2.center);
		\draw (2.center) to (3.center);
		\draw (3.center) to (1.center);
		\draw [qWire] (6) to (7.center);
		\draw [qWire] (8.center) to (6);
		\draw (14.center) to (12.center);
		\draw (12.center) to (13.center);
		\draw (13.center) to (14.center);
		\draw [qWire] (15.center) to (5.center);
	\end{pgfonlayer}
\end{tikzpicture}} \quad = \quad \sigma_{a|xy}\,.
\end{align*}
For the second part of the claim, notice that
\begin{align*}
\mathrm{tr}_{E'} \left\{ \widetilde{\sigma}_{a|x1} \right\} \quad = \quad %
\begin{tikzpicture}
	\begin{pgfonlayer}{nodelayer}
		\node [style=none] (0) at (0, -0.5) {$\rho$};
		\node [style=none] (1) at (-1.5, 0) {};
		\node [style=none] (2) at (0, -1.25) {};
		\node [style=none] (3) at (1.5, 0) {};
		\node [style=none] (4) at (-1, 1) {\tiny{$M_{a|x}$}};
		\node [style=none] (5) at (-1, 0) {};
		\node [style=none] (6) at (1, 0.75) {};
		\node [style=none] (7) at (1, 0) {};
		\node [style=right label] (8) at (-1, 0.25) {$A$};
		\node [style=right label] (9) at (1, 0.25) {$B_{i}$};
		\node [style=none] (10) at (-2, 0.75) {};
		\node [style=none] (11) at (0, 0.75) {};
		\node [style=none] (12) at (-1, 1.75) {};
		\node [style=none] (13) at (-1, 0.75) {};
		\node [style=none] (14) at (3, 0.75) {};
		\node [style=none] (15) at (2, 0) {};
		\node [style=right label] (16) at (3, 0.25) {$E$};
		\node [style=none] (17) at (3, -1) {};
		\node [style=none] (18) at (3, -0.25) {$\chi$};
		\node [style=none] (19) at (3, 0) {};
		\node [style=none] (20) at (4, 0) {};
		\node [style=none] (21) at (3.5, 1.75) {};
		\node [style=none] (22) at (3.5, 0.75) {};
		\node [style=none] (23) at (0.5, 0.75) {};
		\node [style=none] (24) at (0.5, 1.75) {};
		\node [style=none] (25) at (3, 2) {};
		\node [style=none] (26) at (3, 1.75) {};
		\node [style=none] (27) at (1, 1.75) {};
		\node [style=none] (28) at (1, 2.75) {};
		\node [style=upground] (29) at (3, 2.25) {};
		\node [style=none] (30) at (1, 0.75) {};
		\node [style=none] (31) at (3, 0.75) {};
		\node [style=none] (32) at (2, 1.25) {$U_1$};
		\node [style=right label] (33) at (1, 2.25) {$B$};
	\end{pgfonlayer}
	\begin{pgfonlayer}{edgelayer}
		\draw (1.center) to (2.center);
		\draw (2.center) to (3.center);
		\draw (3.center) to (1.center);
		\draw [qWire] (6.center) to (7.center);
		\draw (12.center) to (10.center);
		\draw (10.center) to (11.center);
		\draw (11.center) to (12.center);
		\draw [qWire] (13.center) to (5.center);
		\draw (17.center) to (15.center);
		\draw (15.center) to (20.center);
		\draw (20.center) to (17.center);
		\draw [qWire] (19.center) to (14.center);
		\draw (24.center) to (21.center);
		\draw (21.center) to (22.center);
		\draw (22.center) to (23.center);
		\draw (23.center) to (24.center);
		\draw [qWire] (25.center) to (26.center);
		\draw [qWire] (28.center) to (27.center);
	\end{pgfonlayer}
\end{tikzpicture}}
\quad =\quad %
\begin{tikzpicture}
	\begin{pgfonlayer}{nodelayer}
		\node [style=none] (0) at (0, -0.5) {$\rho$};
		\node [style=none] (1) at (-1.5, 0) {};
		\node [style=none] (2) at (0, -1.25) {};
		\node [style=none] (3) at (1.5, 0) {};
		\node [style=none] (4) at (-1, 1) {\tiny{$M_{a|x}$}};
		\node [style=none] (5) at (-1, 0) {};
		\node [style=small box] (6) at (1, 1.25) {$\mathcal{E}_{y}$};
		\node [style=none] (7) at (1, 0) {};
		\node [style=none] (8) at (1, 2.25) {};
		\node [style=right label] (9) at (-1, 0.25) {$A$};
		\node [style=right label] (10) at (1, 0.25) {$B_{i}$};
		\node [style=right label] (11) at (1, 2) {$B$};
		\node [style=none] (12) at (-2, 0.75) {};
		\node [style=none] (13) at (0, 0.75) {};
		\node [style=none] (14) at (-1, 1.75) {};
		\node [style=none] (15) at (-1, 0.75) {};
	\end{pgfonlayer}
	\begin{pgfonlayer}{edgelayer}
		\draw (1.center) to (2.center);
		\draw (2.center) to (3.center);
		\draw (3.center) to (1.center);
		\draw [qWire] (6) to (7.center);
		\draw [qWire] (8.center) to (6);
		\draw (14.center) to (12.center);
		\draw (12.center) to (13.center);
		\draw (13.center) to (14.center);
		\draw [qWire] (15.center) to (5.center);
	\end{pgfonlayer}
\end{tikzpicture}} \quad = \quad \sigma_{a|x1}\,,
\end{align*}
which concludes the proof of the theorem. 
\end{proof}
We can now prove Lemma \ref{theIDthing} as a corollary of Theorem \ref{thm:3b}:
\begin{proof}[Proof of Lemma \ref{theIDthing}]
From Theorem \ref{thm:3b} we know there exists a purification $\widetilde{\sigma}_{a|x1} $ of ${\sigma}_{a|x1} $ for each $(a,x)$, and CPTP maps $F_y$ for each $y>1$, such that $\sigma_{a|xy} = F^y[\widetilde{\sigma}_{a|x1}] \quad \forall \, y>1$. 

Since ${\sigma}_{a|x1} $ is pure for each $(a,x)$, then $\widetilde{\sigma}_{a|x1} = {\sigma}_{a|x1} \otimes \omega $, where $\omega$ is a fixed state of the auxiliary system $E$. 

Hence, $\sigma_{a|xy} = F^y[{\sigma}_{a|x1} \otimes \omega] = \mathcal{F}^y[{\sigma}_{a|x1}]$, where the operators $\mathcal{F}_y$ are CPTP since the $F_y$'s are. This concludes the proof of the Lemma.
\end{proof}

The alternative proof that the assemblage of Eq.~\eqref{e:example} is post-quantum is then the following: 
\begin{proof}
Now we will show that even though the assemblage $\{\sigma^*_{a|xy}\}$ is such that $\{\sigma^*_{a|x1}\}$ is a collection of pure quantum states, it does not comply with Lemma \ref{theIDthing}, and hence has no quantum realisation. For this, let $X,Y$ and $Z$ be the Pauli matrices, and notice that: {
\begin{align*}
\sigma^*_{0|10} - \sigma^*_{1|10} &= \tfrac{1}{2} X \,, \quad \sigma^*_{0|11} - \sigma^*_{1|11 }= \tfrac{1}{2} X \,, \\ 
\sigma^*_{0|20} - \sigma^*_{1|20} &= \tfrac{1}{2} Y \,, \quad \sigma^*_{0|21} - \sigma^*_{1|21} = - \tfrac{1}{2} Y \,, \\
\sigma^*_{0|30} - \sigma^*_{1|30} &= \tfrac{1}{2} Z \,,  \quad \sigma^*_{0|31} - \sigma^*_{1|31} = \tfrac{1}{2} Z \,. 
\end{align*}}
Should $\{\sigma^*_{a|xy}\}$ have a quantum realisation, then by Lemma \ref{theIDthing} there exists a CPTP map $\Lambda$ such that
$\Lambda[\id] = \id\,,\, \Lambda[X]=X\,,\, \Lambda[Z]=Z\,,\, \Lambda[Y]=-Y\,$. However, if the map $\Lambda$ is applied to one half of the maximally-entangled state $\rho = \tfrac{1}{4} \left( \id \otimes \id - X \otimes X + Y \otimes Y - Z \otimes Z \right)$ one gets a non-positive matrix, which shows that $\Lambda$ is actually not CPTP. This contradiction shows that $\{\sigma^*_{a|xy}\}$ is indeed a post-quantum assemblage. 
\end{proof}

\section{Steering inequalities and quantum bounds for the Instrumental steering scenario}

In this section we discuss how to use steering inequalities to certify the post-quantumness of assemblages in the Instrumental steering scenario. 

The method we construct relies fundamentally on the connection between quantum correlations in the Instrumental scenario, and the NPA-hierarchy of correlations in Bell scenarios \cite{pironio}. In a nutshell, a result in Ref.~\cite{pironio} says that ``a correlation $p(a|xy)$ in the Instrumental scenario admits a quantum realisation if it may arise as the post-selection $p(ab|xa)$ of a correlation $p(ab|xy)$ in a Bell scenario, which satisfies all the levels of the NPA hierarchy''. This, together with Def.~\ref{def:steeint}, make the following observation: 
\begin{remk}\label{rem:aqi}
For every quantum assemblage $\{\sigma_{a|x}\}$ in the Instrumental scenario, there exists a $\aQ$ assemblage $\{\sigma_{a|xy}\}$ in the Bob-with-Input scenario, such that $\sigma_{a|x} = \sigma_{a|xa}$. 
\end{remk}

A linear steering inequality in the Instrumental steering scenario is defined by the functional:
\begin{align}
\beta\left(\{\sigma_{a|x}\} \right) = \Tr \left\{ \sum_{a,x} F_{ax} \, \sigma_{a|x} \right\}\,,
\end{align}
where $\{F_{ax}\}$ is a set of Hermitian operators. 

Defining the quantum bound as: 
\begin{align}
\beta^{\mathrm{Q}} := \min_{\{\sigma_{a|x}\} \, \text{is quantum}} \Tr\left\{ \sum_{a,x} F_{ax} \, \sigma_{a|x}   \right\}\,,
\end{align}
a quantum-steering inequality to certify post-quantumness of assemblages in the Instrumental steering scenario reads: 
\begin{align}
\beta^{\mathrm{Q}} \leq \Tr \left\{ \sum_{a,x} F_{ax} \, \sigma_{a|x} \right\}\,.
\end{align}
Given that the set of quantum assemblages in the Instrumental steering scenario is not efficiently characterised, one may not always be able to compute the value of $\beta^{\mathrm{Q}}$. Hence, we present a method to lower-bound  its value, based on remark \ref{rem:aqi}. First, define the following quantity:
\begin{align}
\beta^{\aQ_I} := \min_{\{\sigma_{a|xy}\} \in \aQ} \Tr\left\{ \sum_{a,x} F_{ax} \, \sigma_{a|xa}   \right\}\,.
\end{align}
It follows that $\beta^{\aQ_I} \leq \beta^{\mathrm{Q}}$. Hence, if an assemblage gives a value for the steering functional which is lower than  $\beta^{\aQ_I}$, the assemblage is not quantum. 

\section{Proof of Theorems {\ref{theo:inst1} and \ref{thm:inst2}}: {post-quantum steering exists in the Instrumental steering scenario, and does not imply post-quantum instrumental correlations}}

Here we present the proof of {Theorem \ref{thm:inst2}}, since {Theorem \ref{theo:inst1}} is implied by it. Here we {recall} the theorem below with its proof. 
\\
\\
\textbf{{Theorem \ref{thm:inst2}}.} \textit{The following assemblage in the instrumental steering scenario has no quantum realisation, and may only yield quantum correlations in the traditional instrumental setup.}{
\begin{align*}
&\sigma^*_{0|1} = \tfrac{1}{2} (\ket{0}+\ket{1})(\bra{0}+\bra{1})\,, \\
&\sigma^*_{0|2} = \tfrac{1}{2} (\ket{0}- i \,  \ket{1})(\bra{0}+i \,  \bra{1})\,, \\
&\sigma^*_{0|3} = \tfrac{1}{2} \ket{0}\bra{0}\,, \quad  \sigma^*_{1|3} = \tfrac{1}{2} \left(\ket{1}\bra{1}\right)^{\mathrm{T}}\,,\\
&\sigma^*_{1|1} = \tfrac{1}{2} \left((\ket{0}-\ket{1})(\bra{0}-\bra{1})\right)^{\mathrm{T}}\,, \\
&\sigma^*_{1|2} = \tfrac{1}{2} \left((\ket{0}+i \,\ket{1})(\bra{0}-i \, \bra{1})\right)^{\mathrm{T}}\,.
\end{align*}}
\begin{proof}
First notice that the assemblage $\{\sigma^*_{a|x}\}$, where $x \in \{0,1,2\}$ and $a \in \{0,1\}$, may {be mathematically expressed as one where} Alice and Bob share a maximally entangled state of two qubits, Alice chooses between the three Pauli measurements to perform on her qubit, and Bob applies on his qubit the identity channel when $a=0$ and the transpose when $a=1$. Hence, this assemblage may be mathematically obtained from that in {Theorem \ref{thm:ex}} by setting $\sigma^*_{a|x} = \sigma^*_{a|xa}$. This shows that $\{\sigma^*_{a|x}\}$ is indeed a valid assemblage in the scenario, by {Def.~\ref{def:steeint}}. 

Let us now prove that the assemblage is post-quantum. Let us assume, for contradiction, that $\{\sigma^*_{a|x}\}$ has a quantum realisation. That is, assume there is a state $\rho$, POVMs $\{M_{a|x}\}$ for Alice, and CPTP maps $\{\mathcal{E}^a\}$ for Bob, such that $\sigma^*_{a|x} = \mathcal{E}^a \left[ \mathrm{tr}_{A}\left\{ M_{a|x} \otimes \id  \, \rho   \right\} \right]$. 
A self-testing argument, imposes that $\{M_{a|x}\}$ be Pauli measurements and $\rho$ a maximally entangled state, up to a local isometry. It hence follows that the assemblage $\{\sigma^*_{a|xy}\}$ of {Theorem \ref{thm:ex}} may be expressed as $\sigma^*_{a|xy} \equiv  \mathcal{E}^y \left[ \mathrm{tr}_{A}\left\{ M_{a|x} \otimes \id  \, \rho   \right\} \right]$. This is, however, impossible, since $\{\sigma^*_{a|xy}\}$ has no quantum realisation. This proves that a quantum model for $\{\sigma^*_{a|x}\}$ cannot exist. 

Finally, let us show that the correlations $p(ab|x) =  \Tr\left\{ N_b \sigma_{a|x} \right\}$ have a quantum realisation in the instrumental scenario for all POVM $\{N_b\}$. For a given choice of POVM $\{N_b\}$, consider the correlations $p(ab|xy)$ in a Bell scenario given by $p(ab|xy) = \Tr\left\{ N_b \, \sigma^*_{a|xy} \right\}$, where  $\{\sigma^*_{a|xy}\}$ is the assemblage of {Theorem \ref{thm:ex}}. On the one hand, notice that $p(ab|x) \equiv p(ab|xa)$. On the other hand, $\{\sigma^*_{a|xy}\}$ is a PTP assemblage, and by Theorem \ref{PTPquantum} the correlations $p(ab|xy)$ have a quantum realisation. This quantum model for $p(ab|xy)$ then gives a quantum model for $p(ab|x)$, and the claim follows. 
\end{proof}

\end{document}